
\documentclass[acmsmall,screen]{acmart}



\setcopyright{rightsretained}
\acmPrice{}
\acmDOI{10.1145/3498671}
\acmYear{2022}
\copyrightyear{2022}
\acmSubmissionID{popl22main-p48-p}
\acmJournal{PACMPL}
\acmVolume{6}
\acmNumber{POPL}
\acmArticle{10}
\acmMonth{1}


\bibliographystyle{ACM-Reference-Format}
\citestyle{acmauthoryear}   


\usepackage{booktabs}   
\usepackage{subcaption} 

\usepackage{xcolor}
\usepackage[inline]{enumitem}
\usepackage{pifont}
\usepackage[ruled, noline, boxed, commentsnumbered, longend]{algorithm2e}
\usepackage[skins]{tcolorbox}
\usepackage{tabu}
\usepackage{tikz}
\usetikzlibrary{shapes,shapes.misc,arrows.meta,automata,positioning,arrows,decorations.markings}
\usepackage[capitalize,nameinlink,noabbrev]{cleveref}
\usepackage{listings}
\usepackage{multirow}
\usepackage{amsmath}
\usepackage{microtype}


\newif\ifappendix
\appendixtrue

\ifappendix
\newcommand{\appref}[1]{the appendix}
\else
\newcommand{\appref}[1]{~\cite{#1}}
\fi

\crefformat{section}{\S#2#1#3}
\crefformat{subsection}{\S#2#1#3}
\crefformat{subsubsection}{\S#2#1#3}

\newcommand{\cons}{\mathit{cons}}
\newcommand{\cdr}{\mathit{tail}}
\newcommand{\car}{\mathit{head}}
\newcommand{\nil}{\mathit{nil}}

\newcommand{\many}[1]{\bar{#1}}
\newcommand{\arity}[1]{\mathsf{arity}(#1)}

\newcommand{\alphabet}{\Sigma}
\newcommand{\graph}{\mathsf{graph}}

\newcommand{\Lat}{Lat}
\newcommand{\Def}{\mathsf{Def}}
\newcommand{\Block}{\mathsf{Block}}

\newcommand{\MyAnd}{\mathsf{and}}
\newcommand{\Let}{\mathsf{let}}
\newcommand{\In}{\mathsf{in}}

\newcommand{\IfThenElse}[3]{\iteterm(#1,#2,#3)}
\newcommand{\ite}{if-then-else}
\newcommand{\iteterm}{\mathsf{ite}}

\newcommand{\EXP}{\mathsf{EXPTIME}}

\newcommand{\FO}{\mathsf{FO}}
\newcommand{\FOk}{\FO(k)}

\newcommand{\FOLFP}{\FO\text{-}\LFP}
\newcommand{\LFP}{\mathsf{LFP}}
\newcommand{\FOTERM}{\FO\text{-}\LFP\text{-}\mathsf{FUN}}

\newcommand{\FOTERMk}{\FOTERM(k,k')}
\newcommand{\SOk}{\SO(k,k')}
\newcommand{\Datalog}{\mathsf{Datalog}}
\newcommand{\Datalogk}{\mathsf{Datalog}(k,k')}

\newcommand{\FOLFPk}{\FOLFP(k,k')}
\newcommand{\lfp}{\mathsf{lfp}}

\newcommand{\sepsynth}[1]{#1{-}\emph{separator realizability and synthesis}}
\newcommand{\querysepsynth}[1]{#1{-}\emph{query realizability and synthesis}}
\newcommand{\termsynth}[1]{#1{-}\emph{term synthesis}}

\newcommand{\sepreal}[1]{#1{-}\emph{separator realizability}}
\newcommand{\querysepreal}[1]{#1{-}\emph{query realizability}}

\newcommand{\MSO}{\mathsf{MSO}}
\newcommand{\SO}{\mathsf{SO}}

\newcommand{\grammar}{G}
\newcommand{\dom}{\mathsf{dom}}
\newcommand{\la}{\langle}
\newcommand{\ra}{\rangle}
\newcommand{\Nat}{\mathbb{N}}
\newcommand{\Bb}{\mathcal{B}}
\newcommand{\Oo}{\mathcal{O}}
\newcommand{\Pos}{\mathit{Pos}}
\newcommand{\Neg}{\mathit{Neg}}

\newlength\mylen
\newcommand\myinput[1]{%
  \settowidth\mylen{\KwIn{}}%
  \setlength\hangindent{\mylen}%
  \hspace*{\mylen}#1\\}
\newcommand\myoutput[1]{%
  \settowidth\mylen{\KwOut{}}%
  \setlength\hangindent{\mylen}%
  \hspace*{\mylen}#1\\}

\newcommand{\andsymb}{\wedge}
\newcommand{\orsymb}{\vee}
\newcommand{\negsymb}{\neg}
\newcommand{\allsymb}[1]{\forall{#1}}
\newcommand{\existsymb}[1]{\exists{#1}}
\newcommand{\update}[3]{#1[#2\mapsto#3]}
\newcommand{\tru}{\mathsf{True}}
\newcommand{\fals}{\mathsf{False}}
\newcommand{\aut}{\mathcal{A}}
\newcommand{\treepos}{\mathit{Nodes}}
\newcommand{\tree}{\rho}

\definecolor{zgray}{RGB}{63,63,63}
\definecolor{zyellow}{RGB}{240,223,175}
\definecolor{zgreen}{RGB}{143,175,159}
\definecolor{zcream}{RGB}{220,220,204}
\definecolor{zpink}{RGB}{220,163,163}
\definecolor{zblue}{RGB}{140,208,211}
\definecolor{assassinblue}{RGB}{110,150,210}
\definecolor{assassinred}{RGB}{192,106,111}

\newcommand{\problabel}[1]{\label{prob:#1}}
\newcommand{\probref}[1]{Problem~\ref{prob:#1}}

\newcommand{\production}[1]{#1}
\newcommand{\produces}{\Coloneqq}
\newcommand{\conj}[2]{#1\,\wedge\,#2}
\newcommand{\ifthen}[2]{\mathsf{if}\,#1\,\mathsf{then}\,#2}
\newcommand{\iif}[1]{\mathsf{if}\,#1}
\newcommand{\then}[1]{\mathsf{then}\,#1}
\newcommand{\eelse}[1]{\mathsf{else}\,#1}
\newcommand{\elifthen}[2]{\mathsf{elif}\,#1\,\mathsf{then}\,#2}

\theoremstyle{plain}
\newtheorem{theorem}{Theorem}
\theoremstyle{definition}
\newtheorem*{claim*}{Claim}

\newtheorem{lemma}[theorem]{Lemma}


\newcommand{\tableref}[1]{Table~\ref{tab:#1}}
\newcommand{\figlabel}[1]{\label{fig:#1}}
\newcommand{\figref}[1]{Figure~\ref{fig:#1}}
\newcommand{\seclabel}[1]{\label{sec:#1}}
\newcommand{\secref}[1]{Section~\ref{sec:#1}}
\newcommand{\csecref}[1]{\cref{sec:#1}}

\newcommand{\thmlabel}[1]{\label{thm:#1}}
\newcommand{\thmref}[1]{Theorem~\ref{thm:#1}}

\newcommand{\lemlabel}[1]{\label{lem:#1}}
\newcommand{\lemref}[1]{Lemma~\ref{lem:#1}}

\newcommand{\equlabel}[1]{\label{eq:#1}}
\newcommand{\equref}[1]{Equation~(\ref{eq:#1})}
\newcommand{\applabel}[1]{\label{app:#1}}


\newcommand{\Ll}{\mathcal{L}}


\renewcommand{\emptyset}{\varnothing}

\newcommand{\xdownarrow}[1]{%
  {\left\downarrow\vbox to #1{}\right.\kern-\nulldelimiterspace}
}







\newcommand{\dblqt}[1]{\text{``}#1\text{''}}

\begin{document}

\title[Learning Formulas in Finite Variable Logics]{Learning Formulas in Finite Variable Logics}

\author{Paul Krogmeier}
\email{paulmk2@illinois.edu} 
\affiliation{ \department{Department of Computer
    Science} 
  \institution{University of Illinois, Urbana-Champaign} 
  \country{USA} 
}

\author{P. Madhusudan}
\email{madhu@illinois.edu} 
\affiliation{ \department{Department of Computer
    Science} 
  \institution{University of Illinois,
    Urbana-Champaign} 
  \country{USA} 
}


\begin{abstract}
  We consider grammar-restricted exact learning of formulas and terms
  in finite variable logics. We propose a novel and versatile
  automata-theoretic technique for solving such problems. We first
  show results for learning formulas that classify a set of
  positively- and negatively-labeled structures. We give algorithms
  for realizability and synthesis of such formulas along with upper
  and lower bounds. We also establish positive results using our
  technique for other logics and variants of the learning problem,
  including first-order logic with least fixed point definitions,
  higher-order logics, and synthesis of queries and terms with
  recursively-defined functions.
\end{abstract}

\begin{CCSXML}
<ccs2012>
<concept>
<concept_id>10003752.10003766.10003772</concept_id>
<concept_desc>Theory of computation~Tree languages</concept_desc>
<concept_significance>500</concept_significance>
</concept>
<concept>
<concept_id>10010147.10010257.10010293</concept_id>
<concept_desc>Computing methodologies~Machine learning approaches</concept_desc>
<concept_significance>500</concept_significance>
</concept>
</ccs2012>
\end{CCSXML}

\ccsdesc[500]{Theory of computation~Tree languages}
\ccsdesc[500]{Computing methodologies~Machine learning approaches}

\keywords{exact learning, learning formulas, tree automata, version
  space algebra, program synthesis, interpretable
  learning} 

\maketitle
\renewcommand{\shortauthors}{P. Krogmeier, P. Madhusudan}

\section{Introduction}\seclabel{intro}
Learning symbolically representable concepts from data is an important
emerging area of research. Symbolic expressions, such as logical
formulas or programs, can be easily analyzed and interpreted, which
aids downstream applications (e.g., analyzing a large system that has
a classifier as a component) and makes them easier to communicate to
both humans and computers.

In this paper, we embark on a foundational study of \emph{exact}
learning of logical formulas. For a logic $\Ll$, we study the
\emph{separability} problem: given a set of positively- and
negatively-labeled finite structures, we want to learn a sentence
$\varphi$ in $\Ll$ that is true on the positive structures and false
on the negative structures.  Separability consists of two related
problems. First, the \emph{realizability problem}: a decision problem
that asks whether such a sentence exists, and second, the
\emph{synthesis problem}, which asks to construct a sentence if one
exists. For logics that contain \emph{infinitely many} semantically
inequivalent formulas, including most of the logics considered in this
paper, the realizability problem itself is not trivial.

In a learning context, one is often interested in how well a learned
artifact generalizes to unseen inputs. In practice, most learning
algorithms typically attempt only to minimize loss in accuracy on a
set of training samples~\cite{mitchell97}. Exact learning asks for a
perfect classifier with respect to the training samples. Two common
strategies to mitigate \emph{overfitting} are (1) to only consider
classifiers from a restricted hypothesis class ${\mathcal H}$ and (2)
to prefer \emph{simple} concepts over complex ones. The problems we
study here reflect (1) by considering exact learning with respect to
\emph{grammars}. Problem instances are equipped with a grammar
$\grammar$ that defines a subset $L(\grammar)$ of logical expressions
in $\Ll$ to which classifiers must belong. The problems reflect (2) by
requiring a synthesizer to construct small (perhaps the smallest)
formulas that separate sample structures.

We describe a very general technique for solving the realizability and
synthesis problems for \emph{several} logics with finitely many
variables. In particular, a main contribution of our work is to solve
realizability and synthesis for $\FOk$, a version of first-order logic
with $k$ variables. This logic allows for an arbitrary number and
nesting depth of quantifiers. Although the number of variables is
bounded, it is possible to \emph{reuse} variables. For instance,
consider finite graphs. Given two constants $s$ and $t$ and any
$n \in \mathbb{N}$, the property that \emph{$t$ is reachable from $s$
  using at most $n$ edges} is expressible using just two variables,
and thus $\FOk$ with $k=2$ contains an infinite set of inequivalent
formulas.

We prove that for every $k \in \mathbb{N}$, the realizability problem
for $\FOk$ over a grammar is decidable. That is, given a (tree)
grammar $\grammar$ defining a subclass of $\FOk$ and sets of
positively- and negatively-labeled structures $\mathit{Pos}$ and
$\mathit{Neg}$, it is decidable to check whether there is a sentence
$\varphi\in L(\grammar)$ that is true on all structures in
$\mathit{Pos}$ and false on all structures in $\mathit{Neg}$. We give
an algorithm to synthesize such a sentence if one exists.
Notice that since structures are finite, the sentence can also be converted to a
program operating over structures that realizes the classifier
effectively.

\smallskip

\noindent \emph{Automata over Parse Trees for Realizability and Synthesis.}  Our
primary technique for solving exact learning problems of this kind is
based on automata over finite trees. Intuitively, given finite
(disjoint) sets of positive and negative structures, we need to search
through an infinite set of formulas that adhere to the grammar
$\grammar$ in order to find a separating formula. We use tree automata
working over \emph{formula parse trees} and show that the set of all
separating formulas for $\mathit{Pos}$ and $\mathit{Neg}$ that adhere
to $\grammar$ forms a regular class of trees. Building the tree
automaton and checking emptiness gives us a decision procedure for
realizability. Algorithms for tree automaton emptiness are used to
solve the synthesis problem. Furthermore, the algorithms can be
adapted to find the \emph{smallest} trees that are accepted by the
tree automaton, hence giving us small formulas as separators.

The key idea is to show that, given a single structure $A$, the set of
parse trees for \emph{all} sentences adhering to the grammar
$\grammar$ that are true (or false) in $A$ is a regular language. Given
$A$, we can define an automaton that interprets an input formula
(rather, its parse tree) on $A$ and checks whether $A$ satisfies the
formula. This evaluation follows the usual semantics of the logic in
question, which is typically defined \emph{recursively}, and hence can
be evaluated bottom-up in the structure of the formula. In general,
this requires simulating the semantics of formulas for \emph{each}
assignment to free variables over the structure $A$. The bounded
variable restriction is therefore crucial to ensure that the automaton
only needs an amount of state that depends on the size of the
structure but is \emph{independent} of the size of the formula.

We can then construct an automaton that captures precisely the set of
separating formulas for the given labeled structures. We do this by
constructing automata for (a) the set of all formulas that are true on
positive structures, (b) the set of all formulas that are false on
negative structures, and (c) the intersection of the automata from (a)
and (b). We further intersect (c) with an automaton that accepts
formulas allowed by the grammar $\grammar$. Checking emptiness of this
final automaton solves realizability, and we can construct an accepted
formula if nonempty.

\smallskip

\noindent
\emph{Query and Term Synthesis.} We study two related learning
problems in addition to the separability problem. We study
\emph{query}\, synthesis, where we are given a grammar $\grammar$ and a
finite set of structures, each accompanied by an \emph{answer set}\, of
$r$-tuples from the domain of the structure. The query synthesis
problem is to find a query, namely, a first-order formula
$\varphi\in L(\grammar)$ with $r$ free variables, such that the sets
of tuples that satisfy $\varphi$ in each structure are precisely the
given answer sets. We also study the problem of \emph{term synthesis},
which is closer in spirit to program synthesis from input-output
examples (using logic as a programming language). In this problem we
are given a set of input structures and a grammar $\grammar$, and each
structure interprets a set of constants, e.g.,
$\mathit{in_1},\ldots, \mathit{in_d}$ and $\mathit{out}$, as a
particular input-output example. The term synthesis problem is to find
a closed first-order \emph{term} $t$ such that $t$ evaluates to
$\mathit{out}$ in each structure. The grammar can ensure that
$\mathit{out}$ is not used in $t$. Again, we note that such
first-order queries and terms can also be realized effectively as
programs that operate over structures: a program for a query
instantiates the free variables with all $r$-tuples, evaluates the
formula, and returns those that satisfy it. Similarly, a term can be
converted to a program that recursively evaluates its subterms and
returns an element of the structure. We give adaptations of the
automata-theoretic technique to solve both the query and term
synthesis problems for the logic $\FOk$.

\smallskip

\noindent
\emph{Learning Algorithms for First-Order Logic with Least Fixed
  Points.} A second contribution of this work is showing that a
bounded variable version of first-order logic with least fixed points
also has decidable separator realizability and synthesis. Least fixed
point definitions add the power of recursion to first-order logic,
resulting in a more expressive logic that, for instance, can describe
transitive closure of relations and the semantics of languages like
$\Datalog$.  Furthermore, over finite ordered structures, first-order
logic with least fixed points captures the class $\mathbb{P}$ of all
functions computable in polynomial time. Consequently, learned
formulas in this logic can be realized as polynomial-time programs.

In this case we use \emph{two-way alternating tree automata} to
succinctly encode the semantics of expressions with least fixed point
definitions over a given structure. Intuitively, we need to use
recursion to evaluate a recursively-defined relation, and in each
recursive call we need to \emph{read} the definition of the relation
once more. This capability is elegantly provided by two-way automata,
and alternation gives a way to compositionally send copies of the
automaton to check various sub-formulas effectively. Two-way
alternating automata can be converted to one-way nondeterministic
automata (with an exponential increase in states), and emptiness
checking for the resulting automata gives the algorithm we seek for
separator realizability and synthesis. We also solve the term
synthesis problem for a logic with least fixed points, a problem which
resembles functional program synthesis.

\smallskip
\noindent \emph{Further Results.} A remarkable aspect of the automata-theoretic
approach is that it provides algorithms for synthesis in many
settings.
The constructions smoothly extend to virtually any logic
where the bounded variable restriction yields a formula evaluation
strategy with a memory requirement that is independent of the size of
the formula. In~\cref{sec:further-results} we discuss decidable
realizability and synthesis results for other settings where related
problems have been studied, including languages with mutual recursion,
e.g., $\Datalog$~\cite{aws-synth-datalog,evans-greffen-noisy} and
inductive logic programming~\cite{ilp,ilp-turning30,mil-muggleton}. We
also discuss how the technique extends to higher-order logics over
finite models.

\smallskip
\noindent
\emph{Complexity.} For each logic we consider, we present the upper
bound that the automaton construction yields in terms of various
parameters: the number of variables $k$, the maximum size of each
structure, the number of structures, and the size of the grammar
$\grammar$. A sample of upper bounds is given in~\tableref{results}
in~\csecref{problems}. We also prove lower bounds for the logic
$\FOk$, arguing that the complexity of the upper bounds on certain
parameters is indeed tight. In particular, we show that for fixed $k$,
separator realizability is $\EXP$-hard. This matches the upper bound,
and also proves matching lower bounds for more expressive logics.

In summary, our work provides an extremely general tree
automata-theoretic technique that yields effective solutions for the
problems of learning separators/queries/terms for several
finite-variable logics. Our contributions here are theoretical. We
establish decidability for several exact learning problems over
different logics, and we give algorithms based on tree automata and
some matching lower bounds. We believe and hope that this work will
inform the design of practical algorithms for these problems. In
particular, our automata constructions and the ``bottom up'' fixed
point procedure for checking automaton emptiness gives a design
framework for such algorithms. Due to the relatively high worst-case
complexity of synthesis, practical algorithms will need to adapt to
application domains and cater to restricted logics and languages that
admit more efficient synthesis, potentially using heuristics,
space-efficient data structures, and fast search (e.g., BDDs, SAT
solvers, etc.)  (see~\cite{gr1,Wang2017,WangWangDilligOOPSLA18} for
examples).


\ifappendix
Details for proofs and automata constructions can be found
in the appendix.
\else
Details for proofs and automata constructions can be found
in the extended version~\cite{fullversion}.
\fi

\section{Examples}
\seclabel{examples}

We begin with some examples to illustrate instances of the exact
learning problems considered here. The first two examples explore the
separability problem for first-order logic and the subtlety around
reusing variables in bounded variable logics. The third example
illustrates learning formulas with least fixed point definitions, and
the fourth example illustrates term synthesis.

\subsection{Example 1: Learning Formulas, Significance of Grammar, and
  Unrealizability}
\label{sec:example-1}
Consider the problem of finding a separating first-order sentence for
the structures depicted in~\figref{example1} using a vocabulary that
includes the binary edge relation $E$ and constants $s$ and $t$.

\vspace{-0.01in}

\tikzstyle{every node}=[circle, text=zgray, draw, inner sep=1.3pt, minimum width=1pt]
\tikzset{every loop/.style={in=270,out=180,looseness=5}}
\begin{figure}[H]
  \centering
  \begin{tabular}[t]{p{0.19\textwidth}>{\centering}p{0.19\textwidth}>{\centering}p{0.01\textwidth}>{\centering}p{0.19\textwidth}>{\centering}p{0.19\textwidth}}
  \scalebox{0.9}{
    \centering
    \begin{tikzpicture}[thick,scale=0.8]
      \node[shape=circle,draw=zgray,fill=zgray] (0) at (0,0) {} ;
      \node[shape=circle,draw=zgray,fill=zgray] (1) at (1,0) {} ;
      \node[shape=circle,draw=zgray,fill=zgray] (2) at (2,0) {} ;
      \node[shape=circle,draw=zgray,fill=zgray] (3) at (0,1) {} ;
      \node[shape=circle,draw=zgray,fill=zgray] (4) at (1,1) {} ;
      \node[shape=circle,draw=zgray,fill=zgray] (5) at (2,1) {} ;
      \node[shape=circle,draw=zgray,fill=zgray,label=below:$s$] (6) at (1,-1) {} ;
      \node[shape=circle,draw=zgray,fill=zgray,label=above:$t$] (7) at (1,2) {} ;
      \node[scale=1.3,draw=none] at (-0.5,0.5) {\texttt{+}};

      \draw[fill=zgray] [-] (0) edge[zgray] (3) ;
      \draw[fill=zgray] [-] (1) edge[zgray] (3) ;
      \draw[fill=zgray] [-] (6) edge[zgray] (0) ;
      \draw[fill=zgray] [-] (6) edge[zgray] (1) ;
      \draw[fill=zgray] [-] (3) edge[zgray] (7) ;
      \draw[fill=zgray] [-] (4) edge[zgray] (7) ;
      \draw[fill=zgray] [-] (5) edge[zgray] (7) ;
    \end{tikzpicture}
    \vspace{0.1in}
  } &
    \scalebox{0.9}{
    \centering
    \begin{tikzpicture}[thick,scale=0.8]
      \node[shape=circle,draw=zgray,fill=zgray] (0) at (0,0) {} ;
      \node[shape=circle,draw=zgray,fill=zgray] (1) at (1,0) {} ;
      \node[shape=circle,draw=zgray,fill=zgray] (2) at (2,0) {} ;
      \node[shape=circle,draw=zgray,fill=zgray] (3) at (0,1) {} ;
      \node[shape=circle,draw=zgray,fill=zgray] (4) at (1,1) {} ;
      \node[shape=circle,draw=zgray,fill=zgray] (5) at (2,1) {} ;
      \node[shape=circle,draw=zgray,fill=zgray,label=below:$s$] (6) at (1,-1) {} ;
      \node[shape=circle,draw=zgray,fill=zgray] (8) at (0,-1) {} ;
      \node[shape=circle,draw=zgray,fill=zgray] (9) at (2,-1) {} ;
      \node[shape=circle,draw=zgray,fill=zgray,label=above:$t$] (7) at (1,2) {} ;
      \node[scale=1.3,draw=none] at (-0.5,0.5) {\texttt{+}};

      \draw[fill=zgray] [-] (0) edge[zgray] (3) ;
      \draw[fill=zgray] [-] (1) edge[zgray] (4) ;
      \draw[fill=zgray] [-] (2) edge[zgray] (5) ;
      \draw[fill=zgray] [-] (3) edge[zgray] (7) ;
      \draw[fill=zgray] [-] (4) edge[zgray] (7) ;
      \draw[fill=zgray] [-] (5) edge[zgray] (7) ;
      \draw[fill=zgray] [-] (8) edge[zgray] (4) ;
      \draw[fill=zgray] [-] (9) edge[zgray] (4) ;
      \draw[fill=zgray] [-] (8) edge[zgray] (0) ;
      \draw[fill=zgray] [-] (9) edge[zgray] (2) ;
    \end{tikzpicture}
    \vspace{0.1in}
      } & &
    \scalebox{0.9}{
    \centering
    \begin{tikzpicture}[thick,scale=0.8]
      \node[shape=circle,draw=zgray,fill=zgray] (0) at (0,0) {} ;
      \node[shape=circle,draw=zgray,fill=zgray] (1) at (1,0) {} ;
      \node[shape=circle,draw=zgray,fill=zgray] (2) at (2,0) {} ;
      \node[shape=circle,draw=zgray,fill=zgray] (3) at (0,1) {} ;
      \node[shape=circle,draw=zgray,fill=zgray] (4) at (1,1) {} ;
      \node[shape=circle,draw=zgray,fill=zgray] (5) at (2,1) {} ;
      \node[shape=circle,draw=zgray,fill=zgray,label=below:$s$] (6) at (1,-1) {} ;
      \node[shape=circle,draw=zgray,fill=zgray,label=above:$t$] (7) at (1,2) {} ;
      \node[scale=1.3,draw=none] at (-0.5,0.5) {\texttt{-}};

      \draw[fill=zgray] [-] (0) edge[zgray] (3) ;
      \draw[fill=zgray] [-] (1) edge[zgray] (3) ;
      \draw[fill=zgray] [-] (6) edge[zgray] (0) ;
      \draw[fill=zgray] [-] (6) edge[zgray] (1) ;
      \draw[fill=zgray] [-] (6) edge[zgray] (2) ;
      \draw[fill=zgray] [-] (3) edge[zgray] (7) ;
      \draw[fill=zgray] [-] (4) edge[zgray] (7) ;
      \draw[fill=zgray] [-] (5) edge[zgray] (7) ;
    \end{tikzpicture}
    \vspace{0.1in}
      }
   &
  \scalebox{0.9}{
    \centering
    \begin{tikzpicture}[thick,scale=0.8]
      \node[shape=circle,draw=zgray,fill=zgray] (0) at (0,0) {} ;
      \node[shape=circle,draw=zgray,fill=zgray] (1) at (1,0) {} ;
      \node[shape=circle,draw=zgray,fill=zgray] (2) at (2,0) {} ;
      \node[shape=circle,draw=zgray,fill=zgray] (3) at (0,1) {} ;
      \node[shape=circle,draw=zgray,fill=zgray] (4) at (1,1) {} ;
      \node[shape=circle,draw=zgray,fill=zgray] (5) at (2,1) {} ;
      \node[shape=circle,draw=zgray,fill=zgray,label=below:$s$] (6) at (1,-1) {} ;
      \node[shape=circle,draw=zgray,fill=zgray,label=above:$t$] (7) at (1,2) {} ;
      \node[scale=1.3,draw=none] at (-0.5,0.5) {\texttt{-}};

      \draw[fill=zgray] [-] (0) edge[zgray] (3) ;
      \draw[fill=zgray] [-] (1) edge[zgray] (3) ;
      \draw[fill=zgray] [-] (2) edge[zgray] (5) ;
      \draw[fill=zgray] [-] (6) edge[zgray] (0) ;
      \draw[fill=zgray] [-] (6) edge[zgray] (1) ;
      \draw[fill=zgray] [-] (6) edge[zgray] (2) ;
      \draw[fill=zgray] [-] (3) edge[zgray] (4) ;
      \draw[fill=zgray] [-] (4) edge[zgray] (7) ;
      \draw[fill=zgray] [-] (5) edge[zgray] (7) ;
    \end{tikzpicture}
    \vspace{0.1in}
      }
  \end{tabular}
  \vspace{-0.1in}
  \caption{Find a sentence in first-order logic that is true for \texttt{+}
    structures and false for \texttt{-} structures.
    }
  \figlabel{example1}
\end{figure}

\vspace{-0.1in}
One possible solution asserts that \emph{every node that is adjacent
  to $s$ is adjacent to a node that is adjacent to $t$}. In
first-order logic:
\begin{align*}
  \forall x.\, \left(E(s,x)\rightarrow\exists y.\, \left(E(x,y)\wedge E(y,t)\right)\right)
\end{align*}

If the grammar $\grammar$ allowed, say, all first-order logic formulas
with two variables, the formula above would indeed be a solution. If
instead the grammar allowed only conjunction as a Boolean connective,
then the formula above is not a separator. If the grammar allowed only
one variable, or allowed two variables but disallowed universal
quantification, then there is no separator.

In fact, for a grammar that only allows conjunction as Boolean
connective, there is \emph{no} separator for the structures
above. \emph{Proof gist.  }For a contradiction: suppose there is a
separator $\varphi$ that does not use negation or disjunction. Then
$\varphi$ has a positive matrix (inner formula has no negations), and
its standard conversion to an equivalent formula in prenex form,
$\mathit{prenex}(\varphi)$, still has a positive matrix and will be a
separator (though it may have many more variables).  Since
$\mathit{prenex}(\varphi)$ has a positive matrix, any graph that
satisfies it will continue to satisfy it if we add any number of
edges. Since the leftmost negative structure adds a single edge to the
leftmost positive structure, we have a contradiction. Thus there is no
separator.

Note that while algorithms can search for separators in $\grammar$,
the problem of declaring that there is \emph{no} separator is a
nontrivial problem, especially for arbitrary grammars. The algorithms
we seek will terminate and declare unrealizability when separators
do not exist (as in the above example).

\subsection{Example 2: Reuse of Variables and Infinite Semantic Concept Space}
\label{sec:example-2}
Consider the problem of finding a separator that uses only \emph{three
  variables} for the labeled structures in~\figref{example2}. One
possible solution is $\bigvee_{i=1}^7 \mathit{path}_i(s,t)$, where
$\mathit{path}_i(x,y)$ holds for elements $x,y$ if there is a
(directed) $E$-path of length $i$ from $x$ to $y$. Note that, by
\emph{reusing variables}, the formula $\mathit{path}_i(x,y)$ can be
defined using only $3$ variables for any $i\in\Nat$:
\begin{align*}
  \mathit{path}_1(x,y)\,\, &\leftrightarrow \,\, E(x,y) \\
  \mathit{path}_{i+1}(x,y)\,\, &\leftrightarrow \,\, \exists z.\, (E(x,z)\wedge
                       \exists x.\, (x=z\wedge \mathit{path}_{i}(x,y))) \quad i>0
\end{align*}
This example shows that even when we restrict the number of variables,
there is an infinite number of logically inequivalent sentences in
first-order logic (e.g., $\textit{path}_i$ for any $i>0$). But note
that this is not true if we bound the number of quantifiers or bound
the depth of quantifiers. This infinite semantic space of concepts is
what makes declaring unrealizability a nontrivial problem.  As we will
see, our technique works for finite variable logics in general,
despite the fact that they admit infinitely-many inequivalent
formulas.

\vspace{-0.1in}
\tikzstyle{every node}=[circle, text=zgray, draw, inner sep=1.3pt, minimum width=1pt]
\tikzset{every loop/.style={in=270,out=180,looseness=5}}
\begin{figure}[H]
  \centering
  \begin{tabular}[t]{p{0.18\textwidth}>{\centering}p{0.18\textwidth}>{\centering}p{0.01\textwidth}>{\centering}p{0.18\textwidth}>{\centering}p{0.18\textwidth}}
  \scalebox{0.9}{
    \centering
    \begin{tikzpicture}[thick,scale=0.8]
      \node[shape=circle,draw=zgray,fill=zgray] (0) at (0,1) {} ;
      \node[shape=circle,draw=zgray,fill=zgray,label=above:$t$] (1) at (0.27,1.669) {} ;
      \node[shape=circle,draw=zgray,fill=zgray,label=above:$s$] (3) at (1.745,1.658) {} ;
      \node[shape=circle,draw=zgray,fill=zgray] (4) at (2,1) {} ;
      \node[shape=circle,draw=zgray,fill=zgray] (5) at (1.78,0.39) {} ;
      \node[shape=circle,draw=zgray,fill=zgray] (6) at (1,0) {} ;
      \node[shape=circle,draw=zgray,fill=zgray] (7) at (0.265,0.337) {} ;
      \node[scale=1.3,draw=none] at (1,-0.5) {\texttt{+}};

      \draw[fill=zgray] [-to] (0) edge[zgray, bend left=10] (1) ;
      \draw[fill=zgray] [-to] (3) edge[zgray, bend left=10] (4) ;
      \draw[fill=zgray] [-to] (4) edge[zgray, bend left=10] (5) ;
      \draw[fill=zgray] [-to] (5) edge[zgray, bend left=10] (6) ;
      \draw[fill=zgray] [-to] (6) edge[zgray, bend left=10] (7) ;
      \draw[fill=zgray] [-to] (7) edge[zgray, bend left=10] (0) ;
    \end{tikzpicture}
    \vspace{0.1in}
  } &
    \scalebox{0.9}{
    \centering
    \begin{tikzpicture}[thick,scale=0.8]
      \node[shape=circle,draw=zgray,fill=zgray,label=left:$s$] (0) at (0,1) {} ;
      \node[shape=circle,draw=zgray,fill=zgray] (2) at (1,2) {} ;
      \node[shape=circle,draw=zgray,fill=zgray,label=right:$t$] (4) at (2,1) {} ;
      \node[shape=circle,draw=zgray,fill=zgray] (6) at (1,0) {} ;
      \node[scale=1.3,draw=none] at (1,-0.5) {\texttt{+}};

      \draw[fill=zgray] [-to] (0) edge[zgray, bend left=25] (2) ;
      \draw[fill=zgray] [-to] (2) edge[zgray, bend left=25] (4) ;
      \draw[fill=zgray] [-to] (4) edge[zgray, bend left=25] (6) ;
      \draw[fill=zgray] [-to] (6) edge[zgray, bend left=25] (0) ;
    \end{tikzpicture}
    \vspace{0.1in}
      } & &
    \scalebox{0.9}{
    \centering
    \begin{tikzpicture}[thick,scale=0.8]
      \node[shape=circle,draw=zgray,fill=zgray] (0) at (0,1) {} ;
      \node[shape=circle,draw=zgray,fill=zgray] (2) at (1,2) {} ;
      \node[shape=circle,draw=zgray,fill=zgray] (4) at (2,1) {} ;
      \node[shape=circle,draw=zgray,fill=zgray,label=left:$s$] (8) at (0,0) {} ;
      \node[shape=circle,draw=zgray,fill=zgray,label=right:$t$] (10) at (2,0) {} ;
      \node[scale=1.3,draw=none] at (1,-0.5) {\texttt{-}};

      \draw[fill=zgray] [-to] (0) edge[zgray, bend left=25] (2) ;
      \draw[fill=zgray] [-to] (8) edge[zgray] (0) ;
      \draw[fill=zgray] [-to] (4) edge[zgray] (10) ;
    \end{tikzpicture}
    \vspace{0.1in}
      }
   &
  \scalebox{0.9}{
    \centering
    \begin{tikzpicture}[thick,scale=0.8]
      \node[shape=circle,draw=zgray,fill=zgray] (0) at (0,1) {} ;
      \node[shape=circle,draw=zgray,fill=zgray] (1) at (0.27,1.669) {} ;
      \node[shape=circle,draw=zgray,fill=zgray,label=above:$t$] (2) at (1,2) {} ;
      \node[shape=circle,draw=zgray,fill=zgray,label=above:$s$] (8) at (1.745,2) {} ;
      \node[shape=circle,draw=zgray,fill=zgray] (3) at (2.5,1.658) {} ;
      \node[shape=circle,draw=zgray,fill=zgray] (4) at (2.745,1) {} ;
      \node[shape=circle,draw=zgray,fill=zgray] (5) at (2.525,0.39) {} ;
      \node[shape=circle,draw=zgray,fill=zgray] (6) at (1.745,0) {} ;
      \node[shape=circle,draw=zgray,fill=zgray] (9) at (1,0) {} ;
      \node[shape=circle,draw=zgray,fill=zgray] (7) at (0.265,0.337) {} ;
      \node[scale=1.3,draw=none] at (1.3725,-0.5) {\texttt{-}};

      \draw[fill=zgray] [-to] (0) edge[zgray, bend left=10] (1) ;
      \draw[fill=zgray] [-to] (1) edge[zgray, bend left=10] (2) ;
      \draw[fill=zgray] [-to] (8) edge[zgray, bend left=10] (3) ;
      \draw[fill=zgray] [-to] (3) edge[zgray, bend left=10] (4) ;
      \draw[fill=zgray] [-to] (4) edge[zgray, bend left=10] (5) ;
      \draw[fill=zgray] [-to] (5) edge[zgray, bend left=10] (6) ;
      \draw[fill=zgray] [-to] (6) edge[zgray, bend left=10] (9) ;
      \draw[fill=zgray] [-to] (9) edge[zgray, bend left=10] (7) ;
      \draw[fill=zgray] [-to] (7) edge[zgray, bend left=10] (0) ;
    \end{tikzpicture}
    \vspace{0.1in}
      }
  \end{tabular}
  \vspace{-0.1in}
  \caption{Find a sentence in first-order logic that is true on \texttt{+}
    structures and false on \texttt{-} structures. }
  \figlabel{example2}
\end{figure}

\subsection{Example 3: Least Fixed Points and Recursive
  Definitions}
\label{sec:example-3}
Notice that the separating sentence from \figref{example2} has a size
that depends on the sizes of the input structures, and it fails to
capture the notion of a path of \emph{unbounded length}.  Consider the
problem in \figref{example3}. A separating concept is \emph{all nodes
  can reach some cycle.} By augmenting first-order logic with (least
fixed point) recursive definitions, this concept can be expressed
using the following recursive definition for $\mathit{reach}$ (which
captures reachability using at least one edge):
\begin{align*}
  \varphi \,\,\,\coloneq \quad \Let \,\, &\mathit{reach}(x,y) =_{\lfp} E(x,y)\vee \exists z.\,
                              \left(E(x,z)\wedge \mathit{reach}(z,y)\right)\,\, \\
                            \In\,\,\, &\forall x.~\exists y.\,
                              \left(\mathit{reach}(x,y)\wedge \mathit{reach}(y,y)\right)
\end{align*}

\tikzstyle{every node}=[circle, text=zgray, draw, inner sep=1.3pt, minimum width=1pt]
\tikzset{every loop/.style={in=270,out=180,looseness=5}}
\begin{figure}[H]
  \centering
  \begin{tabular}[t]{p{0.18\textwidth}>{\centering}p{0.18\textwidth}>{\centering}p{0.01\textwidth}>{\centering}p{0.18\textwidth}>{\centering}p{0.18\textwidth}}
  \scalebox{0.9}{
    \centering
    \begin{tikzpicture}[thick,scale=0.8]
      \node[shape=circle,draw=zgray,fill=zgray] (0) at (0,1) {} ;
      \node[shape=circle,draw=zgray,fill=zgray] (1) at (1,1) {} ;
      \node[shape=circle,draw=zgray,fill=zgray] (2) at (2,1) {} ;
      \node[shape=circle,draw=zgray,fill=zgray] (3) at (0,2) {} ;
      \node[shape=circle,draw=zgray,fill=zgray] (4) at (1,2) {} ;
      \node[shape=circle,draw=zgray,fill=zgray] (5) at (2,2) {} ;
      \node[shape=circle,draw=zgray,fill=zgray] (6) at (0,0) {} ;
      \node[shape=circle,draw=zgray,fill=zgray] (7) at (1,0) {} ;
      \node[shape=circle,draw=zgray,fill=zgray] (8) at (2,0) {} ;
      \node[scale=1.3,draw=none] at (1,-0.7) {\texttt{+}};

      \draw[fill=zgray] [-to] (6) edge[zgray] (0) ;
      \draw[fill=zgray] [-to] (0) edge[zgray] (1) ;
      \draw[fill=zgray] [-to] (2) edge[zgray] (8) ;
      \draw[fill=zgray] [-to] (3) edge[zgray] (4) ;
      \draw[fill=zgray] [-to] (4) edge[zgray] (5) ;
      \draw[fill=zgray] [-to] (5) edge[zgray] (2) ;
      \draw[fill=zgray] [-to] (7) edge[zgray] (6) ;
      \draw[fill=zgray] [-to] (1) edge[zgray] (7) ;
      \draw[fill=zgray] [-to] (8) edge[zgray] (7) ;
    \end{tikzpicture}
    \vspace{0.1in}
  } &
    \scalebox{0.9}{
    \centering
    \begin{tikzpicture}[thick,scale=0.8]
      \node[shape=circle,draw=zgray,fill=zgray] (0) at (0,0.5) {} ;
      \node[shape=circle,draw=zgray,fill=zgray] (2) at (0.5,1) {} ;
      \node[shape=circle,draw=zgray,fill=zgray] (4) at (1,0.5) {} ;
      \node[shape=circle,draw=zgray,fill=zgray] (6) at (0.5,0) {} ;

      \draw[fill=zgray] [-to] (0) edge[zgray, bend left=25] (2) ;
      \draw[fill=zgray] [-to] (2) edge[zgray, bend left=25] (4) ;
      \draw[fill=zgray] [-to] (4) edge[zgray, bend left=25] (6) ;
      \draw[fill=zgray] [-to] (6) edge[zgray, bend left=25] (0) ;

      \node[shape=circle,draw=zgray,fill=zgray] (8) at (-0.5,0.5) {} ;
      \node[shape=circle,draw=zgray,fill=zgray] (9) at (-1,0.5) {} ;
      \node[shape=circle,draw=zgray,fill=zgray] (10) at (0.5,1.5) {} ;
      \node[shape=circle,draw=zgray,fill=zgray] (11) at (0.5,2) {} ;
      \node[shape=circle,draw=zgray,fill=zgray] (12) at (1.5,0.5) {} ;
      \node[shape=circle,draw=zgray,fill=zgray] (13) at (2,0.5) {} ;
      \node[shape=circle,draw=zgray,fill=zgray] (14) at (0.5,-0.5) {} ;
      \node[shape=circle,draw=zgray,fill=zgray] (15) at (0.5,-1) {} ;
      \node[scale=1.3,draw=none] at (0.5,-1.5) {\texttt{+}};

      \draw[fill=zgray] [-to] (8) edge[zgray] (0) ;
      \draw[fill=zgray] [-to] (9) edge[zgray] (8) ;
      \draw[fill=zgray] [-to] (10) edge[zgray] (2) ;
      \draw[fill=zgray] [-to] (11) edge[zgray] (10) ;
      \draw[fill=zgray] [-to] (12) edge[zgray] (4) ;
      \draw[fill=zgray] [-to] (13) edge[zgray] (12) ;
      \draw[fill=zgray] [-to] (14) edge[zgray] (6) ;
      \draw[fill=zgray] [-to] (15) edge[zgray] (14) ;
    \end{tikzpicture}
    \vspace{0.1in}
      } & &
    \scalebox{0.9}{
    \centering
    \begin{tikzpicture}[thick,scale=0.8]
      \node[shape=circle,draw=zgray,fill=zgray] (0) at (1,1) {} ;
      \node[shape=circle,draw=zgray,fill=zgray] (1) at (1,0) {} ;
      \node[shape=circle,draw=zgray,fill=zgray] (2) at (2,0) {} ;
      \node[shape=circle,draw=zgray,fill=zgray] (3) at (1,-1) {} ;
      \node[shape=circle,draw=zgray,fill=zgray] (4) at (0,0) {} ;
      \node[scale=1.3,draw=none] at (1,-1.5) {\texttt{-}};

      \draw[fill=zgray] [-to] (1) edge[zgray] (0) ;
      \draw[fill=zgray] [-to] (1) edge[zgray] (2) ;
      \draw[fill=zgray] [-to] (1) edge[zgray] (3) ;
      \draw[fill=zgray] [-to] (1) edge[zgray] (4) ;
      \draw[fill=zgray] [-to] (3) edge[zgray, bend right=20] (2) ;
      \draw[fill=zgray] [-to] (0) edge[zgray, bend right=20] (4) ;
    \end{tikzpicture}
    \vspace{0.1in}
      }
   &
  \scalebox{0.9}{
    \centering
    \begin{tikzpicture}[thick,scale=0.8]
      \node[shape=circle,draw=zgray,fill=zgray] (0) at (0,1) {} ;
      \node[shape=circle,draw=zgray,fill=zgray] (1) at (0.27,1.669) {} ;
      \node[shape=circle,draw=zgray,fill=zgray] (2) at (1,2) {} ;
      \node[shape=circle,draw=zgray,fill=zgray] (8) at (1.745,2) {} ;
      \node[shape=circle,draw=zgray,fill=zgray] (3) at (2.5,1.658) {} ;
      \node[shape=circle,draw=zgray,fill=zgray] (4) at (2.745,1) {} ;
      \node[shape=circle,draw=zgray,fill=zgray] (5) at (2.525,0.39) {} ;
      \node[shape=circle,draw=zgray,fill=zgray] (6) at (1.745,0) {} ;
      \node[shape=circle,draw=zgray,fill=zgray] (9) at (1,0) {} ;
      \node[shape=circle,draw=zgray,fill=zgray] (7) at (0.265,0.337) {} ;
      \node[scale=1.3,draw=none] at (1.3725,-0.5) {\texttt{-}};

      \draw[fill=zgray] [-to] (0) edge[zgray, bend left=10] (1) ;
      \draw[fill=zgray] [-to] (1) edge[zgray, bend left=10] (2) ;
      \draw[fill=zgray,ultra thick] [to-] (2) edge[zgray, bend left=10] (8) ;
      \draw[fill=zgray] [-to] (8) edge[zgray, bend left=10] (3) ;
      \draw[fill=zgray] [-to] (3) edge[zgray, bend left=10] (4) ;
      \draw[fill=zgray] [to-] (5) edge[zgray, bend right=10] (4) ;
      \draw[fill=zgray] [-to] (5) edge[zgray, bend left=10] (6) ;
      \draw[fill=zgray] [-to] (6) edge[zgray, bend left=10] (9) ;
      \draw[fill=zgray] [-to] (9) edge[zgray, bend left=10] (7) ;
      \draw[fill=zgray] [-to] (7) edge[zgray, bend left=10] (0) ;
    \end{tikzpicture}
    \vspace{0.1in}
      }
  \end{tabular}
  \vspace{-0.1in}
  \caption{Find a sentence in first-order logic with least fixed point
    definitions that is true on \texttt{+} structures and false on \texttt{-}
    structures and that does not depend on the sizes of the
    structures.  }
  \figlabel{example3}
\end{figure}
Recursive definitions dramatically increase the expressivity of
first-order logic. As studied in finite model theory, such logics
encompass (over structures equipped with a linear order on the domain)
\emph{all} polynomial-time computable properties, i.e., the class
$\mathbb{P}$~\cite{vardi82,immerman82,libkinmodeltheory}. We consider
learning in a finite-variable version of first-order logic with least
fixed points that captures all properties computable in time $n^k$
using $\Oo(k)$ variables. We note here a connection to the problem of
synthesizing programs that are syntactically restricted in order to
guarantee a specific \emph{implicit
  complexity}~\cite{implicit-complextiy-dellago}, say polynomial time;
we leave an exploration of this connection to future work.

\subsection{Example 4: Term Synthesis and Program Synthesis}
\label{sec:example-4}
In addition to the separability problem illustrated in the previous
examples, we also study the problem of \emph{term synthesis}. In the
term synthesis problem we aim to synthesize a term that evaluates to a
specific element of the domain for each structure in a set of input
structures. Specifically, given a set of structures, each with an
interpretation for constants $\mathit{in}_1, \ldots, \mathit{in}_d,$
and $\mathit{out}$, we want to construct a term $t$ in the language of
a grammar $G$ such that $t$ has the same interpretation as
$\mathit{out}$ in each structure. (Note the structures are not labeled
in this problem.)  The term synthesis problem, especially in the
presence of recursive function definitions, resembles \emph{functional
  program synthesis}.

Consider the problem of merging two sorted lists. We can model this
setting with structures that represent finite prefixes of an abstract
datatype for lists over a linearly ordered finite domain
$\{a_1, a_2, \ldots, a_n \}$ (with ordering $<$). \figref{example4}
(top) depicts a portion of one such structure and its operations.  We
can model input-output tuples for the desired merge operation using an
interpretation of constants $in_1,in_2,$ and $out$, as depicted in the
bottom of the figure.

\tikzstyle{every node}=[rectangle, text=black, draw, inner sep=2pt, minimum width=4pt]
\tikzset{every loop/.style={in=270,out=180,looseness=5}}
\begin{figure}[H]
  \centering
  \begin{minipage}[c]{\linewidth}
    \centering
    \begin{tikzpicture}[thick,scale=0.5]
      \node[draw=none] (nil) at (21,0)
      {$\nil$} ;
      \node[draw=none] (a1) at (0,0)
      {$a_1$} ;
      \node[draw=none] (cons1nil) at (0,2)
      {$\cons(a_1,\nil)$} ;
      \node[draw=none] (lt1) at (3,0)
      {$<$} ;
      \node[draw=none] (a2) at (6,0)
      {$a_2$} ;
      \node[draw=none] (cons2nil) at (6,2)
      {$\cons(a_2,\nil)$} ;
      \node[draw=none] (lt2) at (9,0)
      {$<$} ;
      \node[draw=none] (a3) at (12,0)
      {$a_3$} ;
      \node[draw=none] (cons3nil) at (12,2)
      {$\cons(a_3,\nil)$} ;
      \node[draw=none] (lt3) at (15,0)
      {$<$} ;
      \node[draw=none] (a4) at (18,0)
      {$a_4$} ;
      \node[draw=none] (cons4nil) at (18,2)
      {$\cons(a_4,\nil)$} ;
      \node[draw=none] (cons11nil) at (0,4)
      {$\cons(a_1,\cons(a_1,\nil))$} ;
      \node[draw=none] (cons44nil) at (18,4)
      {$\cons(a_4,\cons(a_4,\nil))$} ;

      \draw[fill=zgray, to-] (a1) -- (cons1nil)
      node[midway,left,draw=none] {{\scriptsize $\car$}} ;

      \draw[fill=zgray, to-] (a2) -- (cons2nil)
      node[midway,left,draw=none] {{\scriptsize $\car$}} ;

      \draw[fill=zgray, to-] (a3) -- (cons3nil)
      node[midway,left,draw=none] {{\scriptsize $\car$}} ;

      \draw[fill=zgray, to-] (a4) -- (cons4nil)
      node[midway,left,draw=none] {{\scriptsize $\car$}} ;

      \draw[fill=zgray, to-] (nil) -- (cons4nil)
      node[midway,left,draw=none] {{\scriptsize $\cdr$}} ;

      \draw[fill=zgray, to-] (cons1nil) -- (cons11nil)
      node[midway,left,draw=none] {{\scriptsize $\cdr$}} ;

      \draw[fill=zgray, to-] (cons4nil) -- (cons44nil)
      node[midway,left,draw=none] {{\scriptsize $\cdr$}} ;

      \node[shape=circle,draw=black,fill=zgray,inner sep=0.4pt,minimum width=0.2pt]
      (ellipsesLeft1) at (0,5) {} ;
      \node[shape=circle,draw=black,fill=zgray,inner sep=0.4pt,minimum width=0.2pt]
      (ellipsesLeft2) at (0,5.3) {} ;
      \node[shape=circle,draw=zgray,fill=zgray,inner sep=0.4pt,minimum width=0.2pt]
      (ellipsesLeft3) at (0,5.6) {} ;

      \node[shape=circle,draw=black,fill=zgray,inner sep=0.4pt,minimum width=0.2pt]
      (ellipsesRight1) at (18,5) {} ;
      \node[shape=circle,draw=black,fill=zgray,inner sep=0.4pt,minimum width=0.2pt]
      (ellipsesRight2) at (18,5.3) {} ;
      \node[shape=circle,draw=zgray,fill=zgray,inner sep=0.4pt,minimum width=0.2pt]
      (ellipsesRight3) at (18,5.6) {} ;

      \node[shape=circle,draw=black,fill=zgray,inner sep=0.4pt,minimum width=0.2pt]
      (ellipsesMid1) at (8.7,4) {} ;
      \node[shape=circle,draw=black,fill=zgray,inner sep=0.4pt,minimum width=0.2pt]
      (ellipsesMid2) at (9,4) {} ;
      \node[shape=circle,draw=zgray,fill=zgray,inner sep=0.4pt,minimum width=0.2pt]
      (ellipsesMid3) at (9.3,4) {} ;

      \node[draw=none]
      (ellipsesMid2) at (9,5.3) {\scriptsize \textsf{up to depth 3}} ;







    \end{tikzpicture}
    \vspace{0.1in}
  \end{minipage}
  \begin{minipage}[c]{\linewidth}
    \vspace{0.1in}
    \centering
    \begin{tabular}[t]{|l|l|l|l|} \hline
      $\mathsf{Example}$ & \multicolumn{1}{c|}{$\mathit{in_1}$} & \multicolumn{1}{c|}{$\mathit{in_2}$} & \multicolumn{1}{c|}{$\mathit{out}$} \\\hline
      \multicolumn{1}{|c|}{$A_1$} & $\cons(a_4,\nil)$ &
                                             $\cons(a_2,\cons(a_3,\nil))$ & $\cons(a_2,\cons(a_3,\cons(a_4,\nil)))$ \\\hline
      \multicolumn{1}{|c|}{$A_2$} & $\cons(a_1,\cons(a_4,\nil))$ &
                                             $\cons(a_3,\nil)$ &
                                                                 $\cons(a_1,\cons(a_3,\cons(a_4,\nil)))$ \\ \hline
    \end{tabular}

  \end{minipage}
  \vspace{0.1in}
  \caption{(Top) Partial picture of a structure $A_1$ that encodes a
    finite prefix of a datatype for lists over an ordered domain, with
    terms bounded to depth 3. (Bottom) Input-output examples for
    $\mathit{merge}$. The goal is to find a closed term in first-order
    logic with least fixed point relations and recursive functions
    that evaluates to $out$ on structures $A_1$ (top) and $A_2$ (not
    shown). }
  \figlabel{example4}
\end{figure}

\vspace{-0.1in}
One possible solution is the following term $t$ that defines a
recursive function $\mathit{merge}$ and applies it to the inputs
$\mathit{in_1},\mathit{in_2}$:
\begin{align*}
  t \,\,\,\coloneq \quad &\Let \,\, \mathit{merge}(x,y) =_{\lfp} \,\,
                           \iteterm(x=\nil,y, \iteterm(y=\nil,x, (\iteterm(\car(y) >
                           \car(x), \\
                         &\quad\quad\quad
                           \quad\quad\quad\quad\quad\quad \cons(\car(x),\mathit{merge}(\cdr(x),y)), \\
                         &\quad\quad\quad
                           \quad\quad\quad\quad\quad\quad \cons(\car(y),\mathit{merge}(x,\cdr(y))))))) \\
                         & \In\,\,\, \mathit{merge}(\mathit{in_1},\mathit{in_2})
\end{align*}


\section{Background}
\seclabel{prelim}

We begin with some preliminary notions from logic as well as the
concepts of \emph{term}, \emph{tree}, and \emph{regular tree grammar},
which we will need for our definitions of various automata.

\subsection{Logic}
\label{sec:logic}

\subsubsection{Structures and Signatures}
\label{sec:struct-sign}
A \emph{first-order signature}, or simply \emph{signature}, is a set
$\tau$ of sets of relation symbols $\{R_1,R_2,\ldots\}$, function
symbols $\{f_1,f_2,\ldots\}$, and constant symbols
$\{c_1,c_2,\ldots\}$. Each symbol $s$ has an associated arity, denoted
$\arity{s}\in\Nat$. The meaning of symbols in a signature depends on a
\emph{$\tau$-structure}, which is a tuple
$A=\la \dom(A),
R_1^A,\ldots,R_a^A,f^A_1,\ldots,f^A_b,c^A_1,\ldots,c^A_c\ra$. The
domain $\dom(A)$ is a set, each $R_i^A$ is a relation on the domain,
i.e., $R_i^A\subseteq \dom(A)^{\arity{R_i}}$, each $f_j^A$ is a total
function on the domain, i.e.,
$f_j^A : \dom(A)^{\arity{f_j}}\rightarrow\dom(A)$, and each constant
$c$ denotes an element $c^A\in \dom(A)$. For simplicity, we model
constants as nullary functions. Each problem addressed in this work
involves \emph{finite} structures, i.e., those for which
$|\dom(A)|\in\Nat$. Thus \emph{structure} will always mean
\emph{finite structure}. We omit $\tau$ and write \emph{structure}
whenever $\tau$ can be understood from context or is unimportant. We
use $A$ to denote an arbitrary structure.

\subsubsection{First-Order Logic}
\label{sec:first-order-logic}
Though the technique presented in this work is highly versatile, we
will focus the majority of our presentation on variants and extensions
of first-order logic. As a starting point, we consider first-order
logic extended with an \ite~ term. Syntax for this logic, denoted
$\FO$, is given in~\figref{FO}.  The semantics of the usual $\FO$
formulas and terms is standard. We denote the interpretation of a term
$t$ in a structure $A$ and variable assignment $\gamma$ as
$t^{A,\gamma}$. The interpretation of the \ite~ term in $A,\gamma,$
is:
\vspace{-0.1in}
\begin{align*}
  \IfThenElse{\varphi}{t_1}{t_2}^{A,\gamma} =
  \begin{cases} t_1^{A,\gamma} &
    A,\gamma \models\varphi \\ t_2^{A,\gamma} & \text{otherwise}
  \end{cases}
\end{align*}
We use $\Ll$ to refer to an arbitrary logic, and occasionally, if we
want to emphasize the signature we write $\Ll(\tau)$ for a logic $\Ll$
over $\tau$. See ~\cite{enderton} for syntax, semantics, and basic
results in first-order logic.

\begin{figure}
  \centering
  \begin{align*}
    \varphi \Coloneqq R(\many{t}) \mid \varphi \vee \varphi \mid \varphi \wedge \varphi
    \mid \neg \varphi \mid \exists x.\varphi \mid \forall x.\varphi \quad\quad
    t \Coloneqq x \mid c \mid f(\many{t}) \mid
            \IfThenElse{\varphi}{t}{t'}
  \end{align*}
  \caption{Grammar for first-order logic with if-then-else terms,
    denoted $\FO$. }
  \figlabel{FO}
\end{figure}

\subsection{Terms and Trees}
\label{sec:term-algebra}
Rather than working with strings, it will be simpler to instead
consider logical formulas as finite ordered ranked trees, sometimes
called \emph{terms}. Intuitively, to build terms we use symbols from a
finite \emph{ranked alphabet}, that is, a set of symbols with
corresponding arities. We use $T_\alphabet(X)$ to denote the set of
terms over a ranked alphabet $\alphabet$ augmented with nullary
symbols $X$ (with $X$ disjoint from $\alphabet$). When $X=\emptyset$
we just write $T_\alphabet$.

It will be convenient to also use the language of ordered trees. An
\emph{ordered tree} $\tree$ over a label set $W$ is a partial function
$\tree : \Nat^*\rightarrow W$ defined on
$\treepos(\tree)\subseteq \Nat^*$, a prefix-closed set of positions
containing a root $\epsilon\in \treepos(\tree)$. In this view, terms
are simply ordered trees whose labels respect ranks, that is, ordered
trees subject to the following requirement: if $a\in \alphabet$ with
$\arity{a}=n$ and for some $x\in \treepos(\tree)$ we have
$\tree(x) = a$, then
$\{j\in\Nat \,\mid\, x \cdot j\in \treepos(\tree)\} = \{1,\ldots,
n\}$. We will refer to ordered (ranked) trees as simply \emph{trees}
to avoid confusion with the usual syntactic category of \emph{logical
  terms}.

\subsection{Finite Variable Logics to Trees}
\label{sec:finite-vari-logics}

When can the formulas of a logic be represented as trees over a finite
alphabet? We probably must have a finite signature, as well as syntax
formation rules that take a finite number of subformulas. For any
logic $\Ll$ and signature $\tau$ that meets these requirements, if we
bound the number of variables that can appear in any formula, then we
can define a \emph{finite} ranked alphabet $\alphabet_{\Ll(\tau)}$
such that any formula $\varphi\in\Ll(\tau)$ has at least one
corresponding tree $t\in T_{\alphabet_{\Ll(\tau)}}$. For example,
consider a variant of $\FO$ restricted to the $k$ variables in
$V = \{x_1,\ldots, x_k\}$, which we denote $\FOk$\footnote{We overload
  notation, using $\FO(\tau)$ for $\FO$ over signature $\tau$ and
  $\FOk$ for $\FO$ with $k$ variables and an unspecified signature. If
  the two notations are both needed at once we put the signature last,
  e.g., $\FO(k)(\tau)$ is $\FO$ with $k$ variables over
  $\tau$. }. With superscripts for arity, the ranked alphabet looks as
follows:
\begin{align*}
  \alphabet_{\FOk} = \left\{R^{\arity{R}} \bigm\vert R\in\tau
   \right\} \cup \left\{f^{\arity{f}} \bigm\vert f\in\tau
   \right\} \cup \left\{\iteterm^{3}\right\} \cup \left\{\andsymb^2,
  \orsymb^2, \negsymb^1\right\} \cup \left\{\allsymb{x}^1,
  \existsymb{x}^1, x^0 \mid x\in V \right\}
\end{align*}
We sometimes drop the subscript for the underlying logic and just use
$\alphabet$ to refer to finite ranked alphabets of this kind.

\subsection{Regular Tree Grammars}
\label{sec:regul-tree-gramm}
A regular tree grammar (RTG) is a tuple
$G = \la N, \alphabet, S, P\ra$, where $N$ is a finite set of
nonterminal symbols, $\alphabet$ is a finite ranked alphabet, $S\in N$
is the axiom nonterminal, and $P$ is a set of rewrite rules of the
form $B \rightarrow t$, where $B\in N$ and $t\in T_\alphabet(N)$. The
language $L(G)$ of $G$ is the set of trees
$\left\{t\in T_\alphabet\mid S\Rightarrow^* t\right\}$, where
$t \Rightarrow t'$ holds whenever there is a context $C$ and tree
$t''\in T_\alphabet(N)$ such that $t=C[B], t'=C[t'']$ and
$B\rightarrow t''\in P$. These are standard notions; see~\cite{tata,automata-logics-games}
for details.

Given a logic $\Ll$, we consider RTGs \emph{over $\Ll$}. If
$\alphabet_\Ll$ is a finite ranked alphabet for $\Ll$, then an RTG
over $\Ll$ is of the form $G = \la N,\alphabet_\Ll,S,P\ra$ for some
$N,S,P$. In~\figref{rtg} we give an example RTG over
$\FO(k)(\mathsf{graph})$ with $k=2$, i.e., first-order logic with
variables $V=\{x,y\}$ over a signature $\mathsf{graph}$ consisting of
a single binary relation symbol $E$. When we refer to sentences or
formulas in the remainder of this paper we mean their corresponding
trees in a suitable RTG, and when we refer to a \emph{grammar} we mean
an RTG. When there are multiple ways to represent a given formula as a
tree, we pick one arbitrarily.
\begin{figure}
  \hfill\begin{minipage}[c]{0.6\linewidth}
    \centering
    \renewcommand{\arraystretch}{0.5}
    \begin{tabu}[t]{c|[0.6pt]c|[0.6pt]c|[0.6pt]c|[0.6pt]c|[0.6pt]c|[0.6pt]c}
      \multicolumn{1}{c}{$S\,\,\rightarrow$} & $\orsymb(S,S)$ &
      $\andsymb(S,S)$
      &$\negsymb(S)$ & $\existsymb{x}{(S)}$ & $\allsymb{x}{(S)}$ &
      $\existsymb{y}{(S)}$ \\
      \multicolumn{1}{c}{} \\
      & $\allsymb{y}{(S)}$ &
      $E(x,x)$ & $E(x,y)$ & $E(y,x)$ & \multicolumn{1}{c}{$E(y,y)$} \\
    \end{tabu}
  \end{minipage}\hfill
   \begin{minipage}[c]{0.3\linewidth}
     \scalebox{0.9}{
       \begin{tikzpicture}[thick,scale=0.8]
         \node[draw=none] (x) at (0,0) {$\forall x$} ;
         \node[draw=none] (y) at (0,-1) {$\exists y$} ;
         \node[draw=none] (or) at (0,-2) {$\vee$} ;
         \node[draw=none] (exy) at (-1,-3) {$E(x,y)$} ;
         \node[draw=none] (eyx) at (1,-3) {$E(y,x)$} ;
         \draw [-] (x) edge[black] (y) ;
         \draw [-] (y) edge[black] (or) ;
         \draw [-] (or) edge[black] (exy) ;
         \draw [-] (or) edge[black] (eyx) ;
       \end{tikzpicture}
       \vspace{0.1in}
     }
   \end{minipage}
   \caption{(Left) Production rules from the set $P$ for a regular
     tree grammar $G = \la\{S\}, \alphabet_\Ll, S, P\ra$, where $\Ll$
     is $\FO(k)(\graph)$ with $k=2$, and (Right) a tree in $L(G)$ for
     the sentence $\forall x.\exists y.\, (E(x,y)\vee E(y,x))$. }
   \figlabel{rtg}
\end{figure}


\subsection{Alternating Tree Automata}
\label{sec:ata}
As we will show, the formalism of \emph{alternating tree automata}
yields an elegant technique for evaluating expressions on a fixed
structure. We summarize the relevant ideas from automata theory.

An alternating tree automaton (ATA) over $\kappa$-ary trees is a tuple
$\aut = \la Q,\Sigma,I,\delta \ra$, where $Q$ is a finite set of
states, $\Sigma$ is a finite ranked alphabet, $I\subseteq Q$ is a set
of initial states, and the transition function has the form
$\delta : Q\times\Sigma\rightarrow \Bb^+(Q\times\{1,\ldots,\kappa\})$,
where $\Bb^+(X)$ denotes the set of positive propositional formulas
over atoms from a set $X$. For any $(q,a)\in Q\,\times\,\Sigma$ we
require $\delta(q,a)\in \Bb^+(Q\times\{1,\ldots, \arity{a}\})$. For
example, if $f\in\Sigma$ and $\arity{f}=2$, we might have
$\delta(q,f) = (q_1,1) \wedge (q_2,2) \vee (q_1',1)\wedge (q_2',2)$.
This transition stipulates that, when reading the symbol $f$ in state
$q$, the automaton must \emph{either} successfully continue from the
left child in state $q_1$ and from the right child in state $q_2$
\emph{or} it must successfully continue from the left child in state
$q_1'$ and from the right child in state $q_2'$. This, in fact, is
already expressible as a \emph{nondeterministic tree automaton}
transition. A nondeterministic tree automaton can be viewed as an ATA
whose transition formulas are in disjunctive normal form, where each
conjunctive subformula refers to each child at most once. What
alternation buys is a transition like, e.g.,
$\delta(q,f) = (q_1,1) \wedge (q_2,2) \wedge (q_1',1)\wedge (q_2',2)$,
in which there are multiple distinct conditions placed on a single
child.

The \emph{language} $L(\aut)$ of an ATA $\aut$ is the set of trees
that it accepts. This set is defined with respect to a \emph{run},
which captures the idea of a pass over an input tree that succeeds in
satisfying the conditions stipulated by the transition function. For
the automata in this work, some transitions will use the formula
$\fals$, and trees in the language of such an automaton can be thought
of as those which are able to satisfy the transition formulas in such
a way that they never are forced to satisfy $\fals$, which is
impossible.

A \emph{run} of an ATA $\aut = \la Q,\Sigma,I,\delta\ra$ on an input
$t\in T_\Sigma$ is an ordered tree $\tree$ over the label set
$Q\times \treepos(t)$ satisfying the following two conditions:
\begin{itemize}
\item $\tree(\epsilon) = (q_i,\epsilon)$ for some state $q_i\in I$
\item Let $n\in \treepos(\tree)$. If $\tree(n) = (q,x)$ with
  $t(x) = a$, then there exists
  $S = \{(q_{1},i_1),\ldots,(q_{l},i_l)\}\subseteq
  Q\times\{1,\ldots,\arity{a}\}$ such that $S\models \delta(q,a)$ and
  $\tree(n\cdot j) = (q_j,x\cdot i_j)$ for $1\le j\le l$.
\end{itemize}
An ATA $\aut$ \emph{accepts} a tree $t$ if it has a run on $t$. Note
the structure of a run $\rho$ can be different from that of the input
$t$, since it records how the automaton satisfies the transition
function, which can involve going to several states for any given
child. This is important for the complexity of emptiness checking for
ATAs, because it means that mere reachability of states is not enough
to verify a transition can be taken: for some transitions one must
also verify that reachability of certain states is \emph{witnessed by
  the same tree}. We note that (1) an ATA can be converted to a
nondeterministic tree automaton with the same language in exponential
time (incurring an exponential increase in states) and (2) emptiness
for nondeterministic tree automata is decidable in linear time.

We will use $L(\aut,q)$ to denote the language of an automaton $\aut$
when we view $q\in Q$ as an initial state (thus
$L(\aut)=\cup_{q\in I}L(\aut,q)$). We refer the reader to~\cite{tata}
for details and standard results about tree automata.


\section{Realizability and Synthesis Problems}
\seclabel{problems}

In this section we define three exact learning problems that are
parameterized by a logic $\Ll$. The first problem involves separating
a set of labeled structures using a sentence in $\Ll$ . The second
problem involves finding a formula in $\Ll$ that exactly defines a
given set of tuples over the domain of a structure. The third problem
involves finding a term in $\Ll$ that obtains specified domain values
in given structures. For each of these problems, we always assume a
fixed and finite signature $\tau$.

\begin{table}
  \caption{Summary of main results for a fixed signature and
    fixed arities of relations and functions. }
  \centering
  \begin{tabular}{ |l|c|c|c| }
    \hline
    Problem & Parameters & Time complexity
    & \begin{tabular}{@{}c@{}}
        Combined complexity \\
        {\small(fixed variables)}
      \end{tabular}
    \\ \hline
    \begin{tabular}{@{}l@{}}
      $\FO$ separability
    \end{tabular}
          & \multirow{3}{*}[-0.07in]{\scalebox{0.9}{\begin{tabular}{@{}l@{}}
               $m$ input structures \\
               $n$ max structure size \\
               $k$ first-order variables \\
             \end{tabular}}}
          & \multirow{3}{*}[-0.15in]{\begin{tabular}{@{}c@{}}
              $\Oo\left(2^{\mathit{poly}\left(m n^k\right)}|G|\right)$
             \end{tabular}}
          & \multirow{5}{*}[-0.3in]{\begin{tabular}{@{}c@{}}
              $\EXP$-complete \\ in $\,m n + |G|$
            \end{tabular}}
    \\
    \cline{1-1}
    \begin{tabular}{@{}@{}l}
      $\FO$ queries
    \end{tabular} & & &
    \\
    \cline{1-1}
    \begin{tabular}{@{}l@{}}
      $\FO$ term \\ synthesis
    \end{tabular} & & &
    \\ \cline{1-3} 
    \begin{tabular}{@{}l@{}}
      $\FOLFP$ \\ separability
      \end{tabular}
          & \scalebox{0.9}{\begin{tabular}{@{}l@{}}
                 $\FO$ parameters \emph{and} \\
                    $k'$ relation variables \\
                  \end{tabular}}
            & \begin{tabular}{@{}c@{}}
                $\Oo\left(2^{\mathit{poly}\left(m n^{k}
  k'\right)}|\grammar|\right)$\\
              \end{tabular}
            &
    \\ \cline{1-3} 
    \begin{tabular}{@{}l@{}}
      $\FOTERM$ \\ term synthesis
      \end{tabular}
          & \scalebox{0.9}{\begin{tabular}{@{}l@{}}
                 $\FO$ parameters \emph{and} \\
                    $k_1$ relation variables \\
                    $k_2$ function variables \\
                  \end{tabular}}
          & \begin{tabular}{@{}c@{}}
              $\Oo\left(2^{\mathit{poly}\left(m
              n^{k}(k_1+k_2)\right)}|G|\right)$ \\
              \end{tabular}
            &
    \\ \hline
  \end{tabular}
  \label{tab:results}
\end{table}

The first problem, \sepsynth{$\Ll$}, is defined in
\probref{sepsynth}. Given positively- and negatively-labeled
structures and a grammar $G$ over $\Ll$, the problem is to synthesize
a sentence in $G$ such that all positive structures make the sentence
true and all negative structures make it false, or declare no such
sentence exists. Sometimes we refer to this as the \emph{separability}
problem.

\begin{algorithm}
  \NoCaptionOfAlgo
  \caption{\textbf{Problem 1:} \sepsynth{$\Ll$}}
  \problabel{sepsynth}
  \BlankLine
  \KwIn{$\left\la \mathit{Pos}=\{A_1, \ldots, A_{m_1}\}, \mathit{Neg}=\{B_1, \ldots, B_{m_2}\}, G \right\ra$ where}

  \myinput{$A_i, B_j$ are $\tau$-structures}

  \myinput{$G$ an RTG over $\alphabet_{\Ll}$}

  \KwOut{$\varphi\in L(G) \,\, s.t.$ for all $A_i\in \mathit{Pos},\,
    A_i\models\varphi$, and for all $B_j\in \mathit{Neg},\,B_j\not\models\varphi$}

  \myoutput{Or ``No'' if no such $\varphi$ exists}
\end{algorithm}

The second problem involves synthesizing $r$-ary \emph{queries}. A
$r$-ary query for a logic $\Ll$ is a formula
$\varphi(x_1,\ldots,x_r)\in\Ll$ that has exactly $r$ distinct free
variables, all first-order. The \emph{answer set} for a $r$-ary query
$\varphi$ in a structure $A$ is the precise set of tuples
$\mathit{Ans} = \{\, \many{a}\in \dom(A)^r \,\mid\, A\models
\varphi(\many{a}) \,\}$ that make the query true in the structure. For
example, consider a ``family relationships'' domain with two
structures $A_1$ and $A_2$. In $A_1$ there are domain elements
$\mathit{Sue}$ and $\mathit{Bob}$ and the relationship
$\mathsf{Mother}(\mathit{Sue},\mathit{Bob})$, and in $A_2$ there are
elements $\mathit{Maria},\mathit{Tom},$ and $\mathit{Anne}$ and the
relationships $\mathsf{Mother}(\mathit{Maria},\mathit{Tom})$ and
$\mathsf{Mother}(\mathit{Maria},\mathit{Anne})$. Suppose the answer
sets are $\mathit{Ans_1} = \{\mathit{Sue}\}$ and
$\mathit{Ans_2} = \{\mathit{Maria}\}$. Then one possible solution is
the query $\varphi(x) \coloneq \exists y. \mathsf{Mother}(x,y)$.

We call this second problem \querysepsynth{$\Ll$}, which is defined
formally in \probref{querysepsynth}. Given a grammar $G$ over $\Ll$
and a set of pairs, where each pair is a structure and an answer set,
synthesize a query $\varphi$ in $G$ such that $\varphi$ precisely
defines the given answer set in each structure, or declare no such
$\varphi$ exists.

\begin{algorithm}
  \NoCaptionOfAlgo
  \caption{\textbf{Problem 2:} \querysepsynth{$\Ll$}}
  \problabel{querysepsynth}
  \BlankLine
  \KwIn{$\left\la \{\la A_1, \mathit{Ans_1}\ra,\ldots, \la A_{m},\mathit{Ans_{m}}\ra\}, G \right\ra$ where}

  \myinput{$A_i$ are $\tau$-structures}

  \myinput{$\mathit{Ans_i}\subseteq\dom(A_i)^r$}

  \myinput{$G$ an RTG over $\alphabet_{\Ll}$}

  \KwOut{$\varphi(x_1,\ldots,x_r)\in L(G) \,\, s.t. \,\, \left\{\many{a}\in\dom(A_i)^r \mid A_i\models\varphi(\many{a})\right\} = \mathit{Ans_i}$ for all $i\in \left[m\right]$}
  \myoutput{Or ``No'' if no such $\varphi$ exists}
\end{algorithm}

The third problem, \termsynth{$\Ll$}, is defined in
\probref{termsynth}. The input is a grammar $G$ and a set of
(unlabeled) structures $\{A_1,\ldots,A_m\}$. Each structure $A_i$
interprets constants $\mathit{in_1},\ldots,\mathit{in_d}$, and
$\mathit{out}$, where $\mathit{out}$ is the \emph{target element} in
the domain of each structure. The goal is to synthesize a term $t$
from $G$ (which precludes using $\mathit{out}$) such that
$A_i\models (t = \mathit{out})$ for each $i$.

\begin{algorithm}
  \NoCaptionOfAlgo
  \caption{\textbf{Problem 3:} \termsynth{$\Ll$}}
  \problabel{termsynth}
  \BlankLine
  \KwIn{$\left\la \{A_1, \ldots, A_m\}, G \right\ra$ where}

  \myinput{$A_i$ are $\tau$-structures}

  \myinput{$G$ an RTG over $\alphabet_{\Ll}$}

  \KwOut{$t\in L(G) \,\, s.t.\,\, A_i\models (t = \mathit{out})$ for all $i\in [m]$}

  \myoutput{Or ``No'' if no such $t$ exists}
\end{algorithm}

In \tableref{results} we highlight our main results for these three
problems instantiated with various logics. Note that the upper bounds
on time complexity assume a fixed signature, and in particular, fixed
arities of symbols. The remainder of the paper lays out our general
automata-theoretic solution.


\section{Solving Realizability and Synthesis for First-Order Logic}
\label{sec:upper-bounds}

In this section we describe our general technique by instantiating it
on the separability problem for the logic $\FOk$
(\csecref{foksepsynth}). We then show how to adapt the solution to
solve query synthesis for the same logic
(\csecref{fokquerysepsynth}). Term synthesis is covered in
\csecref{term-synthesis}.

\subsection{Separator Realizability and Synthesis in First-Order Logic}
\label{sec:foksepsynth}

Consider separability for $\FOk$ over an arbitrary signature. We are
given a grammar $G$ and sets of positive and negative structures
$\Pos$ and $\Neg$. The main idea is to build an alternating tree
automaton that accepts the parse trees of \emph{all} formulas that
separate $\Pos$ and $\Neg$. This automaton itself is constructed as
the product of automata $\aut_M$, one for each structure
$M\in \Pos\uplus \Neg$. If $M\in \Pos$, then $\aut_M$ accepts all
formulas that are true on $M$. If $M\in \Neg$, then $\aut_M$ accepts
all formulas that are false on $M$. Clearly, the intersection of these
automata gives the desired automaton $\aut_\cap$ that accepts formulas
which separate $\Pos$ and $\Neg$. In \csecref{constructions}, we give
the main construction of $\aut_M$ for each $M\in \Pos\uplus \Neg$,
which involves evaluating a given input formula on a fixed structure
$M$.

We build another tree automaton $\aut_\grammar$
(\csecref{grammar-automaton}) that accepts precisely the formulas from
$G$, and finally we construct an automaton accepting the intersection
of languages for $\aut_\cap$ and $\aut_\grammar$
(\csecref{sepsynth-decision-procedure}). Checking emptiness of this
automaton solves the realizability problem and, when the language is
nonempty, finding a member of the language solves the synthesis
problem.

\subsubsection{Automaton for Evaluating First-Order Logic Formulas}
\label{sec:constructions}

We now show how to construct a tree automaton that accepts the set of
sentences in $\FOk$ that are true in a given structure $A$. For
clarity, we present an automaton for a slightly simpler version of
$\FOk$ over an arbitrary relational signature $\tau$ and without an
\ite~ term (thus the only terms are variables). The ranked alphabet
for this simplification over $\tau=\la R_1,\ldots,R_s\ra$ is:
\begin{align*}
  \alphabet'_{\FOk} = \left\{R_i(\many{x})^{0} \bigm\vert R_i\in\tau,
  \many{x}\in V^{\arity{R_i}} \right\} \cup \left\{\andsymb^2, \orsymb^2, \negsymb^1\right\} \cup \left\{\allsymb{x}^1, \existsymb{x}^1 \mid x\in V \right\}
\end{align*}
Note that each atomic formula over variables $V$ becomes a nullary
symbol. Handling the full gamut of terms in $\FO$ from \figref{FO} is
straightforward but tedious, so we omit the details. Following the
simpler construction, we give the high-level idea for the full
version.

Fix a $\tau$-structure $A$ with $|\dom(A)|=n$. We define an ATA
$\aut_A=\la Q,\alphabet'_{\FOk},I,\delta \ra$ whose language is the
set of trees over $\alphabet'_{\FOk}$ corresponding to sentences that
are true in the structure $A$. Each component is discussed below.

\paragraph{\textsf{States.}} The states of $\aut_A$ are partial
assignments from variables $V=\{x_1,\ldots, x_k\}$ to the domain
$\dom(A)$. We denote the set of partial assignments by
$\mathit{Assign}\coloneq V\rightharpoonup \dom(A)$, and we use
$\gamma$ to range over $\mathit{Assign}$. States of the automaton keep
track of assignments that accrue when the automaton reads
quantification symbols. The crucial idea is that for each syntax
formation rule, we can express the conditions under which the formula
is true with the current assignment as a positive Boolean formula over
assignments and subformulas. The only hiccup is that the automaton
needs to keep track of whether or not a formula should be satisfied or
\emph{not} satisfied, which is dictated by occurrences of negation. We
need a single bit for this, and so the state space increases by a
factor of two. A state $\gamma\in\mathit{Assign}$ can be marked
$\tilde{\gamma}$ to indicate that under assignment $\gamma$ the input
formula should \emph{not} be true. We use
$\mathit{Dual}(X)\triangleq\{x,\tilde{x}\,\mid\,x\in X\}$ to denote a
set $X$ together with marked copies of its elements. With this
notation, the state set for our automaton is
$Q\coloneq \mathit{Dual}(\mathit{Assign})$, and $|Q| = \Oo(n^k)$. For
a given $\gamma$ we abuse notation and treat $\gamma$ as a set of
variable-binding pairs, for instance, $\{x\mapsto a_1, y\mapsto a_2\}$
and $\emptyset$ denote assignments in this way. We use
$\update{\gamma}{x}{a}$ to denote the assignment that is identical to
$\gamma$ except it maps $x$ to $a$. We write
$\gamma(\many{x})\downarrow$ to denote that $\gamma$ is defined on
each $x_i\in\many{x}$ and $\gamma(\many{x})$ to denote the tuple of
elements obtained by applying $\gamma$ to $\many{x}$.

\paragraph{\textsf{Initial states.}} There is only one initial state,
namely, the one that assigns no variables: $I = \{\emptyset\}$.

\paragraph{\textsf{Transitions.}} To define the transition function,
for each assignment $\gamma\in \mathit{Assign}$ and each symbol
$a\in\alphabet'_{\FOk}$, we give a propositional formula that
naturally mimics the semantics of first-order logic. The intuition is
that, from a state $\gamma\in\mathit{Assign}$, the automaton accepts
every formula that is true in the structure $A$ when free variables
are interpreted according to $\gamma$. For $\gamma\in \mathit{Assign}$
and $x,\many{x}$ ranging over $V$, the transitions are as follows:

\begin{minipage}[t]{0.45\linewidth}
  \begin{align*}
    &\delta(\gamma,\andsymb) = (\gamma,1)\wedge(\gamma,2) \\
    &\delta(\gamma,\orsymb) = (\gamma,1)\vee(\gamma,2) \\
    &\delta(\gamma,\allsymb{x}) = \bigwedge\nolimits_{a\in\dom(A)}(\update{\gamma}{x}{a},1)\\
    &\delta(\gamma,\existsymb{x}) = \bigvee\nolimits_{a\in\dom(A)}(\update{\gamma}{x}{a},1)\\
    &\delta(\gamma,R(\many{x})) =
      \begin{cases} \tru & \gamma(\many{x})\downarrow,\,\,A,\gamma\models R(\many{x})\\
        \fals & \text{otherwise}
      \end{cases}\\
    &  \delta(\gamma,\negsymb) = (\tilde{\gamma},1)
  \end{align*}
\end{minipage}\hfill\hspace{0.1in}
\begin{minipage}[t]{0.45\linewidth}
  \begin{align*}
    &  \delta(\tilde{\gamma},\andsymb) = (\tilde{\gamma},1)\vee(\tilde{\gamma},2)\\
    & \delta(\tilde{\gamma},\orsymb) = (\tilde{\gamma},1)\wedge(\tilde{\gamma},2)\\
    & \delta(\tilde{\gamma},\allsymb{x}) =
      \bigvee\nolimits_{a\in\dom(A)}(\tilde{\gamma'},1), \quad \gamma'=\update{\gamma}{x}{a}\\
    & \delta(\tilde{\gamma},\existsymb{x}) =
      \bigwedge\nolimits_{a\in\dom(A)}(\tilde{\gamma'},1), \quad \gamma'=\update{\gamma}{x}{a}\\
    &    \delta(\tilde{\gamma},R(\many{x})) =
      \begin{cases} \tru & \gamma(\many{x})\downarrow,\,\,A,\gamma\not\models
        R(\many{x})\\
        \fals & \text{otherwise}
      \end{cases}\\
    &  \delta(\tilde{\gamma},\negsymb) = (\gamma,1)
  \end{align*}
\end{minipage}
\vspace{0.15in}

Note that for all $(q,a)\in Q\times \alphabet'_{\FOk}$ we have
$\delta(q,a) \in \Bb^+(Q\times [\arity{a}])$, where for nullary
symbols we can take $[0]=\emptyset$.

\begin{lemma}
  $\aut_A$ accepts any sentence $\varphi$ over $\alphabet'_{\FOk}$
  that is true in $A$.
  \lemlabel{foklem}
\end{lemma}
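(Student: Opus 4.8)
The plan is to prove a strengthened statement by structural induction on the input formula $\varphi$, since the bare claim (true sentences are accepted) is not itself amenable to induction across negation. Concretely, I would establish the following pair of claims simultaneously, for every subformula $\varphi$ and every assignment $\gamma\in\mathit{Assign}$ whose domain includes $\FreeVars(\varphi)$: (1) if $A,\gamma\models\varphi$ then $\varphi\in L(\aut_A,\gamma)$, and (2) if $A,\gamma\not\models\varphi$ then $\varphi\in L(\aut_A,\tilde{\gamma})$. The lemma is the special case of (1) where $\varphi$ is a sentence and $\gamma=\emptyset$, since then $\FreeVars(\varphi)=\emptyset$ and the sole initial state $\emptyset$ suffices. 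Throughout I maintain the invariant that the state's assignment is defined on exactly the free variables of the subformula being read, which is what makes the atomic transitions evaluate to $\tru$ or $\fals$ rather than defaulting to $\fals$ for undefinedness.

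For the induction it is cleanest to use the compositional characterization of ATA acceptance: there is a run of $\aut_A$ on $a(\varphi_1,\ldots,\varphi_m)$ from state $q$ iff some set $S\models\delta(q,a)$ has the property that for each $(q',i)\in S$ there is a run on $\varphi_i$ from $q'$, and such subruns assemble into one run tree by attaching each under the root with positions prefixed by the child index $i$. The base case is atomic $R(\many{x})$: when $A,\gamma\models R(\many{x})$ the definedness invariant gives $\gamma(\many{x})\downarrow$, so $\delta(\gamma,R(\many{x}))=\tru$, satisfied by the empty set of children; the false case uses $\delta(\tilde{\gamma},R(\many{x}))=\tru$ symmetrically. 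The Boolean and quantifier cases follow the defined transitions and exploit the duality between the $\gamma$ and $\tilde{\gamma}$ transition tables: for $\andsymb$ under a true valuation we satisfy $(\gamma,1)\wedge(\gamma,2)$ by claim (1) on both conjuncts, while under a false valuation we satisfy the dual $(\tilde{\gamma},1)\vee(\tilde{\gamma},2)$ by picking a failing conjunct and applying claim (2). The negation case is where the two claims feed each other: from $A,\gamma\models\negsymb\,\varphi_1$ we get $A,\gamma\not\models\varphi_1$, and $\delta(\gamma,\negsymb)=(\tilde{\gamma},1)$ is discharged by claim (2) on $\varphi_1$; the opposite marking uses claim (1). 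For $\existsymb{x}$ under a true valuation some witness $a$ gives $A,\update{\gamma}{x}{a}\models\varphi_1$, matching one disjunct $(\update{\gamma}{x}{a},1)$ of $\delta(\gamma,\existsymb{x})$, while the false case makes every conjunct of the dual transition hold via claim (2); the $\allsymb{x}$ cases are symmetric. Variable reuse is harmless: an inner $\existsymb{x}$ or $\allsymb{x}$ overwrites $\gamma$ on $x$, mirroring the shadowing semantics of $\FO$, and the invariant persists because $\FreeVars(\exists x.\varphi_1)=\FreeVars(\varphi_1)\setminus\{x\}$.

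The main obstacle is not any single case but setting up the proof so that the induction closes at all: because negation swaps the roles of ``should hold'' and ``should fail'', one cannot prove the true-implies-accepted direction in isolation, and the correct move is to carry claims (1) and (2) together as a mutually recursive pair indexed by the marking bit $\tilde{\cdot}$. A secondary point needing care is the formal construction of the run trees, namely verifying that the spliced subruns, with positions relabeled by the appropriate child prefixes, satisfy both conditions in the definition of a run; this is eased by the observation that every transition here mentions each child index at most once, so the run tree has the same shape as the parse tree and the splicing is unambiguous. Both points are routine once the strengthened statement is fixed, and I expect no genuinely hard estimate — the content of the lemma lies entirely in choosing the right induction hypothesis.
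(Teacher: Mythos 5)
Your proposal is correct and is essentially the paper's own argument: the paper's proof is a one-sentence sketch of precisely this simultaneous structural induction, asserting that $L(\aut_A,\gamma)$ and $L(\aut_A,\tilde{\gamma})$ consist of the formulas true, respectively false, in $A$ under $\gamma$, and you have simply fleshed out the cases. The only cosmetic slip is your remark that the state's assignment is defined on \emph{exactly} the free variables of the current subformula --- after a quantifier it may be defined on strictly more (the condition actually needed, and the one the paper uses, is just $\gamma(\many{x})\downarrow$) --- but your stated induction hypothesis already says ``includes,'' so nothing breaks.
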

\begin{proof}[Proof Sketch.]
  A simple induction shows that for each assignment $\gamma$
  (resp. $\tilde{\gamma}$), the language of $\aut_A$ from that state
  is precisely the set of formulas that are true (resp. false) in $A$
  under $\gamma$, i.e.,
  $L(\aut_A,\gamma) = \{\varphi(\many{x}) \,\mid\,
  \gamma(\many{x})\downarrow,\,\, A,\gamma\models \varphi(\many{x})\}$
  (resp.
  $L(\aut_A,\tilde{\gamma}) = \{\varphi(\many{x}) \,\mid\,
  \gamma(\many{x})\downarrow,\,\, A,\gamma\not\models
  \varphi(\many{x})\}$).
\end{proof}
If we fix an ordering on variables, then we can identify a state
$\gamma$ in the obvious way with a tuple of domain elements
$\many{a}$. Then the language of the automaton at $\gamma$ coincides
with the notion of \emph{logical type} for the pair
$(A,\many{a})$~\cite{libkinmodeltheory}. One consequence of this
generality is that the automaton can be seamlessly adapted to solve
the query problem for $\FOk$, as we will see in
\csecref{fokquerysepsynth}.

\subsubsection{Grammar Automaton}
\seclabel{grammar-automaton} As the name suggests, the language of an
RTG is regular, and so it is the language of some tree
automaton. Given $\grammar=\la N,\Sigma,S,P\ra$ an RTG, a
(nondeterministic) tree automaton for it is simple to define. We let
$\aut_\grammar=\la N,\Sigma,S,\delta\ra$, where for each
$B\rightarrow f(B_1,\ldots, B_{\arity{f}})\in P$ we set
$\delta(B,f) = \bigwedge_{i}(B_i,i)$. Observe that we have not used
the full power of alternation: the transition function puts at most
one condition on any given child. Thus the automaton is already
nondeterministic, which keeps the size of the final automaton
(\csecref{sepsynth-decision-procedure}) linear in the size of the
grammar. Notice also that we have made a simplifying assumption about
the form of rules in $P$, since the right-hand side could contain
subterms that are not nonterminal symbols. It is easy to show that
more complicated rules can be represented using multiple simple rules
over a larger state space, yielding an equivalent grammar of size
$\Oo(|\grammar|)$.

\subsubsection{Decision Procedure}
\seclabel{sepsynth-decision-procedure} The decision procedure for
realizability and synthesis is as follows. We define the automaton
$\aut_A$ as described above for each structure $A\in\Pos$, as well as
the automaton $\aut_\grammar$ for the grammar $\grammar$. For each
structure $B\in\Neg$ we define $\aut_B$ in the same way as for
positive structures, with one tweak: instead of initial states
$I=\{\emptyset\}$ we have $I=\{\tilde{\emptyset}\}$. We take the
product of the structure automata to get:
\vspace{-0.15in}
\begin{align*}
  \aut_\cap \,\,= \mathsmaller{\bigtimes}\limits_{M\,\in\, Pos\uplus Neg} \aut_M
\end{align*}
with number of states $\Oo(mn^k)$, where
$m=|Pos\uplus Neg|, n=\max_{M\in Pos\uplus Neg}|\dom(M)|$, and $k$ is
the number of variables. There is an exponential increase in states to
convert $\aut_\cap$ to a nondeterministic automaton $\aut'_\cap$ with
$L(\aut'_\cap)=L(\aut_\cap)$~\cite{tata,automata-logics-games}. Finally,
we take the product of $\aut'_\cap$ and $\aut_\grammar$ to get
$\aut = \aut'_\cap\times \aut_\grammar$, with
$L(\aut)= L(\aut'_\cap) \cap L(\aut_\grammar)$, and furthermore,
$L(\aut)\neq\emptyset$ if and only if there is a sentence
$\varphi\in L(\grammar)$ that separates the input structures. We solve
realizability by checking emptiness of $\aut$ in time linear in its
size, which is $\Oo(2^{\mathit{poly}(m n^k)}\,|\grammar|)$, and the emptiness
checking algorithm can construct a (small) tree if nonempty.

This construction can be easily extended to give us the following
theorem for realizability and synthesis in the full logic $\FOk$ that
includes \ite~and function terms.

\begin{theorem}
  \sepsynth{$\FOk$} is decidable in $\EXP$ for a fixed signature and
  fixed $k\in\Nat$.
  \thmlabel{fok}
\end{theorem}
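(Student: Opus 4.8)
The plan is to extend the alternating tree automaton $\aut_A$ of \csecref{constructions} from the relational fragment $\alphabet'_{\FOk}$ to the full ranked alphabet $\alphabet_{\FOk}$ (with function symbols, the $\iteterm$ symbol, and atomic formulas $R(\many{t})$ whose arguments are arbitrary term subtrees rather than variables), and then feed the resulting automaton into exactly the same product-and-emptiness decision procedure of \csecref{sepsynth-decision-procedure}. Since that procedure already yields the $\EXP$ bound from the state count of $\aut_A$ together with a single exponential for the alternation-to-nondeterminism conversion, it suffices to show that the extended $\aut_A$ (i) still has a number of states that is polynomial in $n$ for fixed $k$, with transition formulas of polynomial size, and (ii) accepts exactly the sentences true in $A$, via a lemma analogous to \lemref{foklem}.

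The one genuinely new ingredient is a mechanism for \emph{evaluating terms}. In the relational fragment every atom $R(\many{x})$ is nullary and decidable directly from $\gamma$; now a term may be a variable, a constant, a function application $f(\many{t})$, or an $\iteterm$, so the automaton must compute the domain element it denotes. I would add a second family of states, the \emph{term states} $(\gamma,d)\in\mathit{Assign}\times\dom(A)$, with the intended meaning that $\aut_A$ accepts a term subtree from $(\gamma,d)$ exactly when that term evaluates to $d$ under $\gamma$. The transitions enforce this bottom-up: $\delta((\gamma,d),x)=\tru$ iff $\gamma(x){\downarrow}$ and $\gamma(x)=d$ (and $\fals$ otherwise), $\delta((\gamma,d),c)=\tru$ iff $c^A=d$, while for a function symbol $f$ of arity $r$,
\begin{align*}
  \delta((\gamma,d),f)=\bigvee_{\substack{d_1,\ldots,d_r\in\dom(A)\\ f^A(d_1,\ldots,d_r)=d}}\ \bigwedge_{i=1}^{r}((\gamma,d_i),i),
\end{align*}
and for $\iteterm$ (condition, then-branch, else-branch),
\begin{align*}
  \delta((\gamma,d),\iteterm)=\bigl((\gamma,1)\wedge((\gamma,d),2)\bigr)\ \vee\ \bigl((\tilde{\gamma},1)\wedge((\gamma,d),3)\bigr),
\end{align*}
reusing the existing formula states $\gamma,\tilde{\gamma}$ of \csecref{constructions} to check the condition true (resp.\ false) while a term state verifies that the selected branch denotes $d$.

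Atomic formulas are then handled by branching over the relation: for a symbol $R$ of arity $r$ I would set $\delta(\gamma,R)=\bigvee_{(d_1,\ldots,d_r)\in R^A}\bigwedge_{i=1}^{r}((\gamma,d_i),i)$, and dually $\delta(\tilde{\gamma},R)=\bigvee_{(d_1,\ldots,d_r)\notin R^A}\bigwedge_{i=1}^{r}((\gamma,d_i),i)$; equality, if present, is the special case checking that the two argument terms denote a common (resp.\ distinct) element. All other connectives and quantifiers keep the transitions of \csecref{constructions} verbatim. The state space grows only from $\Oo(n^{k})$ to $\Oo(n^{k})+\Oo(n^{k+1})=\Oo(n^{k+1})$, and each disjunction ranges over at most $n^{r}$ tuples with $r$ bounded by the fixed signature, so transition formulas stay polynomial; hence the product $\aut_\cap$ over $\Pos\uplus\Neg$ has $\Oo(m n^{k+1})$ states and the procedure of \csecref{sepsynth-decision-procedure} runs within $\Oo(2^{\mathit{poly}(m n^{k})}|\grammar|)$, which is $\EXP$ for fixed $k$.

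The substantive part of the argument is the correctness lemma, which I would prove by a \emph{simultaneous} induction over the mutually recursive syntactic categories of terms and formulas: formula states $\gamma$ and $\tilde{\gamma}$ accept exactly the formulas true and false in $A$ under $\gamma$, while term states $(\gamma,d)$ accept exactly the terms denoting $d$. I expect the main obstacle to be the $\iteterm$ case, where a single term subtree forces the automaton to evaluate a formula (the condition) and a term (the chosen branch) in tandem; one must check that the two disjuncts are mutually exclusive at the semantic level, so that, because every term has a unique value under $\gamma$ and functions are total, acceptance from $(\gamma,d)$ coincides exactly with the recursively defined value of the $\iteterm$. Everything else is a routine bottom-up check mirroring the semantics of \figref{FO}, after which emptiness of the final automaton decides realizability and a witnessing tree solves synthesis, exactly as in the relational case.
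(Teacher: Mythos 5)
Your proposal is correct and follows essentially the same route as the paper: the term states $(\gamma,d)$ recording the ``currently expected domain element'' are exactly the extension the paper sketches after \thmref{fok} and spells out in the term-synthesis construction, and your transitions for variables, constants, $f(\many{t})$, $\iteterm$, and atoms $R(\many{t})$ match the ones given there (minus the recursion-related components, which are not needed here). The complexity accounting and the simultaneous induction over terms and formulas are the intended analogue of \lemref{foklem}, so nothing is missing.
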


The construction for the full gamut of terms is straightforward. It
can be accomplished by not only keeping partial assignments but also
states that encode the currently expected domain element for a term
under evaluation by the automaton. We give more details for how this
extension works when we discuss term synthesis in \csecref{term-synthesis}.

\subsection{Query Realizability and Synthesis}
\label{sec:fokquerysepsynth}
As noted, the automaton $\aut_A$ (\csecref{constructions}) is more
general than an acceptor of sentences, and it can be easily modified
as follows to solve query synthesis for $\FOk$ with no increase in
complexity.

For a pair $\la A,\mathit{Ans}\ra$ of a structure and an answer set,
with $\mathit{Ans}\subseteq\dom(A)^r$, we define an ATA
$\aut_A = \la Q,\alphabet,I,\delta\ra$ whose language is the set of
all formulas $\varphi(y_1,\ldots, y_{r})\in\FOk$, with $r\le k$, whose
answer set in $A$ is $\mathit{Ans}$. We describe each component
below. For simplicity, we work with a fixed permutation of $r$
distinct variables $\many{y} \in V^r$, where $V=\{x_1,\ldots,x_k\}$.

\paragraph{\textsf{States.}} The set of states is unchanged:
$Q \coloneq \mathit{Dual}(\mathit{Assign})$.

\paragraph{\textsf{Transitions.}} The transition function $\delta$ is
unchanged, with the exception of the following transitions for the
initial state $q_i=\emptyset$.

\paragraph{\textsf{Initial states.}} $I = \{q_i\}$. The main idea is
to (a) require that the automaton reads and accepts the input formula
from all states (partial variable assignments) that correspond to
tuples in the answer set $\mathit{Ans}$, and (b) reject from states
corresponding to the complement of $\mathit{Ans}$. Let
$\mathit{S(\many{y})}\subseteq Q$ be the set of assignments
defined only on $\many{y}$, and let
$\mathit{S(Ans)}\subseteq\mathit{S(\many{y})}$ be the subset
of assignments that map $\many{y}$ to a member of the answer set. For
any $a\in\alphabet'_{\FOk}$, the transition out of $q_i$ is given by:
\begin{align*}
  \delta(q_i,a) \quad &= \quad \left(\,\bigwedge_{\gamma\,\in\,
  \mathit{S(Ans)}}\delta(\gamma,a)\,\right)
  \,\,\wedge\,\, \left(\,\bigwedge_{\gamma\,\in\,
                        \mathit{S(\many{y})}\setminus \mathit{S(Ans)}}
  \delta(\tilde{\gamma}, a)\,\right)
\end{align*}

The following theorem follows easily from the proof of
\lemref{foklem}.

\begin{theorem}
  \querysepsynth{$\FOk$}~is decidable in $\EXP$ for a fixed signature
  and fixed $k\in\Nat$.
  \thmlabel{query}
\end{theorem}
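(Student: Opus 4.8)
The plan is to mirror the separability pipeline of \csecref{sepsynth-decision-procedure}, replacing each single-structure acceptor by the query-acceptor $\aut_A$ just defined, then taking products and checking emptiness. The only genuinely new ingredient is the initial transition out of $q_i$, so the bulk of the work is to argue that $\aut_A$ accepts exactly those formulas $\varphi(\many{y})$ (with $\many{y}$ the fixed permutation of $r\le k$ distinct variables) whose answer set in $A$ equals $\mathit{Ans}$; everything downstream is the same product-and-emptiness machinery, reusing the complexity accounting essentially verbatim.

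First I would establish correctness of $\aut_A$ for a single pair $\la A,\mathit{Ans}\ra$. The key is \lemref{foklem}, which already gives, for every assignment $\gamma$ defined on $\many{y}$, that $L(\aut_A,\gamma)$ is the set of $\varphi(\many{y})$ with $A,\gamma\models\varphi$ and that $L(\aut_A,\tilde\gamma)$ is the set with $A,\gamma\not\models\varphi$. Because the root of any input formula has a single child position, the conjunctive transition $\delta(q_i,a)$ places all of its demands on that one child — precisely where alternation is needed. Hence a run from $q_i$ on $\varphi$ exists iff $\varphi$ simultaneously lies in $L(\aut_A,\gamma)$ for every $\gamma\in\mathit{S(Ans)}$ and in $L(\aut_A,\tilde\gamma)$ for every $\gamma\in\mathit{S(\many{y})}\setminus\mathit{S(Ans)}$. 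By \lemref{foklem} this says exactly that $A,\gamma\models\varphi$ for every $\gamma$ mapping $\many{y}$ into $\mathit{Ans}$ and $A,\gamma\not\models\varphi$ for every $\gamma$ mapping $\many{y}$ outside $\mathit{Ans}$, i.e., the answer set of $\varphi$ in $A$ is exactly $\mathit{Ans}$.

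With the single-structure claim in hand, I would form the product $\aut_\cap=\bigtimes_i \aut_{A_i}$ over all pairs $\la A_i,\mathit{Ans}_i\ra$, so that $L(\aut_\cap)$ is the set of formulas whose answer set is $\mathit{Ans}_i$ in each $A_i$; intersect with the grammar automaton $\aut_\grammar$ of \csecref{grammar-automaton} to restrict to $L(\grammar)$; and check emptiness, extracting a witnessing query when nonempty, exactly as in \csecref{sepsynth-decision-procedure}. The state space of each $\aut_{A_i}$ is unchanged at $\Oo(n^k)$ (the extra state $q_i$ adds only a constant), so $\aut_\cap$ has $\Oo(mn^k)$ states, alternation-removal costs a single exponential, and the final product with $\aut_\grammar$ is linear in $|\grammar|$. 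This yields the same $\Oo(2^{\mathit{poly}(mn^k)}|\grammar|)$ bound, hence membership in $\EXP$ for a fixed signature and fixed $k$.

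The hard part, such as it is, is purely the verification of the $q_i$-transition: one must check that a single alternating branching over the root child can enforce both the positive constraints (acceptance from every answer-set assignment $\gamma\in\mathit{S(Ans)}$) and the negative constraints (acceptance of the dual state $\tilde\gamma$ from every non-answer-set assignment) at once, and that the resulting conjunction is still a legal positive Boolean combination over $Q\times[\arity{a}]$ for every top symbol $a$. Once that bookkeeping is confirmed, the theorem reduces to \lemref{foklem} together with the unchanged product and emptiness steps.
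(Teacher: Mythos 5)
Your proposal follows the paper's own route exactly: the paper proves \thmref{query} by observing that the modified initial transition $\delta(q_i,a)$ conjoins the transition formulas of all states in $\mathit{S(Ans)}$ and the duals of those in $\mathit{S(\many{y})}\setminus\mathit{S(Ans)}$, so correctness ``follows easily from the proof of \lemref{foklem},'' and the product/grammar-intersection/emptiness pipeline and complexity accounting are reused unchanged. One small slip that does not affect the argument: the root symbol of $\varphi$ need not have a single child (it may be a binary connective or a nullary atom); what actually makes the $q_i$-transition legal is simply that a conjunction of formulas in $\Bb^+(Q\times[\arity{a}])$ is again in $\Bb^+(Q\times[\arity{a}])$.
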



\section{Realizability and Synthesis with Least Fixed Point
  Definitions}
\label{sec:recursion}

In this section we study separability for logics with least fixed
point operators. In particular, we choose a logic with a finite set of
\emph{relation variables}, each of which can be defined
recursively. These relation variables, though finite in number, can be
redefined any number of times (similar to reusing variables, as we saw
in ~\csecref{example-2}). Further, as we will discuss
in~\csecref{datalog}, the ideas presented in this section can be
extended to handle mutually recursive definitions.

Note that the ability to \emph{define} a relation is valuable
independently of recursion: with definitions we can require a formula
to be synthesized and then used in \emph{multiple distinct
  places}. For example, we may want to express: \emph{there exist x
  and y that are related in some unknown way, and further, all things
  related in that way also share a property $\psi$}. In logic, this
amounts to a separator of the form:
\begin{align*}
  \exists x. \exists y.\, \varphi(x,y)\wedge (\forall x. \forall y.\,
  \varphi(x,y)\rightarrow \psi(x,y))
\end{align*}
Notice that $\varphi$ appears twice, which we cannot express with a
regular tree grammar.  However, with relation variables and
definitions we can ask to synthesize a formula $\varphi$ in a template
as follows:
\begin{align*}
\Let\,\, R(x,y) = \varphi(x,y) \,\,\In\,\,\,
   \exists x. \exists y.\, R(x,y) \wedge (\forall x. \forall y.\,
    R(x,y) \rightarrow \psi(x,y))
\end{align*}
Following the semantics of our logic with least fixed point
definitions, we will see how the automata-theoretic approach extends
neatly to accommodate both definitions and recursion by moving from
alternating tree automata to two-way tree automata.

\subsection{First-Order Logic with Least Fixed Points}
\label{sec:folfp}
Here we describe $\FOLFP$, which is an extension of $\FO$ with
recursively-defined relations with least fixed point semantics. The
syntax is given in~\figref{FOlfp}. Formulas in $\FOLFP$ can define
relations using a set $\{P_1,P_2,\ldots\}$ of symbols disjoint from
the signature. Such symbols are interpreted as least fixed points of
the set operators induced by their definitions. Note that, although
not shown in~\figref{FOlfp}, we require all relations to be defined
before they are used.

Recall the definition for \emph{reachability} from \figref{example3}:
\begin{align*}
  \varphi \coloneq \,\,\, &\Let \,\, \mathit{reach}(x,y) =_{\lfp} (E(x,y)\vee \exists z.\,\,
  E(x,z)\wedge \mathit{reach}(z,y)) \,\,\,\In\,\,\, \varphi'(\mathit{reach})
\end{align*}
For a fixed structure $A$, the meaning of $\mathit{reach}(x,y)$ is
obtained by first interpreting the definition
$\psi(x,y,\mathit{reach}) \coloneq E(x,y)\vee \exists z.\,\,
E(x,z)\wedge \mathit{reach}(z,y)$ as a monotonic function
$F_\psi : 2^{X}\rightarrow 2^{X}$ over the lattice defined by the
subset relation on $2^X$, where $X=\dom(A)^2$. Formally, for
$Y\subseteq X$,
\begin{align*}
F_\psi(Y) \triangleq \left\{ (a_1,a_2)\in X \,\mid\,
  \psi(a_1/x,a_2/y,Y/\mathit{reach})\right\},
\end{align*}
where by $(Y/\mathit{reach})$ we mean that $\mathit{reach}$ is
interpreted as the relation $Y$ in $\psi$ (similarly for $a_1/x$ and
$a_2/y$).  Then for any $a,a'\in\dom(A)$, $\mathit{reach}(a,a')$ holds
if and only if $(a,a')\in \lfp(F_\psi)$, where $\lfp(F_\psi)$ denotes
the least fixed point of $F_\psi$.  More generally, definitions in
$\FOLFP$ are interpreted as follows in a structure $A$:
\begin{equation}
  A \models \Let\, P(\many{x}) =_\lfp \psi(\many{x},P) \, \In \,\,
  \varphi(P)
  \,\,\, \Leftrightarrow\,\,\,  A \models \varphi(\lfp(F_\psi)/P)
  \equlabel{lfpsemantics}
\end{equation}
Note that the least fixed point $\lfp(F_\psi)$ may not exist for an
arbitrary formula $\psi$. It turns out, however, that a simple
syntactic restriction can ensure existence of least fixed
points. Technically, we require all occurrences of $P$ in $\psi$ to
occur under an even number of negations. This restriction can be
enforced by the grammar, and we will not mention it further.

We consider a variant of $\FOLFP$ with a finite number of recursive
relation symbols $P$, but the symbols can be reused in later
definitions. That is, they can be shadowed. Our semantics for
definitions therefore assumes that defined relations are renamed
uniquely before applying \equref{lfpsemantics}. The semantics for the
rest of $\FOLFP$ is straightforward and follows that of $\FO$.

\begin{figure}
  \begin{minipage}[t]{0.4\linewidth}
    \begin{align*}
      \varphi\quad \Coloneqq\quad & \,\,\Let\,\, P(\many{x}) =_{\lfp} \psi \,\,\In\,\,
                \varphi\,\, \\ &\mid\,\, \exists
                x.\varphi \,\,\,\,\mid\,\, \forall x.\varphi \,\,\,\,\mid\,\,
                R(\many{t})\,\,\,\mid\,\, P(\many{t})
                \\ &\mid\,\, \varphi \vee \varphi \,\,\mid\,\,
                \varphi \wedge \varphi \,\,\mid\,\, \neg \varphi
    \end{align*}
  \end{minipage}\hspace{0.4in}%
  \begin{minipage}[t]{0.4\linewidth}
    \begin{align*}
      \psi \quad\Coloneqq\quad &\,\,\exists
             x.\psi \,\,\mid\,\, \forall x.\psi \,\,\,\,\mid\,\,\,\, R(\many{t}) \,\,\,\mid\,\,
             P(\many{t}) \\
             &\quad\quad\,\,\,\,\mid\,\, \psi \vee \psi \,\,\mid\,\,
             \psi \wedge \psi \,\,\mid\,\, \neg \psi \\
      t \quad\Coloneqq\quad &\,\, x \,\,\mid\,\, c \,\,\mid\,\, f(\many{t}) \,\,\mid\,\,
          \IfThenElse{\psi}{t}{t'}
    \end{align*}
  \end{minipage}
  \caption{Syntax for $\FOLFP$, where $\varphi$ is the starting
    nonterminal, $P$ ranges over definable relation symbols, and $R$
    and $f$ range over relations and functions from a signature. 
  }
  \figlabel{FOlfp}
\end{figure}

\subsection{Separator Realizability and Synthesis with Least Fixed Points}
\label{sec:lfpsepsynth}
We now develop a solution for $\FOLFPk$ synthesis, where $\FOLFPk$ is
the logic $\FOLFP$ restricted to $k$ first-order variables and $k'$
definable relation symbols $\{P_1,\ldots,P_{k'}\}$, which we assume
are disjoint from the relation symbols in the signature $\tau$. We
extend the ranked alphabet for $\FOk$ by representing definitions with
binary symbols $\dblqt{\Let\,P(\many{x})}$ whose children are the
definition body and the remainder of the formula that uses the
definition (superscripts denote arity):
\begin{align*}
  \alphabet_{\FOLFPk} = \left\{\, \Let\,  P(\many{x})^2,\, P^{\arity{P}}\,\,\bigm\vert\,\, P\in
  \{P_1,\ldots,P_{k'}\},\, \many{x}\in V^{\arity{P}}\,\right\}\cup  \alphabet_{\FOk}
\end{align*}

In this problem we use \emph{two-way tree automata}, which can
navigate an input tree in both directions (from a node to its children
or to its parent), thus making all parts of a tree accessible from any
node. When reading an occurrence of a definable relation symbol $P$,
the automaton can navigate to the corresponding definition, which is
elsewhere in the tree, and read it. This same capacity to move up and
down in the tree gives us an elegant way to describe the evaluation of
\emph{recursive definitions}, which must be read multiple times to
compute.

\subsubsection{Two-Way Tree Automata}
A two-way tree automaton $\aut=\la Q,\alphabet,I,\delta,F\ra$ on
$\kappa$-ary trees generalizes the transition function of ATAs to have
the form
$\delta : Q\times\alphabet \rightarrow \Bb^+(Q\times
\{-1,\ldots,\kappa\})$, where occurrences of $-1$ in a transition
require the automaton to \emph{ascend} in the input tree. Runs are
defined as for ATAs, with the exception that a run cannot ascend above
the root of the input tree. For our purposes, we modify the notion of
acceptance by distinguishing a set of final states $F\subseteq Q$. A
tree is accepted by a two-way automaton if there is a run for which
every branch reaches a state in $F$. Note that the two-way tree
automata we use here are no more expressive than alternating tree
automata, and there are algorithms to convert a two-way automaton to a
one-way nondeterministic automaton in exponential
time~\cite{two-way-vardi}, and thus emptiness is
decidable. 

\subsubsection{Automaton for Evaluating Formulas with Recursive Definitions}
\label{sec:tree-automata-foklfp}
For simplicity, we again describe a construction for the simpler
variant of $\FOLFPk$ without functions and if-then-else terms. The
ranked alphabet for this simplification looks as follows:
\begin{align*}
  \alphabet'_{\FOLFPk} = \left\{\,\Let\, P(\many{x})^2,\,
  P(\many{x})^0 \,\,\bigm\vert\,\, P\in \{P_1,\ldots,P_{k'}\}, \many{x}\in V^{\arity{P}}\,\right\} \cup  \alphabet'_{\FOk}
\end{align*}
Let us assume each symbol $P_i$ has $\arity{P_i} = r$. For a fixed
structure $A$ with $|\dom(A)|=n$, we define a two-way tree automaton
${\aut_A = \la Q,\alphabet'_{\FOLFPk},I,\delta,F\ra}$ whose language
is the set of sentences in (simplified) $\FOLFPk$ that are true in
$A$. We discuss each component next.

\paragraph{\textsf{States.}} In addition to assignments, each state
keeps track of information that enables the automaton to evaluate
recursively-defined relations. This includes (1) whether the automaton
is going \emph{up} to find a definition or \emph{down} to evaluate a
formula, (2) a counter value from
$\mathit{Count} \triangleq\{0,\ldots,n^r\}$ that tracks the stage of
the current least fixed point computation, and (3) the current
definition being evaluated (if any), i.e., a member of the set
$\mathit{Defn}\triangleq \{\bot,P_1,\ldots,P_{k'}\}$. Finally, some
states have a tuple of domain elements from
$\mathit{Val}\triangleq \dom(A)^r$ rather than a partial assignment,
which is used to pass values to the body of a definition whenever a
defined relation is used. Note that the distinction between a tuple
and an assignment takes care of part (1) above. Similar to partial
assignments, the tuples are marked to indicate the automaton's mode of
operation: checking a formula is either true (verifying) or false
(falsifying). Combining the above, we have
\begin{align*}
  Q &\coloneq
      \mathit{Dual(Assign)}\times\mathit{Count}\times\mathit{Defn}
      \,\,\cup\,\,
      \mathit{Dual(Val)}\times\mathit{Count}\times\mathit{Defn}
      \,\,\cup\,\, \{q_f\},
\end{align*}
where $q_f$ is distinct from all other states and $F=\{q_f\}$. 

The states $Q$ can be divided into two categories: \emph{up} and
\emph{down}. The \emph{up} states correspond to checking membership in
a defined relation. In an \emph{up} state
$\la \mathit{val},\mathit{count},\mathit{defn}\ra\in
\mathit{Dual}(\mathit{Val})\times\mathit{Count}\times\mathit{Defn}\subseteq
Q$, the automaton navigates up on the input tree to find the
definition for $\mathit{defn}$, and it carries a tuple $\mathit{val}$
of domain elements that it should check for membership in the defined
relation. In a \emph{down} state
$\la \mathit{assign},\mathit{count},\mathit{defn}\ra\in
\mathit{Dual}(\mathit{Assign})\times\mathit{Count}\times\mathit{Defn}\subseteq
Q$, the automaton evaluates a formula under the variable assignment
$\mathit{assign}$ while navigating down in the input tree.

\paragraph{\textsf{Initial states.}} There is one initial state
containing an empty assignment, a counter at $0$, and the current
definition set to $\bot$, i.e., $I = \{\la \emptyset,0,\bot\ra\}$.

\paragraph{\textsf{Transitions.}}
The transitions for symbols shared with $\FO$ are similar to the
earlier construction (\csecref{constructions}). The novelty is to
define transitions for definitions and occurrences of defined
relations $P_i(\many{x})$. For intuition, consider the increasing
sequence of ``approximations'' for a relation defined by a formula
$\psi$. The least fixed point for the operator $F_\psi$ (see
~\csecref{folfp}) can be computed in $n^r$ steps by iteratively
applying $F_\psi$ starting from $\emptyset$, giving us the sequence
\begin{align*}
\emptyset\subseteq F_\psi(\emptyset)\subseteq\cdots\subseteq
F_\psi^i(\emptyset)=F_\psi^{i+1}(\emptyset),
\end{align*}
where $i\le n^r$ (follows from monotonicity of $F_\psi$). When the
automaton reads a defined relation $P_i(\many{x})$ in a state
$\la \gamma, j, P_i\ra$, it will attempt to verify that
$\gamma(\many{x})=\many{a}\in F_\psi^j(\emptyset)$. Similarly, in a
state $\la \tilde{\gamma}, j, P_i\ra$ it will attempt to verify that
$\gamma(\many{x})=\many{a}\notin F_\psi^j(\emptyset)$.

Presenting all of the many transitions would obscure the main ideas,
so we give only a description of interesting ones here; a full account
can be found in~\appref{fullversion}. 
We focus on four cases: (1) reading a defined symbol, (2) finding a
definition, (3) reading a definition, and (4) a variation on (1) where
the defined symbol being read is not the current definition. Below, we
use
$\mathit{assign}\in\mathit{Dual}(\mathit{Assign}),
\gamma\in\mathit{Assign},\mathit{val}\in\mathit{Dual}(\mathit{Val}),
v\in\mathit{Val}, j\in\mathit{Count}$, and $P_i,P_j\in\mathit{Defn}$.

\begin{itemize}
\item[]
\item[(1)] \emph{\textsf{Reading a defined symbol.}} Suppose the
  automaton is reading $\dblqt{P_i(\many{x})}$ in a \emph{down} state
  $\la \mathit{assign}, j, P_i\ra$. Thus it is currently reading the
  definition for $P_i$ (call the definition $\psi$) and has
  encountered a \emph{use} of $P_i$ in the form
  $P_i(\many{x})$. Suppose $j=0$. If $\mathit{assign}=\gamma$, then
  the automaton is \emph{verifying} and it must verify that
  $v=\gamma(\many{x})$ is in the $j$th stage of the least fixed point
  computation for the definition of $P_i$. The transition for this
  case is $\fals$, since $v\notin\emptyset =
  F_\psi^0(\emptyset)$. Otherwise, if
  $\mathit{assign}=\tilde{\gamma}$, then the automaton is
  \emph{falsifying} and must verify that $v=\gamma(\many{x})$ is
  \emph{not} in the $j$th stage. So the transition for this case is
  $\tru$, since $v\notin\emptyset$.  If $j>0$, in both cases the
  transition forces the automaton to navigate up to the definition and
  evaluate it. It does this by changing to the \emph{up} state
  $\la v,j,P_i\ra$, if verifying, and to the up state
  $\la \tilde{v},j,P_i\ra$, if falsifying.
\item[]
\item[(2)] \emph{\textsf{Finding a definition.}} Suppose step (1)
  has just occurred, and thus the automaton is looking for a
  definition of $P_i$. The automaton continues moving up on the input
  tree until it encounters a symbol that marks the definition of
  $P_i$, i.e., a symbol of the form
  $\dblqt{\Let\,P_i(\many{x})}$. Note: if the definition does not
  exist in the tree, then the automaton continues to the root, at
  which point it can make no valid transition and the tree is
  rejected.
\item[]
\item[(3)] \emph{\textsf{Reading a definition.}} Suppose step (2)
  has just occurred and the automaton is in state
  $\la \mathit{val}, j, P_i\ra$ reading
  $\dblqt{\Let\,P_i(\many{x})}$. The automaton decrements the counter
  and proceeds to evaluate the definition by moving into the child
  tree corresponding to the definition of $P_i$. It enters a
  \emph{down} state with assignment $\gamma=\cup_i\{x_i\mapsto v_i\}$
  that maps variables in $\many{x}$ to the passed values in
  $\mathit{val}$. If the automaton is verifying
  ($\mathit{val}=v$), then the new state is
  $\la \gamma,j-1,P_i\ra$. Otherwise, the automaton is falsifying
  ($\mathit{val}=\tilde{v}$) and the new state is
  $\la \tilde{\gamma},j-1,P_i\ra$.
\item[]
\item[(4)] \emph{\textsf{Reading a new defined symbol.}}
  Suppose the automaton is reading $\dblqt{P_i(\many{x})}$ in a
  \emph{down} state $\la \mathit{assign}, \mathit{count}, P_j\ra$,
  where $P_j\neq P_i$. Thus it has encountered a \emph{use} of a
  defined relation $P_i$ that it is not currently reading. Rather than
  checking the value of $\mathit{count}$, as in case (1) above, the
  automaton continues on to case (2) with the current definition set
  to $P_i$ and with a fresh counter initialized at
  $\mathit{count}=
  n^r$.
\end{itemize}

\paragraph{\textsf{Acceptance.}} Recall that the automaton accepts a
tree $t$ if it has a run on $t$ where every branch reaches the final
state $q_f$. We note that the full set of transitions ensures that the
only way to reach the final state $q_f$ is via $\tru$ transitions.

\begin{lemma}
  $\aut_A$ accepts any sentence $\varphi\in\alphabet'_{\FOLFPk}$ that is
  true in $A$.
\end{lemma}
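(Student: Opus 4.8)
The plan is to prove a strengthened statement, analogous to the exact characterization underlying \lemref{foklem}: for every reachable configuration of $\aut_A$ (a state together with a tree position), acceptance from that configuration coincides precisely with the intended semantic condition. Concretely, I would establish two invariants. For a \emph{down} state $\la \gamma, j, P_i\ra$ at a node labeling a subformula $\varphi$, the automaton has a successful run continuation iff $A, \gamma \models \varphi$ under the interpretation that reads the current definition $P_i$ as the $j$-th approximation $F_\psi^j(\emptyset)$ (with $\psi$ the body of the enclosing definition of $P_i$) and reads every other defined relation $P_l$ at its least fixed point; the marked state $\la \tilde\gamma, j, P_i\ra$ characterizes the dual falsifying condition. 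For an \emph{up} state $\la v, j, P_i\ra$, acceptance holds iff $v \in F_\psi^j(\emptyset)$, and for $\la \tilde v, j, P_i\ra$ iff $v \notin F_\psi^j(\emptyset)$. The lemma follows by specializing to the initial configuration $\la \emptyset, 0, \bot\ra$ at the root, where every definition is read at its least fixed point.

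The argument proceeds by a well-founded nested induction. The outer measure is the counter $j$, which strictly decreases whenever the automaton descends from a use of $P_i$ into the body of $P_i$'s own definition (case (3)); the inner measure is the structure of the subtree being evaluated in a down state. The base case $j=0$ is discharged directly by the leaf transitions of case (1): a verifying use of $P_i$ yields $\fals$ (matching $v \notin \emptyset = F_\psi^0(\emptyset)$) and a falsifying use yields $\tru$. The Boolean connectives and quantifiers are handled exactly as in \lemref{foklem}, using the De Morgan duality between marked and unmarked states. The crux is the inductive step for a recursive use: reading $P_i(\many{x})$ in down state $\la \gamma, j, P_i\ra$ with $j>0$ spawns the up state $\la \gamma(\many{x}), j, P_i\ra$ (case (1)), which ascends (case (2)) to the nearest enclosing $\dblqt{\Let\,P_i(\many{x})}$, decrements to $j-1$ (case (3)), and re-enters the definition body $\psi$ in a down state. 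By the induction hypothesis applied to $\psi$ at counter $j-1$ --- where $P_i$ is read as $F_\psi^{j-1}(\emptyset)$ --- this accepts iff $\gamma(\many{x}) \in F_\psi(F_\psi^{j-1}(\emptyset)) = F_\psi^j(\emptyset)$, closing the step. For a use of a \emph{different} relation $P_l \neq P_i$ (case (4)), the counter is reset to $n^r$, so I invoke the key fixed-point fact that $F_\psi^{n^r}(\emptyset) = \lfp(F_\psi)$: the lattice $2^{\dom(A)^r}$ has height $n^r$, and monotonicity of $F_\psi$ forces the ascending chain to stabilize within $n^r$ steps, so the reset correctly reads $P_l$ at its least fixed point.

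The main obstacle will be managing the interaction between two-way navigation and the well-foundedness and finiteness of runs. Since the automaton freely ascends and descends, I must argue that every branch of an accepting run terminates at $q_f$ and that the ``definition in scope'' located by ascending is well-defined. I would address this by exploiting the syntactic discipline that every relation is defined before it is used, so that ascending from a use of $P_i$ halts at the \emph{innermost} enclosing $\dblqt{\Let\,P_i}$ ancestor, which is exactly the definition whose fixed point the invariant refers to; the shadowing of relation symbols is thus handled uniformly. Finiteness of runs follows because the only exits from definition-unfolding are the $\tru$/$\fals$ leaf transitions at counter $0$ (or at atomic symbols), and the counter strictly decreases --- and is bounded by $n^r$ --- along every chain of unfoldings, so no branch can unfold definitions forever. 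Establishing this termination carefully is what makes the nested induction well-founded and the two invariants meaningful.
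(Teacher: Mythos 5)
Your proposal follows the same route as the paper's own correctness argument (which appears only as a sketch in the appendix): strengthen the lemma to an exact characterization of the language of every state, with a down state $\la\gamma,j,P_i\ra$ reading the current definition at its $j$-th approximant $F_\psi^j(\emptyset)$ and every other defined relation at its least fixed point, and discharge the recursive case via $F_\psi^j(\emptyset)=F_\psi(F_\psi^{j-1}(\emptyset))$ together with stabilization of the ascending chain within $n^r$ steps. The invariants you state for down states, marked states, and up states are the right ones and are consistent with the transition table (in particular, the counter is left unchanged at the use of $P_i$ and decremented only at the $\Let$ node, exactly as you describe).

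The one step that does not survive scrutiny as written is the well-founded measure. You take the counter $j$ as the outer induction measure and later assert that ``the counter strictly decreases \ldots along every chain of unfoldings,'' but case (4) \emph{resets} the counter to $n^r$ whenever a use of a relation $P_l$ different from the current definition is encountered, so the counter can increase along a branch of a run and an induction on $j$ alone is not well-founded. The repair is available from facts you already invoke: definition bodies $\psi$ cannot contain $\Let$, and every relation must be defined before it is used, so the $\Let$ node for $P_l$ is a \emph{strict ancestor} of the $\Let$ node for the current definition $P_i$; hence every counter reset strictly decreases the nesting depth of the definition in scope. Inducting lexicographically on (nesting depth of the current definition's $\Let$ node, counter value, size of the subformula being read) makes both the invariant proof and the finiteness of accepting runs go through. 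With that adjustment your argument is complete and agrees with the paper's.
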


\subsubsection{Decision Procedure.}
We define automata $\aut_A$ for each input structure $A$ as described
in the construction, with the proviso that negative structures have
initial states $I=\{\la\tilde{\emptyset},0,\bot\ra\}$. We take the
product of these automata as before and convert the resulting
automaton to a one-way nondeterministic automaton without alternation
by adapting the technique of~\cite{two-way-vardi}. We further take the
product of the nondeterministic automaton with the grammar automaton
$\aut_\grammar$. Checking emptiness of the final automaton, which has
size $\Oo(2^{\mathit{poly}(m n^k k')}|\grammar|)$, gives us the
decision procedure for realizability and synthesis. Again, it is
straightforward to adapt the construction to full $\FOLFPk$ with \ite~
and function terms, giving us:
\begin{theorem}
  \sepsynth{$\FOLFPk$}~is decidable in $\EXP$ for a fixed signature
  and fixed $k,k'\in\Nat$.
  \thmlabel{lfpthm}
\end{theorem}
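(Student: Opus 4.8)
The plan is to mirror the decision procedure established for \thmref{fok}, replacing the one-way alternating automata with the two-way tree automata of \csecref{tree-automata-foklfp} and leaning on the preceding lemma, which states that $\aut_A$ accepts exactly the sentences of (simplified) $\FOLFPk$ that are true in $A$. First I would lift the evaluation automaton from the simplified logic to full $\FOLFPk$ with \ite~ and function terms. As in the remark following \thmref{fok}, this is accomplished by enlarging the state space so that, beyond a partial assignment (or a passed tuple), a state may also record the domain element currently expected for a term under evaluation; the transitions for $f(\many{t})$ and $\iteterm$ then branch over candidate values in $\dom(A)$. For a fixed signature and fixed $k,k'$ (hence fixed arities $r$), each resulting automaton $\aut_A$ still has only $\mathit{poly}(n)$ states, where $n=|\dom(A)|$.

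Next, for each positive structure $A\in\Pos$ I would start $\aut_A$ in the verifying initial state $\la\emptyset,0,\bot\ra$, so that $L(\aut_A)$ is the set of sentences true in $A$; for each negative structure $B\in\Neg$ I would instead use the falsifying initial state $\la\tilde{\emptyset},0,\bot\ra$, so that $L(\aut_B)$ is the set of sentences false in $B$. I would then combine all structure automata into a single two-way automaton $\aut_\cap$ whose language is their intersection, which is precisely the set of separators for $\Pos$ and $\Neg$. Exactly as in the $\FOk$ construction, alternation lets me form this intersection by dispatching parallel copies from a fresh initial state rather than by a Cartesian product, so the state count of $\aut_\cap$ is only $\Oo(m\cdot\mathit{poly}(n)\cdot k')$ for $m=|\Pos\uplus\Neg|$ input structures.

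I would then convert $\aut_\cap$ to an equivalent one-way nondeterministic tree automaton using the two-way-to-one-way translation of~\cite{two-way-vardi}, which incurs a single exponential blowup, and take its product with the grammar automaton $\aut_\grammar$ (already nondeterministic and linear in $|\grammar|$). The resulting automaton $\aut$ has size $\Oo(2^{\mathit{poly}(m n^k k')}|\grammar|)$ and satisfies $L(\aut)\neq\emptyset$ iff there is a separator in $L(\grammar)$. Checking emptiness in time linear in the size of $\aut$ decides realizability, and, when nonempty, the emptiness algorithm returns an accepted tree, i.e., a separating formula, which solves synthesis. For fixed $k,k'$ this runs in $\EXP$.

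I expect the main obstacle to lie not in this outer assembly, which is routine once the $\FOk$ template is in place, but in the correctness of the two-way evaluation automaton, i.e., the preceding lemma. The delicate part is the counter-based simulation of the least fixed point: one must argue that iterating $F_\psi$ from $\emptyset$ stabilizes within $n^r$ steps by monotonicity over the lattice $2^{\dom(A)^r}$, that the counter is decremented correctly each time a definition is re-read, and that the verifying and falsifying markings propagate faithfully through negations and through nested or shadowed recursive definitions. Establishing the biconditional that a tree is accepted if and only if the corresponding sentence is true in $A$ — covering both the up-phase (membership in a relation approximant $F_\psi^j(\emptyset)$) and the down-phase (formula evaluation) — is where the real work sits; granting that lemma, the $\EXP$ bound follows immediately.
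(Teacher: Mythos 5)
Your proposal matches the paper's own proof essentially step for step: the same two-way evaluation automata with verifying/falsifying initial states for positive/negative structures, the same linear-size intersection via alternation, the same two-way-to-one-way conversion following \cite{two-way-vardi}, the product with $\aut_\grammar$, and emptiness checking on an automaton of size $\Oo(2^{\mathit{poly}(m n^k k')}|\grammar|)$, with the extension to \ite~and function terms handled exactly as the paper indicates. You also correctly locate the real work in the correctness lemma for the counter-based least fixed point simulation, which is where the paper defers to its appendix as well.
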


\section{Term Synthesis}
\label{sec:term-synthesis}

In this section we show that it is possible to use the same general
approach to \emph{synthesize terms}. We examine the term synthesis
problem for a logic similar to $\FOLFP$ with recursively-defined
functions (adaptations for other logics are similar). Much of the
construction is similar to the construction for $\FOLFP$
(\csecref{recursion}). We give the main idea by showing how the
evaluation automaton's state space changes, and we describe at a high
level some new transitions related to terms.

\subsection{First-Order Logic with Least Fixed Points and Recursive
  Functions}
\label{sec:term-synthesis-with}

We consider term synthesis for a logic with least fixed point
relations and \emph{recursively-defined functions}. Recall the list
merge example from \figref{example4}. Given a set of (unlabeled)
structures that each interpret the constants
$\mathit{in_1},\ldots,\mathit{in_d}$, and $\mathit{out}$ as an
input-output example for some functional relationship (e.g.,
$\mathit{merge(\mathit{in}_1,\mathit{in_2})} = \mathit{out}$), the
goal is to decide whether there is a term $t$ that evaluates to
$\mathit{out}$ in each structure, and to synthesize one if it
exists. We explore this problem for the language $\FOTERM$ (syntax in
\figref{forec}), which is similar to $\FOLFP$ and, additionally, has
recursively-definable functions and yields \emph{terms} rather than
formulas.

\begin{figure}
  \begin{minipage}[t]{0.4\linewidth}
    \begin{align*}
      \varphi\quad \Coloneqq\quad & \quad\,\Let\,\, g(\many{x}) =_{\lfp} t \,\,\In\,\,
           \varphi \\ &\mid\,\,\Let\,\, P(\many{x}) =_{\lfp} \psi \,\,\In\,\,
                        \varphi \\ &\mid\,\, t
    \end{align*}
  \end{minipage}\hspace{0.2in}%
  \begin{minipage}[t]{0.5\linewidth}
    \begin{align*}
      \psi \quad\Coloneqq\quad &\,\,\exists
             x.\psi \,\,\mid\,\, \forall x.\psi \,\,\,\,\mid\,\,\,\, R(\many{t}) \,\,\,\mid\,\,
             P(\many{t}) \\
             &\quad\quad\,\,\,\,\mid\,\, \psi \vee \psi \,\,\mid\,\,
             \psi \wedge \psi \,\,\mid\,\, \neg \psi \\
      t \quad\Coloneqq\quad &\,\, x \,\,\mid\,\, c \,\,\mid\,\, f(\many{t}) \,\,\mid\,\,
          \IfThenElse{\psi}{t}{t'}
    \end{align*}
  \end{minipage}
  \caption{Syntax for $\FOTERM$ with $\varphi$ the starting
    nonterminal. The language is similar to $\FOLFP$, but it yields
    terms rather than formulas and adds (recursively) definable
    functions $g$.}
  \figlabel{forec}
\end{figure}

The semantics for $\FOTERM$ coincides with $\FOLFP$ on shared
features. The only novelty is in how we define the semantics of
recursive functions and, in particular, how we account for functions
that may not be total on the domain of the structure $A$. We choose to
interpret them as \emph{partial functions} on $\dom(A)$ and to
interpret formulas in a $3$-valued logic. Note that in this setting it
is also simple and convenient to allow input structures to interpret
function symbols from the signature as partial functions. For
instance, in the $\mathit{merge}$ example from ~\figref{example4}, we
might prefer to specify $\mathit{head}$ as a partial function that is
only defined on elements that denote lists. We give here a high-level
description of the semantics for recursive functions; details can be
found in~\appref{fullversion}.

\paragraph{\textsf{Semantics for Recursive Functions.}}

A defined function $g$ with $\arity{g}=d$ is interpreted as a partial
function $g^A : \dom(A)^d\rightharpoonup \dom(A)$, which is a member
of the bottomed partial order
$\mathcal{O} =\la \dom(A)^d\rightharpoonup \dom(a), \sqsubseteq,
\bot\ra$, where $\bot$ is undefined everywhere and $f\sqsubseteq f'$
holds if for all $\many{a}\in\dom(A)^d$, whenever $f(\many{a})$ is
defined, then $f'(\many{a})$ is defined and
$f(\many{a})=f'(\many{a})$. This partial order has finite height since
all structures here are finite. Suppose $g$ is defined recursively
using a term $t(x_1,\ldots,x_d,g)$. We associate a monotone function
$F_t : \mathcal{O}\rightarrow\mathcal{O}$ to the defining term $t$,
and let $g^A$ be the least fixed point of $F_t$, which can easily be
shown to exist and to be equal to the stable point of the chain
$\bot\sqsubseteq F_t(\bot)\sqsubseteq\cdots\sqsubseteq F^i_t(\bot) =
F^{i+1}_t(\bot)$, with $i\le |\dom(A)^d|$. The syntax and semantics
for terms in $\FOTERM$ guarantee monotonicity of the function $F_t$
for any term $t$. It follows that least fixed points exist for each
definition. In general, however, care is needed to ensure that the
least fixed point is total on $\dom(A)^d$, and whether or not this is
so depends on the definition and the structure $A$.

\subsection{Automaton for Evaluating Terms with Recursive Functions}
We now sketch the main idea for defining an automaton that accepts all
terms which evaluate to a given domain element $a$ in a structure
$A$. We consider the language $\FOTERMk$, which restricts $\FOTERM$ to
$k$ first-order variables and $k'$ definable relations and functions
from $P=\{P_1,\ldots,P_{k_1}\}$ and $F=\{g_1,\ldots,g_{k_2}\}$,
respectively, with $k'=k_1+k_2$. A ranked alphabet for
$\alphabet_{\FOTERMk}$ extends $\alphabet_{\FOLFPk}$ in an obvious
way. Note that here we consider input trees representing arbitrarily
deep logical terms, in contrast to our earlier simplifications.

Fix a structure $A$ of size $n = |\dom(A)|$ and fix a domain element
$a\in\dom(A)$. We want to define a two-way automaton
$\aut_A = \la Q_{\mathsf{TERM}}, \alphabet_{\FOTERMk},I,\delta,F\ra$
that accepts the set of closed terms $t\in\FOTERMk$ such that
$t^A = a$. For simplicity, assume $r$-ary functions and relations
only. The states are as follows:
\begin{align*}
  Q_{\mathsf{TERM}}&\coloneq
                     \mathsf{EvalForm}\cup\mathsf{EvalTerm}\cup\{q_f\} \\
  \mathsf{EvalForm}&\coloneq \left(\mathit{Dual}(\mathit{Assign})\cup \mathit{Dual}(\mathit{Val})\right)\times \mathit{Count}\times
                        \mathit{Defn} \\
  \mathsf{EvalTerm}&\coloneq (\mathit{Assign} \cup \mathit{Val})\times\mathit{Count}\times\mathit{Defn}\times\dom(A)
\end{align*}
The sets $\mathit{Assign} \triangleq V\rightharpoonup\dom(A)$,
$\mathit{Val} \triangleq \dom(A)^r$,
$\mathit{Dual}(X) \triangleq \{x,\tilde{x} \,\mid\, x\in X\}$,
$\mathit{Count} \triangleq \{0,\ldots,n^r\}$, and
$\mathit{Defn} \triangleq \{\bot,P_1,\ldots,g_{k_2}\}$ serve the same
purposes as before. Notice that the automaton has a new category of
states, namely, those for evaluating terms. The transitions related to
formulas are very similar to those for $\FOLFP$. Below, we give three
representative cases for the transition function $\delta$. Let
$\gamma\in\mathit{Assign},j\in\mathit{Count}$, and
$g,\mathit{defn}\in\mathit{Defn}$.

\paragraph{\textsf{Reading variables.}}
The automaton is reading $\dblqt{x}$ and verifying that the input tree
evaluates to $a\in\dom(A)$. It only needs to check that the variable
$x$ is mapped to $a$ in the current assignment $\gamma$:
\begin{align*}
  \delta(\la\gamma,j,\mathit{defn},a\ra, x)  \,\,&=\,\,
                                                          \begin{cases}
                                                            \tru & \gamma(x)\downarrow,\,
                                                            \gamma(x)
                                                            = a \\
                                                            \fals &
                                                            \text{
                                                              otherwise
                                                            }
                                                            \end{cases}
\end{align*}

\paragraph{\textsf{Reading \ite~ terms.}} The automaton is reading
$\dblqt{\iteterm}$ and verifying that the input tree evaluates to
$a\in\dom(A)$. It must either (1) verify the condition formula (first
child) \emph{and} verify that the term in the ``then'' branch (second
child) evaluates to $a$ or (2) falsify the condition formula
\emph{and} verify that the term in the ``else'' branch (third child)
evaluates to $a$:
\begin{align*}
  \delta(\la\gamma,j,\mathit{defn},a\ra, \iteterm)  \,\,&=\,\, (\la\gamma,j,\mathit{defn}\ra,1)\wedge
                                          (\la\gamma,j,\mathit{defn},a\ra,2) \vee
                                          (\la\tilde{\gamma},j,\mathit{defn}\ra,1)\wedge
                                                          (\la\gamma,j,\mathit{defn},a\ra,3)
\end{align*}

\paragraph{\textsf{Reading a defined function $g$.}} The automaton is
reading $\dblqt{g}$ and the current definition is set to $g'$, with
$g'\neq g$, analogous to case (1) from
~\csecref{tree-automata-foklfp}. The automaton ``guesses and checks''
that the argument terms for $g$ evaluate to $\many{a}$ and ascends to
the definition of $g$ with a fresh counter set to $n^r$ to verify that
$g$ evaluates to $a$ when applied to $\many{a}$:
\begin{align*}
  \delta(\la\gamma,j,g',a\ra, g) \quad &=
                                    \bigvee_{\substack{\many{a}\,\in\, \mathit{Val}}}
                                          \left((\la\many{a},n^r,g,a\ra,-1)
                                         \wedge \bigwedge_{i\in
                                    [r]}(\la\gamma,j,g',a_i\ra,i)\right)
                                    \quad\quad\quad (g'\neq g) \\
\end{align*}

\vspace{-0.1in} The rest of the transitions follow along these lines
and are similar to those for $\FOLFP$. There is a single initial state
with $I=\{\la \emptyset,0,\bot,a\ra\}$, and the acceptance condition
is again reachability with $F=\{q_f\}$.  
Similar reasoning to that for the $\FOLFP$ construction can be used to
show:

\begin{theorem}
  \termsynth{$\FOTERMk$}~ is decidable in $\EXP$ for a fixed signature
  $\tau$ and fixed $k,k'\in\Nat$. 
\end{theorem}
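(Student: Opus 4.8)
The plan is to mirror the proof architecture already developed for
$\FOLFPk$ in \csecref{lfpsepsynth}, adapting the two-way alternating
tree automaton construction from the separability setting to the term
synthesis setting. The key conceptual difference is that we are no
longer asking whether a structure \emph{satisfies} a formula, but
rather whether a closed term \emph{evaluates} to a prescribed domain
element. Accordingly, for each input structure $A_i$ and the target
element $\mathit{out}^{A_i}\in\dom(A_i)$, I would build a two-way
automaton $\aut_{A_i}$ over $\alphabet_{\FOTERMk}$ whose language is
precisely the set of closed terms $t$ with $t^{A_i} = \mathit{out}^{A_i}$.
The states and transitions are exactly those sketched in the section:
the $\mathsf{EvalTerm}$ states carry an extra component
$a\in\dom(A)$ recording the domain element the current subterm is
expected to produce, and the $\mathsf{EvalForm}$ states reuse the
$\FOLFP$ machinery verbatim for evaluating the Boolean conditions that
occur inside $\iteterm$ and recursive relation definitions.

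The first step is to establish the analogue of \lemref{foklem} for
this construction, namely that $L(\aut_A, \la\gamma,j,\mathit{defn},a\ra)$
is exactly the set of terms that evaluate to $a$ under assignment
$\gamma$ (at the appropriate fixed-point stage $j$ for the current
definition). This is a structural induction on the input term tree,
with the subtlety that the $3$-valued/partial-function semantics from
\csecref{term-synthesis-with} must be threaded through: an accepting
run must witness \emph{definedness} of every subterm it evaluates, so
a term whose value is undefined in $A$ will have no accepting run from
any $\mathsf{EvalTerm}$ state, which is exactly what we want since such
a term cannot equal $\mathit{out}^A$. The least-fixed-point reasoning
for recursive functions reuses the bounded-iteration argument: the
chain $\bot\sqsubseteq F_t(\bot)\sqsubseteq\cdots$ stabilizes within
$|\dom(A)^d|$ steps, so the counter component from $\mathit{Count}$
suffices to track the stage, just as in the relational case. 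The
second step is the standard pipeline: form the product
$\bigtimes_{i}\aut_{A_i}$, intersect with the grammar automaton
$\aut_\grammar$ (ensuring the grammar forbids $\mathit{out}$), convert
the two-way alternating automaton to a one-way nondeterministic
automaton via~\cite{two-way-vardi} with an exponential blowup, and
decide emptiness in linear time, extracting a witness term when
nonempty.

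The complexity bookkeeping follows the same pattern as
\thmref{lfpthm}. The state count of a single $\aut_A$ is
$\Oo(n^k\cdot n^r\cdot (k_1+k_2)\cdot n) = n^{\Oo(k)}(k_1+k_2)$ once
we absorb the fixed arities into the signature, and taking the product
over $m$ structures gives $\Oo(m n^{\Oo(k)}(k_1+k_2))$ states. The
two-way-to-one-way conversion contributes the single exponential, so
the final automaton has size $\Oo(2^{\mathit{poly}(m n^k (k_1+k_2))}|\grammar|)$,
matching the bound recorded in \tableref{results}. Since $k$, $k'$,
and the signature are fixed, this is $\EXP$ in $m n + |\grammar|$.

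The main obstacle I anticipate is the correctness proof for the
\emph{partial}-function semantics under \emph{nested recursive
function calls}. In the relational case, an occurrence $P_i(\many{x})$
triggers an ascent to the definition and a single membership check
against the $j$th fixed-point approximant. For terms, reading a
defined function $g$ requires the automaton to \emph{guess} the value
$\many{a}$ to which the argument subterms evaluate (the disjunction
over $\many{a}\in\mathit{Val}$ in the transition for $g$) and then to
both verify that the arguments really produce $\many{a}$ and that
$g(\many{a})=a$ via an ascent. Proving that the guessing is sound and
complete — that an accepting run exists if and only if the term's
genuine evaluation yields $a$, with all intermediate values defined —
requires carefully relating the counter-indexed approximants $F^j_t(\bot)$
to the automaton's up/down traversal, and ensuring that a divergent or
undefined evaluation (e.g.\ a recursion that never stabilizes to a
defined value on the relevant input) cannot be ``faked'' by a finite
accepting run. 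I expect this to reduce, by induction on the
fixed-point stage, to the monotonicity of $F_t$ and the finite height
of the order $\mathcal{O}$, both already established in
\csecref{term-synthesis-with}, but the careful handling of definedness
is where the real work lies.
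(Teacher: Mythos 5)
Your proposal follows essentially the same route as the paper: a two-way alternating tree automaton per structure whose $\mathsf{EvalTerm}$ states carry the expected domain element, the guess-and-check transitions for defined functions with counter-indexed fixed-point stages, and the standard product/grammar-intersection/two-way-to-one-way/emptiness pipeline with the matching $\Oo\left(2^{\mathit{poly}\left(m n^{k}(k_1+k_2)\right)}|G|\right)$ bound. Your identification of the definedness bookkeeping under the $3$-valued semantics as the locus of the real work is also consistent with where the paper's appendix spends its effort.
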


The idea sketched here can be easily added to earlier constructions
without increasing complexity in order to solve synthesis for logics
with the full gamut of terms and, in particular, to solve term
synthesis for $\FOk$ in the same complexity as the separability
problem.


\section{Lower Bounds}
\seclabel{lower}

Here we present lower bounds arguing the upper bound complexity we
obtain on certain parameters of the problem is indeed tight. Given the
number of logics and variants (and problems for separators, queries,
and terms), we focus on lower bounds only for $\FOk$; of course, these
also give lower bounds for more expressive languages and variants.

\subsection{A Lower Bound for Separability in $\FOk$}

The upper bound for separability in $\FOk$ from
~\csecref{upper-bounds} is linear in the size of the grammar and
exponential in $mn^k$, where $m$ is the number of input structures and
$n$ is the maximum size of any input structure. Hence, for a fixed
$k$, the algorithm we propose is exponential time in the size of the
input. We show a matching lower bound (this can be adapted for queries
and terms as
well). 

\begin{theorem}
  \sepreal{$\FOk$}~is $\EXP$-hard for any fixed $k > 4$.
  \thmlabel{fokhardness}
\end{theorem}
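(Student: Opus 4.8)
The plan is to prove hardness by a polynomial-time many-one reduction from an $\EXP$-complete problem into $\sepreal{\FOk}$. Since the upper bound of \thmref{fok} is exponential in $mn^{k}$ — equivalently, exponential in the total size of the input structures once $k$ is fixed — the right source is a problem whose witnesses are objects of exponential size and whose structure is \emph{alternating} (so that it can line up with the alternation already present in the evaluation automaton of \lemref{foklem}). I would reduce from an alternating $\EXP$-complete problem, concretely an exponential two-player \emph{tiling game} (equivalently, acceptance by an alternating polynomial-space Turing machine, using $\mathrm{APSPACE}=\EXP$). Fix such a game instance (or machine $M$ and input $w$ of length $n$); the goal is to build, in time polynomial in $n$, a signature, structures $\Pos,\Neg$, and a grammar $G$ over $\FOk$ so that a separating sentence in $L(G)$ exists if and only if the existential player wins (resp. $M$ accepts $w$).

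The encoding idea is that the single synthesized sentence will be forced, by $G$, to encode a winning strategy / accepting computation tree. A configuration (one row of the tiling, or one tape contents) has polynomial length, so a \emph{position} within a configuration is named by a tuple of domain elements and addressed through \emph{reused} variables rather than fresh ones — this is exactly where the bounded-variable restriction is essential, and the number of variables needed by the gadget that simultaneously refers to a cell, its horizontal and vertical neighbors, and performs the binary-successor comparison on positions is what forces $k>4$. The existential player's moves are mirrored by $\exists$/$\orsymb$ nodes and the universal player's moves by $\forall$/$\andsymb$ nodes, so one bounded-variable formula captures an entire alternating play. The static data of the instance — the tile set, the control and alphabet symbols, the horizontal/vertical compatibility relations, and the addressing machinery — are provided as relations in $\tau$ interpreted over $\Pos$ and $\Neg$; the key point is that the game tree has exponentially many nodes, yet a polynomial-size formula can describe it because local consistency checks are reused across all positions via variable reuse, so that deciding \emph{existence} of a separator genuinely requires solving the exponential game and cannot collapse to mere model checking of one fixed polynomial-size formula.

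Concretely I would proceed as follows. First I would build the local gadget: given names for a cell and its relevant neighbors in adjacent configurations, a fixed low-quantifier formula over $\tau$ expresses that the local transition/compatibility is legal, and I would verify it closes under variable reuse within $k>4$ variables. Next I would lift this to whole configurations by universally quantifying over positions, reusing variables so the per-step check stays inside the budget. Then I would let $G$ permit exactly the formulas that alternate $\exists$/$\orsymb$ and $\forall$/$\andsymb$ blocks in the pattern dictated by the game's alternation, interleaved with the fixed local-consistency checks and terminating in a win/acceptance test. Finally I would use $\Pos$ and $\Neg$ to pin down the semantics: a positive structure makes a genuine winning-strategy formula true, while negative structures falsify formulas that would ``separate'' without encoding a correct play — illegal transitions, skipped positions, or mismatched configuration boundaries are each made detectable and false on some $B\in\Neg$.

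The main obstacle is \textbf{faithfulness} in both directions, and especially the hard direction: showing that \emph{every} separator in $L(G)$ must encode a correct winning strategy, so that no ``cheating'' formula — exploiting the infinitely many inequivalent sentences available even with only $k$ variables — can separate $\Pos$ and $\Neg$ without a real accepting computation behind it. This requires co-designing $G$ and $\Neg$ tightly enough that every form of degeneracy is refuted on some negative structure, while simultaneously keeping all structures of polynomial size and the variable count a fixed constant $\le k$. Establishing that the alternation of the game maps onto the $\forall/\exists$ structure of the forced formulas in a \emph{sound and complete} way (so the reduction preserves the winner/acceptance exactly), and checking that the whole construction is computable in polynomial time, are the remaining delicate but routine verifications.
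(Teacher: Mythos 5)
Your choice of source problem (acceptance for an alternating polynomial-space machine, i.e., $\mathsf{APSPACE}=\EXP$) and your high-level plan --- design $G$ so that every sentence in $L(G)$ is forced to be the skeleton of a computation tree, with the synthesizer's grammar choices playing the existential player's moves and universal branching realized inside the formula --- do match the paper's reduction. But the proposal is missing the one gadget the whole reduction turns on: how a formula with a \emph{fixed} number of variables can carry an entire length-$s$ configuration from the point where it is generated to the point where the next configuration is generated, so that the transition relation can be checked cell by cell. Your plan to ``lift the local gadget to whole configurations by universally quantifying over positions'' does not work as stated, because the configurations are not stored in any structure --- they exist only implicitly in the syntax of the formula being synthesized --- and $k$ variables can hold only $k$ domain elements at a time, so no single evaluation of the formula can remember all $s$ cells of the previous configuration while emitting the next one. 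The paper resolves this by producing $s$ \emph{positive} structures $A_1,\ldots,A_s$, where $A_i$ contains a cycle of length $s$ with a marked element at distance $i$ from a start node; the grammar walks a pointer variable around the cycle as it emits the cells of each configuration, and structure $A_i$ latches the three-cell window centered at cell $i$ into a constant number of variables exactly when the pointer passes its marked element. Because the same sentence is evaluated simultaneously on all $s$ structures, each structure privately remembers and verifies its own window, and jointly they verify the whole transition. Without this (or an equivalent) device for distributing the memory of a configuration across the input structures, the reduction does not go through.

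A secondary issue: the paper's construction uses no negative structures and needs no argument that ``cheating'' separators are refuted by $\Neg$; faithfulness is obtained purely by making $L(G)$ so rigid that every generated sentence \emph{is} a candidate computation tree, with truth on all $s$ positive structures equivalent to acceptance. Your proposed use of $\Neg$ to exclude degenerate separators also has its polarity reversed: a separator must be \emph{false} on every negative structure, so arranging for an illegal formula to be ``false on some $B\in\Neg$'' does not exclude it --- you would need it to be \emph{true} on some negative structure, or false on some positive one. As written, this part of the plan would not close the hard direction of the equivalence.
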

\begin{proof}[Proof.]
  The reduction is from the word acceptance problem for an alternating
  polynomial space Turing machine. Given an alternating Turing machine
  $M$ and an input $w$, with $M$ using space $s$ that is polynomial in
  $|w|$, the reduction yields $s$ positively labeled first-order
  $\tau$-structures $A_1,\ldots,A_s$, with $|\dom(A_i)|=\Oo(s)$, and a
  grammar $G$ of size polynomial in $|\la M, w\ra|$. The signature
  $\tau$ depends only on $M$. Each structure consists of two parts:
  (1) a cycle of length $s$ and (2) a gadget encoding the transition
  relation for $M$ along with unique constants for tape symbols from
  the machine's tape alphabet $\Gamma$. (We use constants that encode
  the machine head and state, i.e.,
  $\Gamma'=\Gamma\times Q\cup\Gamma$.)

  Let us consider the purposes of the \emph{structures} and the
  \emph{grammar}, which are complementary. The structures can be
  viewed as distinct copies of the machine $M$ that are used to verify
  that a computation tree for $M$ on $w$, encoded in a sentence from
  the grammar, obeys the transition relation for each of the $s$ tape
  cells. The grammar can be viewed as a skeleton of \emph{computation
    trees} for $M$ on input $w$. A particular sentence $\varphi$ from
  the grammar $G$ encodes a computation tree of potentially
  exponential depth, and it asserts many things about the structure on
  which it is interpreted. When understood together across all
  structures, the truth of the assertions is equivalent to the
  computation tree being an \emph{accepting computation tree}, i.e.,
  that successive configurations follow the transition relation and
  the final configuration is accepting. The main trick is to emulate
  in the grammar the generation of each successive machine
  configuration in a way that allows checking the transition
  relation. This is not entirely straightforward because $s$ tape
  symbols cannot all be stored at once in $k$ variables ($k$ is
  fixed). Here is some intuition for how we accomplish this.


  \tikzstyle{every node}=[circle, text=black, draw, inner sep=2pt, minimum width=5pt]

\begin{figure}[H]
  \centering
  \begin{tabular}[t]{c c c c}
  \scalebox{0.8}{
    \centering
    \begin{tikzpicture}[thick,scale=0.8]
      \node[label=left:$\star$,shape=circle,draw=black] (0) at (0,1) {} ;
      \node[shape=circle,draw=black,fill=zgray]
      (1) at (0.27,1.669) {} ; \node[shape=circle,draw=black] (2) at
      (1,2) {} ; \node[shape=circle,draw=black] (3) at (1.745,1.658)
      {} ; \node[shape=circle,draw=black] (4) at (2,1) {} ;
      \node[shape=circle,draw=black] (5) at (1.78,0.39) {} ;
      \node[shape=circle,draw=black] (6) at (1,0) {} ;
      \node[shape=circle,draw=black] (7) at (0.265,0.337) {} ;
      \node[draw=none] at (1,-0.5) {$A_1$};

      \draw[fill=black] [-to] (0) edge[black, bend left=10] (1) ;
      \draw[fill=black] [-to] (1) edge[black, bend left=10] (2) ;
      \draw[fill=black] [-to] (2) edge[black, bend left=10] (3) ;
      \draw[fill=black] [-to] (3) edge[black, bend left=10] (4) ;
      \draw[fill=black] [-to] (4) edge[black, bend left=10] (5) ;
      \draw[fill=black,loosely dotted,very thick] (5) edge[black, bend left=10] (6) ;
      \draw[fill=black] [-to] (6) edge[black, bend left=10] (7) ;
      \draw[fill=black] [-to] (7) edge[black, bend left=10] (0) ;
    \end{tikzpicture}
    \vspace{0.1in}
  } &
    \scalebox{0.8}{
    \centering
    \begin{tikzpicture}[thick,scale=0.8]
      \node[label=left:$\star$,shape=circle,draw=black] (0) at (0,1) {} ;
      \node[shape=circle,draw=black] (1) at (0.27,1.669) {} ;
      \node[shape=circle,draw=black,fill=zgray] (2) at (1,2) {} ;
      \node[shape=circle,draw=black] (3) at (1.745,1.658) {} ;
      \node[shape=circle,draw=black] (4) at (2,1) {} ;
      \node[shape=circle,draw=black] (5) at (1.78,0.39) {} ;
      \node[shape=circle,draw=black] (6) at (1,0) {} ;
      \node[shape=circle,draw=black] (7) at (0.265,0.337) {} ;
      \node[draw=none] at (1,-0.5) {$A_2$};

      \draw[fill=black] [-to] (0) edge[black, bend left=10] (1) ;
      \draw[fill=black] [-to] (1) edge[black, bend left=10] (2) ;
      \draw[fill=black] [-to] (2) edge[black, bend left=10] (3) ;
      \draw[fill=black] [-to] (3) edge[black, bend left=10] (4) ;
      \draw[fill=black] [-to] (4) edge[black, bend left=10] (5) ;
      \draw[fill=black,loosely dotted,very thick] (5) edge[black, bend left=10] (6) ;
      \draw[fill=black] [-to] (6) edge[black, bend left=10] (7) ;
      \draw[fill=black] [-to] (7) edge[black, bend left=10] (0) ;
    \end{tikzpicture}
    \vspace{0.1in}
      } &
    \scalebox{0.8}{
    \centering
    \begin{tikzpicture}[thick,scale=0.8]
      \node[draw=none] (invisible1) at (0,2) {} ;
      \node[draw=none] (invisible2) at (0,0) {} ;
      \node[draw=none] (0) at (0.2,1.5) {} ;
      \node[draw=none] (1) at (1,1.5) {} ;
      \draw[fill=black,very thick,loosely dotted] (0) edge[black] (1) ;
    \end{tikzpicture}
    \vspace{0.1in}
      }
   &
  \scalebox{0.8}{
    \centering
    \begin{tikzpicture}[thick,scale=0.8]
      \node[label=left:$\star$,shape=circle,draw=black,fill=zgray] (0) at (0,1) {} ;
      \node[shape=circle,draw=black] (1) at (0.27,1.669) {} ;
      \node[shape=circle,draw=black] (2) at (1,2) {} ;
      \node[shape=circle,draw=black] (3) at (1.745,1.658) {} ;
      \node[shape=circle,draw=black] (4) at (2,1) {} ;
      \node[shape=circle,draw=black] (5) at (1.78,0.39) {} ;
      \node[shape=circle,draw=black] (6) at (1,0) {} ;
      \node[shape=circle,draw=black] (7) at (0.265,0.337) {} ;
      \node[draw=none] at (1,-0.5) {$A_s$};

      \draw[fill=black] [-to] (0) edge[black, bend left=10] (1) ;
      \draw[fill=black] [-to] (1) edge[black, bend left=10] (2) ;
      \draw[fill=black] [-to] (2) edge[black, bend left=10] (3) ;
      \draw[fill=black] [-to] (3) edge[black, bend left=10] (4) ;
      \draw[fill=black] [-to] (4) edge[black, bend left=10] (5) ;
      \draw[fill=black,loosely dotted,very thick] (5) edge[black, bend left=10] (6) ;
      \draw[fill=black] [-to] (6) edge[black, bend left=10] (7) ;
      \draw[fill=black] [-to] (7) edge[black, bend left=10] (0) ;
    \end{tikzpicture}
    \vspace{0.1in}
      }
  \end{tabular}
  \vspace{-0.1in}
  \caption{Structure $A_i$ tracks the window centered on the $i$th
    cell and has a special cycle element (the dark node) of distance
    $i$ from the start of the cycle, denoted $\star$. As the symbols
    of a configuration are produced, the variables $\many{y}$ are
    equated with the current tape window if the cycle pointer is equal
    to the dark element. }
  \figlabel{reduction-figure}
\end{figure}

  Each of the $s$ structures is made to track a distinct window of
  three contiguous tape cells (as well as the previous contents for
  the window). The grammar uses a polynomial-sized gadget of
  nonterminals to iteratively produce the tape cell contents of a
  given configuration. Refer to~\figref{reduction-figure} in the
  following for a picture of how this gadget works over each
  structure. In each iteration, the grammar moves a $\mathit{ptr}$
  variable along the cycle of size $s$. If $\mathit{ptr}$ is equal to
  a special element, filled dark in~\figref{reduction-figure}, then
  the grammar requires the current tape cell's contents (and
  neighbors) to be stored in variables by asserting an equality. Each
  structure is made to track a unique window by differently
  interpreting the distance between a starting node, denoted $\star$,
  and the special dark node. The grammar $G$ ensures the following
  invariant holds for all sentences $\varphi\in L(G)$: if we evaluate
  $\varphi$ in $A_i$, then upon evaluating the subformula of $\varphi$
  that picks a symbol for cell $i+1$, window $i$ of the previous
  configuration is stored in variables $x_1,x_2,x_3$ and window $i$ of
  the current configuration is stored in variables $y_1,y_2,y_3$. The
  grammar checks that successive windows obey the transition relation
  by asserting the relation
  $\delta(\mathit{choice},\many{x},\many{y})$, where $\delta$ encodes
  the transition relation for $M$ and $\mathit{choice}\in\{0,1\}$
  encodes which of two transitions for the alternating machine is
  being verified.

  More details can be found in~\appref{fullversion}.
\end{proof}

\begin{theorem}
  \querysepreal{$\FOk$}~is $\EXP$-hard for fixed $k > 4$.
\end{theorem}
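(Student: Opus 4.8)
The plan is to reduce from \sepreal{$\FOk$}, which \thmref{fokhardness} shows is $\EXP$-hard for every fixed $k > 4$; since the reduction will be polynomial-time and will preserve the variable bound $k$, this immediately yields $\EXP$-hardness of \querysepreal{$\FOk$} for the same range of $k$. The guiding idea is that a sentence is nothing but a degenerate query: the truth value of a sentence $\psi$ on a structure $A$ can be read off from the answer set of a unary query that holds of \emph{every} domain element exactly when $A \models \psi$. So I would encode ``$\psi$ is true on $A$'' as ``the answer set of the query on $A$ is all of $\dom(A)$'' and ``$\psi$ is false on $A$'' as ``the answer set is empty.''

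Concretely, given a separability instance $\langle \Pos, \Neg, G\rangle$ I would build a query instance as follows. Extend the signature with one fresh unary predicate $U$, and in every structure interpret $U$ as the entire domain (this predicate is never used by $G$). The structures of the query instance are the structures of $\Pos \uplus \Neg$ thus extended. Set $r = 1$, and take the required answer set to be $\mathit{Ans}_A = \dom(A)$ for each $A \in \Pos$ and $\mathit{Ans}_B = \emptyset$ for each $B \in \Neg$. For the grammar, introduce a fresh axiom $S'$ together with the single production $S' \to S \wedge U(x_1)$ and keep all productions of $G$; the resulting grammar $G'$ has size $\Oo(|G|)$ and generates exactly the queries $\psi \wedge U(x_1)$ with $\psi \in L(G)$.

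For correctness, observe that any query in $L(G')$ has the shape $\psi \wedge U(x_1)$ with $\psi \in L(G)$ a sentence, and its answer set in a structure $A$ is $\{\, a \in \dom(A) \mid A \models \psi \text{ and } a \in U^A \,\}$, which equals $\dom(A)$ when $A \models \psi$ and $\emptyset$ otherwise. Hence $\psi \wedge U(x_1)$ realizes the prescribed answer sets exactly when $A \models \psi$ for all $A \in \Pos$ and $B \not\models \psi$ for all $B \in \Neg$, i.e., exactly when $\psi$ is a separator in $L(G)$. Thus the query instance is realizable iff the original separability instance is, completing the reduction.

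The one point requiring care---and essentially the only obstacle---is the variable budget: I must not spend an extra variable on the free variable, or I would only obtain hardness for $k > 5$. This is avoided because the free variable $x_1$ occurs only in the conjunct $U(x_1)$, while $\psi$ is a sentence in which every occurrence of $x_1$ is bound; reusing the name $x_1$ inside $\psi$ is harmless in a finite-variable logic, so the whole query still uses only $k$ variables and has exactly one free variable. If one prefers to avoid the auxiliary predicate, $U(x_1)$ may be replaced by the equality atom $x_1 = x_1$; and if the query problem is taken to admit $r = 0$, the reduction degenerates to the observation that a sentence is literally a $0$-ary query whose answer set is $\{()\}$ or $\emptyset$, making separability the special case $r = 0$ of query realizability. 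An alternative route would be to replay the alternating-PSPACE reduction of \thmref{fokhardness} directly, signalling acceptance through answer sets rather than sentence truth, but reducing from separability is cleaner and reuses the established lower bound verbatim.
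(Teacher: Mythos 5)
Your proposal is correct and matches the paper's argument, which is exactly the one-line reduction from \sepreal{$\FOk$} in which positive structures are assigned the full answer set and negative structures the empty answer set. The additional care you take in converting a sentence to a unary query without spending an extra variable (via $U(x_1)$ or $x_1 = x_1$, reusing $x_1$ bound inside the sentence) is a detail the paper leaves implicit, and you handle it correctly.
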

\begin{proof}[Proof.]
  Reduction from \sepreal{$\FOk$}. Positive structures have a full
  query answer set and negative structures have an empty query answer
  set.
\end{proof}

\begin{theorem}
  \sepreal{$\FOLFPk$}~is $\EXP$-hard for fixed $k,k'\in\Nat$ with
  $k > 4$.
\end{theorem}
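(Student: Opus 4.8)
The plan is to give a trivial embedding reduction from \sepreal{$\FOk$}, which is $\EXP$-hard for every fixed $k > 4$ by \thmref{fokhardness}. The essential observation is that, for every $k' \in \Nat$, the logic $\FOk$ is a syntactic fragment of $\FOLFPk$: the grammar in \figref{FOlfp} derives every $\FO$ formula (one simply uses no $\Let\,P(\many{x}) =_{\lfp} \psi\,\In$ binder and never writes a defined symbol $P$), and the semantics of $\FOLFP$ coincides with that of $\FO$ on formulas that contain no least fixed point definitions. Correspondingly, the ranked alphabets satisfy $\alphabet_{\FOk} \subseteq \alphabet_{\FOLFPk}$, so any RTG $G$ over $\alphabet_{\FOk}$ is also an RTG over $\alphabet_{\FOLFPk}$; and since such a $G$ uses none of the fixed-point symbols, $L(G)$ is a set of ordinary $\FOk$ formulas irrespective of which logic we view it in.

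First I would take an arbitrary instance $\la \Pos, \Neg, G\ra$ of \sepreal{$\FOk$} and output the identical triple, now regarded as an instance of \sepreal{$\FOLFPk$}. This map is computable in logarithmic space; indeed it is literally the identity on the input encoding, since the alphabet inclusion is purely notational and neither the structures nor $G$ are altered. Correctness is then immediate: because $L(G)$ denotes the same set of formulas in both logics, and each $\varphi \in L(G)$ is true on exactly the same structures under both semantics, a separating sentence exists for the $\FOk$ instance if and only if one exists for the $\FOLFPk$ instance.

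There is essentially no obstacle here; the only points to verify are routine. One must confirm that the embedding of $\FOk$ into $\FOLFPk$ preserves the meaning of each formula on every finite structure (immediate, since a formula with no least fixed point definitions is evaluated by the standard $\FO$ semantics and \equref{lfpsemantics} is never invoked), and that the parameter constraint is met for every admissible $k'$. The latter holds because the reduction never introduces a definable relation symbol, so it imposes no constraint on $k'$ beyond $k' \in \Nat$, while the inherited constraint $k > 4$ comes directly from \thmref{fokhardness}. Hence \sepreal{$\FOLFPk$} inherits $\EXP$-hardness for all fixed $k > 4$ and all fixed $k' \in \Nat$.
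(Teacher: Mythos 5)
Your proposal is correct and matches the paper's proof, which is exactly the same one-line reduction from \sepreal{$\FOk$} (the paper leaves the routine verification implicit). Your elaboration --- that $\alphabet_{\FOk}\subseteq\alphabet_{\FOLFPk}$, that an RTG over the smaller alphabet is unchanged as an instance, and that the two semantics agree on definition-free formulas --- is precisely the content the paper takes for granted.
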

\begin{proof}[Proof.]
  Reduction from \sepreal{$\FOk$}.
\end{proof}

\subsection{More Lower Bounds and Open Problems}

We can now ask whether separator realizability for $\FOk$ is decidable
in polynomial time if there is only \emph{one} structure. With only
one structure (positively labeled, say) the problem of separability
may seem odd, but checking whether there is \emph{any} sentence in the
grammar $G$ that is true on the single structure is actually a
nontrivial problem; indeed, the grammar is quite powerful.

We can show a general reduction from separator realizability for
multiple structures to realizability for a single structure (but over
a different grammar and for $k'=k+1$). Given a set of positive
structures $\mathit{Pos}$, negative structures $\mathit{Neg}$, and a
grammar $G$ (over $\alphabet_{\FOk}$), we can reduce the realizability
problem to a new realizability problem over a single positive
structure $M$ and a grammar $G'$ (over $\alphabet_{\FO(k+1)}$). The
idea is that $M$ has (a) copies of all the structures in
$\mathit{Pos}$ and $\mathit{Neg}$, (b) a set of elements $i$, one for
each $i \in |\mathit{Pos} \uplus \mathit{Neg}|$, which represent
\emph{structure identifiers}, (c) unary relations $\mathit{Id}$ and
$\mathit{P}$, where $\mathit{Id}$ holds for the set of structure
identifiers $i$ and $\mathit{P}$ holds for the set of identifiers for
structures in $\mathit{Pos}$, and (d) a binary relation
$\mathit{Owns}$ that associates each $i$ with the elements of the copy
of structure $i$ in $M$. The grammar $G'$ is designed to generate only
formulas of the form
$\forall i. \mathit{Id}(i) \rightarrow (\mathit{P}(i) \leftrightarrow
\alpha'(i))$. The formula $\alpha'$ is obtained by taking a formula
$\alpha$ admitted by $G$ and relativizing the quantification so that
it is restricted to those elements that are associated to $i$. For
example, $\forall x. \beta(x)$ is relativized to
$\forall x. \mathit{Owns}(i,x) \rightarrow \beta(x)$ and
$\exists x. \beta(x)$ is relativized to
$\exists x. \mathit{Owns}(i,x) \wedge \beta(x)$. A formula
$\varphi'\in L(G')$ is true in $M$ if and only if there is a formula
$\varphi\in L(G)$ that is a separator for $\mathit{Pos}$ and
$\mathit{Neg}$. Note that the formulas in $G'$ use one extra variable
(namely $i$).

This reduction combined with~\thmref{fokhardness} shows the following:
\begin{theorem}
  For any fixed $k>5$, given a single structure $M$ and a RTG
  $\,\grammar$ over $\FOk$, checking whether there is a formula in
  $L(G)$ that is true in $M$ is $\EXP$-complete.
\end{theorem}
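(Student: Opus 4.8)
The plan is to prove the two halves of $\EXP$-completeness separately: membership in $\EXP$ is immediate from our separability algorithm, while $\EXP$-hardness follows from the single-structure reduction sketched just above, combined with~\thmref{fokhardness}.

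For the upper bound, I would observe that the single-structure question ``is there $\varphi\in L(\grammar)$ with $M\models\varphi$?'' is exactly an instance of \sepreal{$\FOk$} with $\Pos=\{M\}$ and $\Neg=\emptyset$: with no negative structures, a separator is just a sentence of $L(\grammar)$ true in $M$. Hence the product automaton $\aut_\cap$ collapses to the single evaluation automaton $\aut_M$ of~\csecref{constructions}; intersecting with the grammar automaton and testing emptiness decides the question in time $\Oo(2^{\mathit{poly}(n^k)}|\grammar|)$ by~\thmref{fok}, which is $\EXP$ for fixed $k$.

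For the lower bound, I would reduce from \sepreal{$\FOk$}, which is $\EXP$-hard whenever $k>4$ by~\thmref{fokhardness}. Given an instance $\la\Pos,\Neg,\grammar\ra$ over $\alphabet_{\FOk}$, I build the single structure $M$ and grammar $G'$ over $\alphabet_{\FO(k+1)}$ precisely as described before the theorem: $M$ bundles disjoint copies of every structure in $\Pos\uplus\Neg$, adds one identifier element per structure, interprets $\mathit{Id}$ as the set of identifiers, $\mathit{P}$ as the identifiers of positive structures, and $\mathit{Owns}$ so that $\mathit{Owns}(i,\cdot)$ selects the copy belonging to structure $i$. The grammar $G'$ emits only sentences $\forall i.\,\mathit{Id}(i)\rightarrow(\mathit{P}(i)\leftrightarrow\alpha'(i))$, where $\alpha'$ is the relativization of some $\alpha\in L(\grammar)$ obtained by guarding each quantifier with $\mathit{Owns}(i,\cdot)$. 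Both $M$ and $G'$ are constructible in polynomial time.

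The heart of the argument is a correctness claim proved by induction on $\alpha$: for every $\alpha\in L(\grammar)$ and identifier $i$, the relativized $\alpha'(i)$ evaluates in $M$ exactly as $\alpha$ evaluates in the $i$th structure. The only interesting cases are the quantifiers, where the $\mathit{Owns}(i,\cdot)$ guard confines $x$ to the copy owned by $i$, and the atomic cases, where relations within each copy agree with the original. Granting this, $M\models\forall i.\,\mathit{Id}(i)\rightarrow(\mathit{P}(i)\leftrightarrow\alpha'(i))$ holds iff for every identifier the membership $\mathit{P}(i)$ matches the truth of $\alpha$ on structure $i$, i.e., iff $\alpha$ is true on exactly the positive structures, i.e., iff $\alpha$ separates $\Pos$ and $\Neg$. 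Thus some sentence of $L(G')$ is true in $M$ iff the original instance is realizable, so single-structure realizability over $\FO(k+1)$ is $\EXP$-hard; since the reduction adds exactly one variable, hardness at $k>4$ yields hardness at $k+1>5$, matching the statement. I expect the main obstacle to be the bookkeeping in this relativization induction—ensuring the extra variable $i$ is never reused inside $\alpha'$ and that the relativized formula stays in $L(G')$—rather than any conceptual difficulty.
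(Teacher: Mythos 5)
Your proposal is correct and follows essentially the same route as the paper: the upper bound is the separability algorithm specialized to $\Pos=\{M\}$, $\Neg=\emptyset$, and the lower bound is precisely the relativization reduction (via $\mathit{Id}$, $\mathit{P}$, $\mathit{Owns}$, and the template $\forall i.\,\mathit{Id}(i)\rightarrow(\mathit{P}(i)\leftrightarrow\alpha'(i))$) combined with \thmref{fokhardness}, with the same accounting that one extra variable shifts the bound from $k>4$ to $k>5$. Your added detail on the correctness induction for the relativization is a reasonable elaboration of what the paper leaves implicit, not a departure from its argument.
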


\paragraph{\textsf{Open Problems.}} Our algorithm has exponential
dependence on the number of structures $m$. We do not know whether
algorithms polynomial in $m$ can be achieved. More precisely, we do
not know if separator realizability can be achieved in time
$\Oo(f(m,n,k)\cdot g(n,k))$, where $f$ is a polynomial function and
$g$ is an arbitrary function. Learning algorithms that scale linearly
or polynomially with the number of data samples are clearly desirable.

Interestingly, if there is no grammar restriction, i.e., we look for a
separator in $\FOk$, then such an algorithm is indeed possible. This
follows from a suggestion by Victor
Vianu~\cite{vianupersonalcommunication}. The algorithm works on the
basis of $\FOk$-types~\cite{libkinmodeltheory}, which capture
equivalence classes of finite structures that cannot be distinguished
from each other by any $\FOk$ formula. Of crucial importance is the
fact that these equivalence classes of structures can be
\emph{defined} by an $\FOk$ formula, which can be effectively computed
for a given structure. Consequently, we can \emph{independently}
compute the defining formula, denoted $\mathit{type}(A_i)$, for each
$A_i\in\mathit{Pos}$ and then form the disjunction
$\psi\coloneq \bigvee_i \mathit{type}(A_i)$. If $\psi$ holds for any
structure in $\mathit{Neg}$ then there can be no separator. Otherwise
$\psi$ is a separator. This procedure works in time polynomial in the
number of structures.

However, we do not see any way to adapt the above procedure to
arbitrary grammars. Furthermore, it has a disadvantage as an algorithm
for learning--- it yields very large formulas that essentially overfit
the positive samples. In contrast, the automata-theoretic method can
find the smallest formulas.

There are, of course, many lower bound problems that are open for
different logics and variants, and each of them has many parameters
($|G|$, $m$, $n$, $k$, $k'$, $k_1$, $k_2$, as well as the arities of
symbols). One can ask several parameterized
complexity~\cite{parameterized-complexity-flum-grohe} lower bound
questions for each of our problems, and we leave this to future
work. In particular, one key question involves the parameter $k$
(which we have assumed is fixed in most of our treatment): is the
double exponential dependence on $k$ tight?


\section{Further Results and Discussion}
\label{sec:further-results}

In this section we discuss how the technique illustrated in
~\csecref{upper-bounds},~\csecref{recursion},
and~\csecref{term-synthesis}, can be adapted to solve problems in two
other settings: logic programming and second- and higher-order
logics. We also remark on the generality of the approach and give a
connection to Ehrenfeucht-Fra{\"i}ss{\'e} games.

\subsection{Mutual Recursion and Logic Programming}
\label{sec:datalog}

Recall that our treatment of $\FOLFP$ from \csecref{recursion} did not
include \emph{mutually-recursive} definitions. In fact, mutual
recursion can be handled with a modest increase in the number of
automaton states. Consider a variant of $\FOLFP$ that allows
\emph{blocks} of defined relations, in which all relations in a single
block can refer to each other in their definitions, like the
following:
\begin{align*}
  \Let\,\,
  \Bigg\{ \,\begin{aligned}
    P_1(x_1,x_2) \, =_{\lfp}\, \varphi_1(x_1,x_2,P_1,P_2) \\
    P_2(x_1,x_2) \, =_{\lfp}\, \varphi_2(x_1,x_2,P_1,P_2) \\
  \end{aligned} \,\Bigg\} \,\,
  \In\,\, \varphi(P_1,P_2)
\end{align*}

The semantics for blocks of mutually-recursive definitions can be
defined in terms of a \emph{simultaneous fixed point}. For the example
above, we can define functions
$F_{1}, F_{2} : 2^X\times 2^X\rightarrow 2^X$, where $X=\dom(A)^2$,
and for $X_1,X_2\subseteq 2^X$:
\begin{equation*}
\begin{split}
  F_{i}(X_1,X_2) \triangleq \left\{\, \many{a}\in X \,\bigm\vert\,
    A\models \varphi_{i}(\many{a}/\many{x},X_1/P_1,X_2/P_2) \,\right\}
  \quad i\in\{1,2\}
  \label{simult}
\end{split}
\end{equation*}
We can interpret the relations $P_1$ and $P_2$ as the components of
the simultaneous least fixed point of the system of equations above;
see~\cite{Fritz2002} for more on simultaneous fixed points.

\paragraph{\textsf{Evaluating Mutually-Recursive Definitions.}}
In the spirit of our technique, we ask how an automaton can check
membership for a relation defined by mutual recursion using state
bounded by the structure. The same ideas carry over from
\csecref{recursion} with a modification. As before, all tuples can be
associated with the stage at which they enter the (now)
\emph{simultaneous} fixed point computation, and the automaton can use
counters to check membership at a given stage. However, the number of
stages grows exponentially in the number of relations in a block of
definitions (which we can assume is bounded by the number of definable
symbols $k'$). The automaton state must now include a \emph{product}
of counters, one for each definable symbol. Other than this change to
the states, the construction that handles mutual recursion in $\FOLFP$
remains essentially the same.

\paragraph{\textsf{Logic Programming.}}
\label{sec:datalog-synthesis}
With mutually-recursive definitions, our technique can be used to
solve $\Datalog$ synthesis problems; this is not surprising since
$\Datalog$ is logically similar to standard first-order logics with
least fixed points (in fact, it corresponds to an existential fragment
$\exists \LFP$ of first-order logic with least fixed points that only
allows negation on atomic relations from the signature and disallows
universal
quantification~\cite{libkinmodeltheory}). See~\appref{fullversion} for
more details on a $\Datalog$ synthesis problem that our technique can
solve. (We note for a fixed number of variables and definable
relations, the space of $\Datalog$ programs is finite and thus
decidability is not theoretically interesting.) We can also model
problems from \emph{inductive logic programming} (ILP)~\cite{ilp},
e.g., learning from entailment over bounded variable horn-clause
programs, by encoding background knowledge (a set of definite horn
clauses) in the grammar.

\subsection{Second-Order Logic}
\label{sec:second-order-logics}
The approach naturally extends to second-order logic ($\SO$) (see,
e.g.,~\cite{libkinmodeltheory} for syntax and semantics). We state
here a result for relational $\SOk$, a version of $\SO$ restricted to
$k$ first-order variables and $k'$ second-order relation variables. An
alternating one-way automaton $\aut_A$ can evaluate $\SO(k,k')$
formulas on a fixed structure $A$ by keeping track of an assignment to
$k$ first-order variables and an assignment to $k'$ second-order
relation variables of maximum arity $r$ (the relation variables map to
sets of $r$-tuples) using a state space of size
$\Oo(2^{\left(k' n^r\right)} n^{k})$. The decision procedure follows
the same lines as before.

\begin{theorem}
  \sepsynth{$\SOk$}~is decidable in $2\EXP$ for a fixed signature and
  fixed $k,k'\in\Nat$. 
\end{theorem}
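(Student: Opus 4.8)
The plan is to reuse the alternating-automaton machinery of \csecref{constructions} essentially verbatim, enlarging only the automaton state to also carry a \emph{second-order assignment}. Concretely, a state records a partial first-order assignment $\gamma\in\mathit{Assign}$, the single dual bit used for negation, and a map $\sigma$ sending each relation variable $X_1,\ldots,X_{k'}$ (of arity at most $r$) to a concrete relation $\sigma(X_i)\subseteq\dom(A)^{\arity{X_i}}$. Since there are at most $2^{n^r}$ relations per variable, there are at most $2^{k'n^r}$ such maps, so for a fixed structure $A$ with $|\dom(A)|=n$ the evaluation automaton $\aut_A$ has $\Oo(2^{k'n^r}\,n^{k})$ states, matching the bound quoted before the statement. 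Its single initial state carries the empty first-order assignment, the second-order assignment mapping every $X_i$ to $\emptyset$, and the unmarked bit; for negative structures I would dualize the initial state exactly as in \csecref{sepsynth-decision-procedure}.

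All $\FO$ transitions from \csecref{constructions} are kept unchanged (they ignore $\sigma$), and I would add transitions for the new alphabet symbols. Second-order quantifiers guess, or dually range over, a relation and store it in $\sigma$:
\begin{align*}
  \delta(\la\gamma,\sigma\ra,\existsymb{X}) &=
    \bigvee\nolimits_{S\subseteq\dom(A)^{\arity{X}}}(\la\gamma,\update{\sigma}{X}{S}\ra,1), \\
  \delta(\la\gamma,\sigma\ra,\allsymb{X}) &=
    \bigwedge\nolimits_{S\subseteq\dom(A)^{\arity{X}}}(\la\gamma,\update{\sigma}{X}{S}\ra,1),
\end{align*}
with the two connectives exchanged on the marked states $\la\tilde\gamma,\sigma\ra$, exactly as negation and first-order quantifiers are dualized in \csecref{constructions}. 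An atomic use of a relation variable is resolved against $\sigma$ rather than against the structure:
\begin{align*}
  \delta(\la\gamma,\sigma\ra,X(\many{x})) &=
    \begin{cases}
      \tru & \gamma(\many{x})\downarrow \text{ and } \gamma(\many{x})\in\sigma(X) \\
      \fals & \text{otherwise,}
    \end{cases}
\end{align*}
and again with $\in$ replaced by $\notin$ on marked states. I would then establish the analog of \lemref{foklem} --- that $L(\aut_A,\la\gamma,\sigma\ra)$ is exactly the set of formulas true in $A$ under the combined assignment $(\gamma,\sigma)$ --- by a structural induction that extends the first-order argument with these two cases; the second-order quantifier step is where the relation guessed into $\sigma$ is finally discharged against the atomic-membership transition.

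Assembly then follows \csecref{sepsynth-decision-procedure} with no new ideas: form the alternating product of the structure automata over all $M\in\Pos\uplus\Neg$ (realized, as there, by taking the disjoint union of the component state sets with one fresh initial state spawning a copy into each, so the product has only $\Oo(m\,2^{k'n^r}n^{k})$ states rather than their Cartesian product), intersect with the grammar automaton $\aut_\grammar$, convert to a one-way nondeterministic automaton, and check emptiness in linear time, reading off a witness tree to synthesize a separating sentence when the language is nonempty. The only real obstacle is the complexity accounting, and it is precisely what pushes the bound from $\EXP$ to $2\EXP$: unlike the $\FO$ case, the evaluation automaton is \emph{already} exponential in $n$, since a second-order assignment must name one of $2^{n^r}$ relations per variable; removing alternation for the emptiness test then incurs a \emph{second} exponential, yielding a nondeterministic automaton of size $2^{\Oo(m\,2^{k'n^r}n^{k})}|\grammar|$, doubly exponential in the input. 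As emptiness is linear in this size, \sepsynth{$\SOk$}~is decidable in $2\EXP$ for a fixed signature and fixed $k,k'\in\Nat$.
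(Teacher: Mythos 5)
Your proposal is correct and follows essentially the same route as the paper: the paper's (very brief) argument likewise enlarges the evaluation automaton's state to carry a second-order assignment alongside the first-order one, obtaining $\Oo(2^{k'n^r}n^k)$ states, and then runs the same product/grammar-intersection/alternation-removal/emptiness pipeline, with the extra exponential in the state space of $\aut_A$ accounting for the jump from $\EXP$ to $2\EXP$. Your explicit transitions for the second-order quantifiers and for atomic uses of relation variables, and your complexity accounting, are a faithful elaboration of what the paper leaves as a sketch.
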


The same idea sketched above easily extends to logics with variables
over higher-order functions.

\subsection{Discussion}
We believe the tree automata-theoretic approach proposed in this work
is extremely versatile. The crux is to build automata that, when
reading the parse tree of an expression, can evaluate it on a fixed
structure using finitely many states. This typically is true if there
is a way to recursively evaluate the semantics of expressions using
memory that depends on the size of the structure but \emph{not on the
  size of expressions}. Bounding the number of variables is one way to
achieve this.

We claim our technique applies to any logic or language for which (a)
the semantics of expressions can be described locally in the parse
tree in terms of the semantics for subexpressions and (b) evaluating
the semantics at each node of the parse tree requires memory that is
bounded by a function of the structure size (and not the formula
size). The fact that logics with definitions and recursion can be
captured with two-way tree automata shows that they also meet these
conditions, since the automaton can be converted to a deterministic
bottom-up automaton. We leave formalizing this claim, proving it, and
finding further instantiations of the technique to future work.

Finally, we note that the separability problems considered here can be
viewed from the perspective of Ehrenfeucht-Fra{\"i}ss{\'e}
games~\cite{Ehrenfeucht1961, fraisse_1953}, which are typically used
to show formulas in a logic can or cannot distinguish between two
structures. The separability problem instead asks whether a set of
positive structures can be separated from a set of negative structures
using formulas that conform to a given grammar. Hence the game in our
setting is one that is specific to the given grammar and furthermore
forces the players to play \emph{simultaneously} on all the
structures. We leave further investigation of this relationship to
future work.


\section{Related Work}
\seclabel{related}

\emph{Program Synthesis from Examples.} Learning logical formulas is
closely related to \emph{program synthesis}, and especially, program
synthesis from \emph{examples} (as opposed to deductive approaches
from specifications~\cite{manna-waldinger}). Synthesis from examples,
or \emph{programming by examples} (PBE), has been active in recent
years and has seen successes in practice (e.g., ~\cite{flashmeta}). In
PBE, the goal is to synthesize a program consistent with a set of
input-output examples; several domains have been explored, e.g.,
synthesis of database queries from
examples~\cite{discovering-queries-sigmod14,synthesizing-sql-cheung17,thakkar-example-guided-synth}
and from analysis of database-backed application
code~\cite{optimizing-database-apps-lezama13}, synthesis of data
completion scripts~\cite{fta-data-completion-scripts}, data structure
transformations~\cite{data-structure-transformations}, and typed
functional
programs~\cite{type-directed-synth-osera,synth-refinement-types-nadia16}. A
common approach to PBE involves \emph{version space
  algebra}~\cite{tom-mitchell-vsa}, where the idea is to capture the
set of all programs that work on each example in a compact
representation and then intersect the sets for each example to
represent programs consistent with all examples (e.g.,
see~\cite{flashfill}). Our approach essentially uses tree automata as
a version space algebra to capture \emph{all} logical expressions that
satisfy some criterion over input structures.

\emph{Synthesis with Grammar.} Using grammar to constrain the
hypothesis space follows a line of work in program synthesis that uses
syntactic biases like partial programs, e.g. ~\cite{sketch}, and more
broadly, \emph{syntax-guided synthesis} (SyGuS) for logics (typically
logics supported by SMT theories)~\cite{sygusJournal}. In a SyGuS
problem, one is given a grammar from which to synthesize a logical
expression (similar to the setting in this paper) as well as a
specification in the form of a universally-quantified formula that
refers to a placeholder $e$, which must be valid when the synthesized
expression is plugged in for $e$. The separability problem can in fact
be formulated as a SyGuS problem, though SyGuS divisions and tools
only support synthesis of \emph{quantifier-free formulas}. There is a
large body of work exploring program synthesis and syntax-guided
synthesis for quantifier-free logics that focuses on practical and
scalable techniques, and for the most part does not offer any
guarantee of completeness. When grammars admit infinitely many
expressions, these solvers cannot report \emph{unrealizability}, which
is in general undecidable~\cite{caulfieldarxiv}. The ability to decide
realizability and synthesis is a crucial difference in our work.

\emph{Decidable Realizability and Synthesis.} For systems and programs
that have \emph{finite state spaces}, the realizability problem has
been extremely well studied and a rich class of specifications for
such systems is known to admit decidable realizability. The crux of
the techniques used in this domain rely on tree automata that work on
\emph{infinite} trees and infinite games played on finite graphs
(while our work uses tree automata on finite trees).  This problem was
first proposed by Church~\cite{church60}, and a rich theory of
realizability/synthesis has
emerged~\cite{BuchiLandweber69,Rabin72,automata-logics-games,kpvPneuli,PR89,KMTV00,PR90,madhudistsynth}.
The key idea is to encode the branching behavior of a reactive system
using an infinite tree and build automata that accept systems (trees)
whose behaviors satisfy a specification.

Our work is technically closer to the approach in~\cite{madhuCSL11},
which studies synthesizing \emph{imperative reactive programs} over a
finite number of variables ranging over finite domains with logical
specifications (e.g., linear temporal logic). The decidability of
realizability/synthesis is proved using tree automata that work on
finite trees (parse trees of programs), similar to the work presented
here. Unlike our work, the tree automata have infinitary acceptance
conditions in order to capture properties of infinite executions of
programs. Other differences include (1) our work interprets logical
expressions over \emph{unbounded} structures, and (2) the
specification for synthesis is not a logical formula, but rather a set
of labeled structures. Intuitively, we trade the power of logical
specifications in~\cite{madhuCSL11} and replace it with a finite set
of structures in order to synthesize over unbounded domains. Though
the constructions in our work are too large to implement na{\"i}vely,
the core idea to use tree automata on parse trees of expressions for
synthesis has been made practical in some recent work, e.g., for
string and matrix transformations ~\cite{Wang2017} and string
encoders/decoders and comparators ~\cite{WangWangDilligOOPSLA18}.

Decidability results for synthesis of expressions over unbounded data
domains are uncommon, though there are some recent results for
restricted classes of programs and models of computation, e.g.,
synthesizing finite-state transducers~\cite{khalimov18} and
synthesizing a restricted class of imperative
programs~\cite{uninterpretedsynth}. In~\cite{uninterpretedsynth}, the
authors study the problem of synthesizing uninterpreted imperative
programs from a given grammar, where programs come with assertions
that must be satisfied for any interpretation of function and relation
symbols over any domain, possibly infinite. For the restricted
subclass of coherent programs~\cite{coherence2019}, there is a
decision procedure based on tree automaton emptiness, and, similar to
our work, the solution uses tree automata working over parse
trees. There is also recent work giving sound techniques for proving
unrealizability of SyGuS problems~\cite{HuUnrealizability2019}, and,
more recently, a decision procedure for SyGuS problems over linear
integer arithmetic with conditionals over finitely-many examples
~\cite{reps20-unrealizability}.

\emph{Learning Logical Formulas.} In~\cite{aiken-fo-sep}, the authors
study a separability problem for first-order logic formulas with
\emph{bounded quantifier depth}. In contrast to the problems we
consider in this work, bounding the quantifier depth makes the search
space finite up to logical equivalence, enabling a reduction to and
from SAT. There is also work on the decidability of learning
separators from labeled examples for various description
logics~\cite{lutz-ijcai2019,lutz-2020}.  There, separation problems
are studied in the presence of an ontology, which is a finite set of
logical sentences. The presence of ontologies makes the problem
different from our work; adapting our general synthesis approach to
the world of description logics remains future work. There is also
prior work studying the complexity of learning logical concepts by
characterizing the VC-dimension of logical hypothesis
classes~\cite{grohe-logic-learning-model}, work on parameterized
complexity for logical separation problems in the PAC
model~\cite{van-bergerem-2021}, learning $\MSO$-definable concepts on
strings~\cite{learning-mso-on-strings} and concepts definable in
first-order logic with counting~\cite{learning-in-fo-with-counting},
learning temporal logic formulas from
examples~\cite{neider-ltl-learning}, and learning quantified
invariants for arrays~\cite{madhu-qda}.

\emph{Inductive Logic Programming.} In \emph{inductive logic
  programming} (ILP)~\cite{ilp}, the goal is to learn a logic program
from data, typically positive and negative examples of a target
relation. ILP systems can learn from a small number of examples and
with background knowledge (e.g., a set of horn clauses), and some
systems are able to invent new predicates and learn programs with
recursion~\cite{ilp-turning30}. Typically, ILP systems learn Prolog
programs, but recent work has explored learning in restricted
hypothesis spaces for logic programs, e.g.,
$\Datalog$~\cite{aws-synth-datalog,evans-greffen-noisy} and answer set
programming~\cite{ilasp}. As discussed in~\csecref{datalog}, our
approach can be used to model some forms of ILP by encoding background
knowledge in the grammar, and it seems possible that aspects of
\emph{metarules}~\cite{mil-muggleton} can also be achieved with our
technique; exploring connections to ILP is an interesting avenue for
future work.


\section{Conclusion}
\seclabel{conclusion}

We have argued for a very general tree automata-theoretic approach to
learning logical formulas and, more generally, any expression which
can be evaluated using state dependent on a background structure but
independent of the expression size. This is the case for the finite
variable logics studied in this work, as well as higher-order logics
and logics with fixed point operators over finite
structures. Precisely characterizing the power of this approach is an
interesting direction for future work, and so too are the lower bounds
and parameterized complexity questions we leave open.

What is nice about the tree automaton-based approach advocated here is
that various infinite concept spaces constrained by a grammar can be
seen to have finitely-many equivalence classes \emph{modulo example
  structures}. Indeed, the states of the (minimal) automaton
correspond to equivalence classes of formulas from the grammar that
are equivalent with respect to the given input structures. Effective
emptiness checking algorithms for tree automata show that we only need
to keep a single representative from each equivalence class to solve
synthesis. 
Exploring practical algorithms for restricted grammars and classes of
structures, including learning in the presence of background theories
(such as arithmetic, used say for counting), are intriguing directions
for future work.

\begin{acks}                            
  We thank Victor Vianu for discussions and for suggesting the connection to logical
  types. This work was supported in part by a Discovery Partner’s
  Institute (DPI) science team seed grant and a research grant from
  Amazon.
\end{acks}

\bibliography{main}


\begin{thebibliography}{62}


\ifx \showCODEN    \undefined \def \showCODEN     #1{\unskip}     \fi
\ifx \showDOI      \undefined \def \showDOI       #1{#1}\fi
\ifx \showISBNx    \undefined \def \showISBNx     #1{\unskip}     \fi
\ifx \showISBNxiii \undefined \def \showISBNxiii  #1{\unskip}     \fi
\ifx \showISSN     \undefined \def \showISSN      #1{\unskip}     \fi
\ifx \showLCCN     \undefined \def \showLCCN      #1{\unskip}     \fi
\ifx \shownote     \undefined \def \shownote      #1{#1}          \fi
\ifx \showarticletitle \undefined \def \showarticletitle #1{#1}   \fi
\ifx \showURL      \undefined \def \showURL       {\relax}        \fi
\providecommand\bibfield[2]{#2}
\providecommand\bibinfo[2]{#2}
\providecommand\natexlab[1]{#1}
\providecommand\showeprint[2][]{arXiv:#2}

\bibitem[\protect\citeauthoryear{Albarghouthi, Koutris, Naik, and
  Smith}{Albarghouthi et~al\mbox{.}}{2017}]%
        {aws-synth-datalog}
\bibfield{author}{\bibinfo{person}{Aws Albarghouthi},
  \bibinfo{person}{Paraschos Koutris}, \bibinfo{person}{Mayur Naik}, {and}
  \bibinfo{person}{Calvin Smith}.} \bibinfo{year}{2017}\natexlab{}.
\newblock \showarticletitle{Constraint-Based Synthesis of Datalog Programs}. In
  \bibinfo{booktitle}{\emph{Principles and Practice of Constraint
  Programming}}, \bibfield{editor}{\bibinfo{person}{J.~Christopher Beck}}
  (Ed.). \bibinfo{publisher}{Springer International Publishing},
  \bibinfo{address}{Cham}, \bibinfo{pages}{689--706}.
\newblock
\showISBNx{978-3-319-66158-2}
\urldef\tempurl%
\url{https://doi.org/10.1007/978-3-319-66158-2_44}
\showDOI{\tempurl}


\bibitem[\protect\citeauthoryear{Alur, Bod{\'{\i}}k, Dallal, Fisman, Garg,
  Juniwal, Kress{-}Gazit, Madhusudan, Martin, Raghothaman, Saha, Seshia, Singh,
  Solar{-}Lezama, Torlak, and Udupa}{Alur et~al\mbox{.}}{2015}]%
        {sygusJournal}
\bibfield{author}{\bibinfo{person}{Rajeev Alur}, \bibinfo{person}{Rastislav
  Bod{\'{\i}}k}, \bibinfo{person}{Eric Dallal}, \bibinfo{person}{Dana Fisman},
  \bibinfo{person}{Pranav Garg}, \bibinfo{person}{Garvit Juniwal},
  \bibinfo{person}{Hadas Kress{-}Gazit}, \bibinfo{person}{P. Madhusudan},
  \bibinfo{person}{Milo M.~K. Martin}, \bibinfo{person}{Mukund Raghothaman},
  \bibinfo{person}{Shambwaditya Saha}, \bibinfo{person}{Sanjit~A. Seshia},
  \bibinfo{person}{Rishabh Singh}, \bibinfo{person}{Armando Solar{-}Lezama},
  \bibinfo{person}{Emina Torlak}, {and} \bibinfo{person}{Abhishek Udupa}.}
  \bibinfo{year}{2015}\natexlab{}.
\newblock \showarticletitle{Syntax-Guided Synthesis}.
\newblock In \bibinfo{booktitle}{\emph{Dependable Software Systems
  Engineering}}. \bibinfo{series}{{NATO} Science for Peace and Security Series,
  {D:} Information and Communication Security}, Vol.~\bibinfo{volume}{40}.
  \bibinfo{publisher}{{IOS} Press}, \bibinfo{pages}{1--25}.
\newblock


\bibitem[\protect\citeauthoryear{Bloem, Jobstmann, Piterman, Pnueli, and
  Sa'ar}{Bloem et~al\mbox{.}}{2012}]%
        {gr1}
\bibfield{author}{\bibinfo{person}{Roderick Bloem}, \bibinfo{person}{Barbara
  Jobstmann}, \bibinfo{person}{Nir Piterman}, \bibinfo{person}{Amir Pnueli},
  {and} \bibinfo{person}{Yaniv Sa'ar}.} \bibinfo{year}{2012}\natexlab{}.
\newblock \showarticletitle{Synthesis of Reactive(1) designs}.
\newblock \bibinfo{journal}{\emph{J. Comput. System Sci.}}
  \bibinfo{volume}{78}, \bibinfo{number}{3} (\bibinfo{year}{2012}),
  \bibinfo{pages}{911--938}.
\newblock
\showISSN{0022-0000}
\urldef\tempurl%
\url{https://doi.org/10.1016/j.jcss.2011.08.007}
\showDOI{\tempurl}


\bibitem[\protect\citeauthoryear{Buchi and Landweber}{Buchi and
  Landweber}{1969}]%
        {BuchiLandweber69}
\bibfield{author}{\bibinfo{person}{J.~Richard Buchi} {and}
  \bibinfo{person}{Lawrence~H. Landweber}.} \bibinfo{year}{1969}\natexlab{}.
\newblock \showarticletitle{Solving Sequential Conditions by Finite-State
  Strategies}.
\newblock \bibinfo{journal}{\emph{Trans. Amer. Math. Soc.}}
  \bibinfo{volume}{138} (\bibinfo{year}{1969}), \bibinfo{pages}{295--311}.
\newblock
\showISSN{00029947}
\urldef\tempurl%
\url{http://www.jstor.org/stable/1994916}
\showURL{%
\tempurl}


\bibitem[\protect\citeauthoryear{Caulfield, Rabe, Seshia, and
  Tripakis}{Caulfield et~al\mbox{.}}{2015}]%
        {caulfieldarxiv}
\bibfield{author}{\bibinfo{person}{Benjamin Caulfield},
  \bibinfo{person}{Markus~N. Rabe}, \bibinfo{person}{Sanjit~A. Seshia}, {and}
  \bibinfo{person}{Stavros Tripakis}.} \bibinfo{year}{2015}\natexlab{}.
\newblock \showarticletitle{What's Decidable about Syntax-Guided Synthesis?}
\newblock \bibinfo{journal}{\emph{CoRR}}  \bibinfo{volume}{abs/1510.08393}
  (\bibinfo{year}{2015}).
\newblock
\urldef\tempurl%
\url{http://arxiv.org/abs/1510.08393}
\showURL{%
\tempurl}


\bibitem[\protect\citeauthoryear{Cheung, Solar-Lezama, and Madden}{Cheung
  et~al\mbox{.}}{2013}]%
        {optimizing-database-apps-lezama13}
\bibfield{author}{\bibinfo{person}{Alvin Cheung}, \bibinfo{person}{Armando
  Solar-Lezama}, {and} \bibinfo{person}{Samuel Madden}.}
  \bibinfo{year}{2013}\natexlab{}.
\newblock \showarticletitle{Optimizing Database-Backed Applications with Query
  Synthesis}. In \bibinfo{booktitle}{\emph{Proceedings of the 34th ACM SIGPLAN
  Conference on Programming Language Design and Implementation}} (Seattle,
  Washington, USA) \emph{(\bibinfo{series}{PLDI '13})}.
  \bibinfo{publisher}{Association for Computing Machinery},
  \bibinfo{address}{New York, NY, USA}, \bibinfo{pages}{3–14}.
\newblock
\showISBNx{9781450320146}
\urldef\tempurl%
\url{https://doi.org/10.1145/2491956.2462180}
\showDOI{\tempurl}


\bibitem[\protect\citeauthoryear{Church}{Church}{1960}]%
        {church60}
\bibfield{author}{\bibinfo{person}{Alonzo Church}.}
  \bibinfo{year}{1960}\natexlab{}.
\newblock \showarticletitle{Application of recursive arithmetic to the problem
  of circuit synthesis}.
\newblock \bibinfo{journal}{\emph{Summaries of talks presented at the Summer
  Institute for Symbolic Logic Cornell University, 1957, 2nd edn., Journal of
  Symbolic Logic}} \bibinfo{volume}{28}, \bibinfo{number}{4}
  (\bibinfo{year}{1960}), \bibinfo{pages}{30--50. 3a--45a.}
\newblock


\bibitem[\protect\citeauthoryear{Comon, Dauchet, Gilleron, L\"oding,
  Jacquemard, Lugiez, Tison, and Tommasi}{Comon et~al\mbox{.}}{2007}]%
        {tata}
\bibfield{author}{\bibinfo{person}{H. Comon}, \bibinfo{person}{M. Dauchet},
  \bibinfo{person}{R. Gilleron}, \bibinfo{person}{C. L\"oding},
  \bibinfo{person}{F. Jacquemard}, \bibinfo{person}{D. Lugiez},
  \bibinfo{person}{S. Tison}, {and} \bibinfo{person}{M. Tommasi}.}
  \bibinfo{year}{2007}\natexlab{}.
\newblock \bibinfo{title}{Tree Automata Techniques and Applications}.
\newblock \bibinfo{howpublished}{Available on:
  \texttt{http://www.grappa.univ-lille3.fr/tata}}.
\newblock
\newblock
\shownote{release October, 12th 2007.}


\bibitem[\protect\citeauthoryear{Cropper, Dumancic, and Muggleton}{Cropper
  et~al\mbox{.}}{2020}]%
        {ilp-turning30}
\bibfield{author}{\bibinfo{person}{Andrew Cropper}, \bibinfo{person}{Sebastijan
  Dumancic}, {and} \bibinfo{person}{Stephen~H. Muggleton}.}
  \bibinfo{year}{2020}\natexlab{}.
\newblock \showarticletitle{Turning 30: New Ideas in Inductive Logic
  Programming}. In \bibinfo{booktitle}{\emph{Proceedings of the Twenty-Ninth
  International Joint Conference on Artificial Intelligence, {IJCAI} 2020}},
  \bibfield{editor}{\bibinfo{person}{Christian Bessiere}} (Ed.).
  \bibinfo{publisher}{ijcai.org}, \bibinfo{pages}{4833--4839}.
\newblock
\urldef\tempurl%
\url{https://doi.org/10.24963/ijcai.2020/673}
\showDOI{\tempurl}


\bibitem[\protect\citeauthoryear{Dal~Lago}{Dal~Lago}{2012}]%
        {implicit-complextiy-dellago}
\bibfield{author}{\bibinfo{person}{Ugo Dal~Lago}.}
  \bibinfo{year}{2012}\natexlab{}.
\newblock \bibinfo{booktitle}{\emph{A Short Introduction to Implicit
  Computational Complexity}}.
\newblock \bibinfo{publisher}{Springer Berlin Heidelberg},
  \bibinfo{address}{Berlin, Heidelberg}, \bibinfo{pages}{89--109}.
\newblock
\showISBNx{978-3-642-31485-8}
\urldef\tempurl%
\url{https://doi.org/10.1007/978-3-642-31485-8_3}
\showDOI{\tempurl}


\bibitem[\protect\citeauthoryear{Ehrenfeucht}{Ehrenfeucht}{1961}]%
        {Ehrenfeucht1961}
\bibfield{author}{\bibinfo{person}{Andrzej Ehrenfeucht}.}
  \bibinfo{year}{1961}\natexlab{}.
\newblock \showarticletitle{An application of games to the completeness problem
  for formalized theories}.
\newblock \bibinfo{journal}{\emph{Fundamenta Mathematicae}}
  \bibinfo{volume}{49}, \bibinfo{number}{2} (\bibinfo{year}{1961}),
  \bibinfo{pages}{129--141}.
\newblock
\urldef\tempurl%
\url{http://eudml.org/doc/213582}
\showURL{%
\tempurl}


\bibitem[\protect\citeauthoryear{Enderton}{Enderton}{2001}]%
        {enderton}
\bibfield{author}{\bibinfo{person}{Herbert~B. Enderton}.}
  \bibinfo{year}{2001}\natexlab{}.
\newblock \showarticletitle{Chapter 2 - First-Order Logic}.
\newblock In \bibinfo{booktitle}{\emph{A Mathematical Introduction to Logic}
  (\bibinfo{edition}{second edition} ed.)},
  \bibfield{editor}{\bibinfo{person}{Herbert~B. Enderton}} (Ed.).
  \bibinfo{publisher}{Academic Press}, \bibinfo{address}{Boston},
  \bibinfo{pages}{67--181}.
\newblock
\showISBNx{978-0-12-238452-3}
\urldef\tempurl%
\url{https://doi.org/10.1016/B978-0-08-049646-7.50008-4}
\showDOI{\tempurl}


\bibitem[\protect\citeauthoryear{Evans and Grefenstette}{Evans and
  Grefenstette}{2018}]%
        {evans-greffen-noisy}
\bibfield{author}{\bibinfo{person}{Richard Evans} {and} \bibinfo{person}{Edward
  Grefenstette}.} \bibinfo{year}{2018}\natexlab{}.
\newblock \showarticletitle{Learning Explanatory Rules from Noisy Data}.
\newblock \bibinfo{journal}{\emph{J. Artif. Int. Res.}} \bibinfo{volume}{61},
  \bibinfo{number}{1} (\bibinfo{date}{Jan.} \bibinfo{year}{2018}),
  \bibinfo{pages}{1–64}.
\newblock
\showISSN{1076-9757}
\urldef\tempurl%
\url{https://dl.acm.org/doi/10.5555/3241691.3241692}
\showURL{%
\tempurl}


\bibitem[\protect\citeauthoryear{Feser, Chaudhuri, and Dillig}{Feser
  et~al\mbox{.}}{2015}]%
        {data-structure-transformations}
\bibfield{author}{\bibinfo{person}{John~K. Feser}, \bibinfo{person}{Swarat
  Chaudhuri}, {and} \bibinfo{person}{Isil Dillig}.}
  \bibinfo{year}{2015}\natexlab{}.
\newblock \showarticletitle{Synthesizing Data Structure Transformations from
  Input-Output Examples}. In \bibinfo{booktitle}{\emph{Proceedings of the 36th
  ACM SIGPLAN Conference on Programming Language Design and Implementation}}
  (Portland, OR, USA) \emph{(\bibinfo{series}{PLDI '15})}.
  \bibinfo{publisher}{Association for Computing Machinery},
  \bibinfo{address}{New York, NY, USA}, \bibinfo{pages}{229–239}.
\newblock
\showISBNx{9781450334686}
\urldef\tempurl%
\url{https://doi.org/10.1145/2737924.2737977}
\showDOI{\tempurl}


\bibitem[\protect\citeauthoryear{Flum and Grohe}{Flum and Grohe}{2006}]%
        {parameterized-complexity-flum-grohe}
\bibfield{author}{\bibinfo{person}{J. Flum} {and} \bibinfo{person}{M. Grohe}.}
  \bibinfo{year}{2006}\natexlab{}.
\newblock \bibinfo{booktitle}{\emph{Parameterized Complexity Theory (Texts in
  Theoretical Computer Science. An EATCS Series)}}.
\newblock \bibinfo{publisher}{Springer-Verlag}, \bibinfo{address}{Berlin,
  Heidelberg}.
\newblock
\showISBNx{3540299521}
\urldef\tempurl%
\url{https://doi.org/10.1007/3-540-29953-X}
\showDOI{\tempurl}


\bibitem[\protect\citeauthoryear{Fraïssé}{Fraïssé}{1953}]%
        {fraisse_1953}
\bibfield{author}{\bibinfo{person}{Roland Fraïssé}.}
  \bibinfo{year}{1953}\natexlab{}.
\newblock \emph{\bibinfo{title}{Sur quelques classifications des systèmes de
  relations}}.
\newblock \bibinfo{thesistype}{Ph.D. Dissertation}.
  \bibinfo{school}{Publications Scientifiques de l'Université d'Alger, series
  A 1 (1954)}.
\newblock


\bibitem[\protect\citeauthoryear{Fritz}{Fritz}{2002}]%
        {Fritz2002}
\bibfield{author}{\bibinfo{person}{Carsten Fritz}.}
  \bibinfo{year}{2002}\natexlab{}.
\newblock \bibinfo{booktitle}{\emph{Some Fixed Point Basics}}.
\newblock \bibinfo{publisher}{Springer Berlin Heidelberg},
  \bibinfo{address}{Berlin, Heidelberg}, \bibinfo{pages}{359--364}.
\newblock
\showISBNx{978-3-540-36387-3}
\urldef\tempurl%
\url{https://doi.org/10.1007/3-540-36387-4_20}
\showDOI{\tempurl}


\bibitem[\protect\citeauthoryear{Funk, Jung, Lutz, Pulcini, and Wolter}{Funk
  et~al\mbox{.}}{2019}]%
        {lutz-ijcai2019}
\bibfield{author}{\bibinfo{person}{Maurice Funk},
  \bibinfo{person}{Jean~Christoph Jung}, \bibinfo{person}{Carsten Lutz},
  \bibinfo{person}{Hadrien Pulcini}, {and} \bibinfo{person}{Frank Wolter}.}
  \bibinfo{year}{2019}\natexlab{}.
\newblock \showarticletitle{Learning Description Logic Concepts: When can
  Positive and Negative Examples be Separated?}. In
  \bibinfo{booktitle}{\emph{Proceedings of the Twenty-Eighth International
  Joint Conference on Artificial Intelligence, {IJCAI-19}}}.
  \bibinfo{publisher}{International Joint Conferences on Artificial
  Intelligence Organization}, \bibinfo{pages}{1682--1688}.
\newblock
\urldef\tempurl%
\url{https://doi.org/10.24963/ijcai.2019/233}
\showDOI{\tempurl}


\bibitem[\protect\citeauthoryear{Garg, L{\"o}ding, Madhusudan, and Neider}{Garg
  et~al\mbox{.}}{2015}]%
        {madhu-qda}
\bibfield{author}{\bibinfo{person}{Pranav Garg}, \bibinfo{person}{Christof
  L{\"o}ding}, \bibinfo{person}{P. Madhusudan}, {and} \bibinfo{person}{Daniel
  Neider}.} \bibinfo{year}{2015}\natexlab{}.
\newblock \showarticletitle{Quantified data automata for linear data
  structures: a register automaton model with applications to learning
  invariants of programs manipulating arrays and lists}.
\newblock \bibinfo{journal}{\emph{Formal Methods in System Design}}
  \bibinfo{volume}{47}, \bibinfo{number}{1} (\bibinfo{date}{01 Aug}
  \bibinfo{year}{2015}), \bibinfo{pages}{120--157}.
\newblock
\showISSN{1572-8102}
\urldef\tempurl%
\url{https://doi.org/10.1007/s10703-015-0231-6}
\showDOI{\tempurl}


\bibitem[\protect\citeauthoryear{Gr\"{a}del, Thomas, and Wilke}{Gr\"{a}del
  et~al\mbox{.}}{2002}]%
        {automata-logics-games}
\bibfield{editor}{\bibinfo{person}{Erich Gr\"{a}del}, \bibinfo{person}{Wolfgang
  Thomas}, {and} \bibinfo{person}{Thomas Wilke}} (Eds.).
  \bibinfo{year}{2002}\natexlab{}.
\newblock \bibinfo{booktitle}{\emph{Automata Logics, and Infinite Games: A
  Guide to Current Research}}.
\newblock \bibinfo{publisher}{Springer-Verlag}, \bibinfo{address}{Berlin,
  Heidelberg}.
\newblock
\showISBNx{3540003886}
\urldef\tempurl%
\url{https://doi.org/10.1007/3-540-36387-4}
\showURL{%
\tempurl}


\bibitem[\protect\citeauthoryear{Grohe, Löding, and Ritzert}{Grohe
  et~al\mbox{.}}{2017}]%
        {learning-mso-on-strings}
\bibfield{author}{\bibinfo{person}{Martin Grohe}, \bibinfo{person}{Christof
  Löding}, {and} \bibinfo{person}{Martin Ritzert}.}
  \bibinfo{year}{2017}\natexlab{}.
\newblock \showarticletitle{Learning MSO-definable hypotheses on strings}. In
  \bibinfo{booktitle}{\emph{Proceedings of the 28th International Conference on
  Algorithmic Learning Theory}} \emph{(\bibinfo{series}{Proceedings of Machine
  Learning Research}, Vol.~\bibinfo{volume}{76})},
  \bibfield{editor}{\bibinfo{person}{Steve Hanneke} {and} \bibinfo{person}{Lev
  Reyzin}} (Eds.). \bibinfo{publisher}{PMLR}, \bibinfo{address}{Kyoto
  University, Kyoto, Japan}, \bibinfo{pages}{434--451}.
\newblock
\urldef\tempurl%
\url{http://proceedings.mlr.press/v76/grohe17a.html}
\showURL{%
\tempurl}


\bibitem[\protect\citeauthoryear{Grohe and Tur{\'a}n}{Grohe and
  Tur{\'a}n}{2004}]%
        {grohe-logic-learning-model}
\bibfield{author}{\bibinfo{person}{Martin Grohe} {and} \bibinfo{person}{Gy.
  Tur{\'a}n}.} \bibinfo{year}{2004}\natexlab{}.
\newblock \showarticletitle{Learnability and Definability in Trees and Similar
  Structures}.
\newblock \bibinfo{journal}{\emph{Theory of Computing Systems}}
  \bibinfo{volume}{37}, \bibinfo{number}{1} (\bibinfo{date}{01 Jan}
  \bibinfo{year}{2004}), \bibinfo{pages}{193--220}.
\newblock
\showISSN{1433-0490}
\urldef\tempurl%
\url{https://doi.org/10.1007/s00224-003-1112-8}
\showDOI{\tempurl}


\bibitem[\protect\citeauthoryear{Gulwani}{Gulwani}{2011}]%
        {flashfill}
\bibfield{author}{\bibinfo{person}{Sumit Gulwani}.}
  \bibinfo{year}{2011}\natexlab{}.
\newblock \showarticletitle{Automating String Processing in Spreadsheets Using
  Input-Output Examples}. In \bibinfo{booktitle}{\emph{Proceedings of the 38th
  Annual ACM SIGPLAN-SIGACT Symposium on Principles of Programming Languages}}
  (Austin, Texas, USA) \emph{(\bibinfo{series}{POPL '11})}.
  \bibinfo{publisher}{Association for Computing Machinery},
  \bibinfo{address}{New York, NY, USA}, \bibinfo{pages}{317–330}.
\newblock
\showISBNx{9781450304900}
\urldef\tempurl%
\url{https://doi.org/10.1145/1926385.1926423}
\showDOI{\tempurl}


\bibitem[\protect\citeauthoryear{Hu, Breck, Cyphert, D'Antoni, and Reps}{Hu
  et~al\mbox{.}}{2019}]%
        {HuUnrealizability2019}
\bibfield{author}{\bibinfo{person}{Qinheping Hu}, \bibinfo{person}{Jason
  Breck}, \bibinfo{person}{John Cyphert}, \bibinfo{person}{Loris D'Antoni},
  {and} \bibinfo{person}{Thomas Reps}.} \bibinfo{year}{2019}\natexlab{}.
\newblock \showarticletitle{Proving Unrealizability for Syntax-Guided
  Synthesis}. In \bibinfo{booktitle}{\emph{Computer Aided Verification}}
  \emph{(\bibinfo{series}{Lecture Notes in Computer Science})},
  \bibfield{editor}{\bibinfo{person}{Isil Dillig} {and} \bibinfo{person}{Serdar
  Tasiran}} (Eds.). \bibinfo{publisher}{Springer International Publishing},
  \bibinfo{address}{Cham}, \bibinfo{pages}{335--352}.
\newblock
\showISBNx{978-3-030-25540-4}
\urldef\tempurl%
\url{https://doi.org/10.1007/978-3-030-25540-4_18}
\showURL{%
\tempurl}


\bibitem[\protect\citeauthoryear{Hu, Cyphert, D'Antoni, and Reps}{Hu
  et~al\mbox{.}}{2020}]%
        {reps20-unrealizability}
\bibfield{author}{\bibinfo{person}{Qinheping Hu}, \bibinfo{person}{John
  Cyphert}, \bibinfo{person}{Loris D'Antoni}, {and} \bibinfo{person}{Thomas
  Reps}.} \bibinfo{year}{2020}\natexlab{}.
\newblock \showarticletitle{Exact and Approximate Methods for Proving
  Unrealizability of Syntax-Guided Synthesis Problems}. In
  \bibinfo{booktitle}{\emph{Proceedings of the 41st ACM SIGPLAN Conference on
  Programming Language Design and Implementation}} (London, UK)
  \emph{(\bibinfo{series}{PLDI 2020})}. \bibinfo{publisher}{Association for
  Computing Machinery}, \bibinfo{address}{New York, NY, USA},
  \bibinfo{pages}{1128–1142}.
\newblock
\showISBNx{9781450376136}
\urldef\tempurl%
\url{https://doi.org/10.1145/3385412.3385979}
\showDOI{\tempurl}


\bibitem[\protect\citeauthoryear{Immerman}{Immerman}{1982}]%
        {immerman82}
\bibfield{author}{\bibinfo{person}{Neil Immerman}.}
  \bibinfo{year}{1982}\natexlab{}.
\newblock \showarticletitle{Relational Queries Computable in Polynomial Time
  (Extended Abstract)}. In \bibinfo{booktitle}{\emph{Proceedings of the
  Fourteenth Annual ACM Symposium on Theory of Computing}} (San Francisco,
  California, USA) \emph{(\bibinfo{series}{STOC '82})}.
  \bibinfo{publisher}{Association for Computing Machinery},
  \bibinfo{address}{New York, NY, USA}, \bibinfo{pages}{147–152}.
\newblock
\showISBNx{0897910702}
\urldef\tempurl%
\url{https://doi.org/10.1145/800070.802187}
\showDOI{\tempurl}


\bibitem[\protect\citeauthoryear{Jung, Lutz, Pulcini, and Wolter}{Jung
  et~al\mbox{.}}{2020}]%
        {lutz-2020}
\bibfield{author}{\bibinfo{person}{Jean~Christoph Jung},
  \bibinfo{person}{Carsten Lutz}, \bibinfo{person}{Hadrien Pulcini}, {and}
  \bibinfo{person}{Frank Wolter}.} \bibinfo{year}{2020}\natexlab{}.
\newblock \showarticletitle{Logical Separability of Incomplete Data under
  Ontologies}.
\newblock \bibinfo{journal}{\emph{CoRR}}  \bibinfo{volume}{abs/2007.01610}
  (\bibinfo{year}{2020}).
\newblock
\urldef\tempurl%
\url{https://arxiv.org/abs/2007.01610}
\showURL{%
\tempurl}


\bibitem[\protect\citeauthoryear{Khalimov, Maderbacher, and Bloem}{Khalimov
  et~al\mbox{.}}{2018}]%
        {khalimov18}
\bibfield{author}{\bibinfo{person}{Ayrat Khalimov}, \bibinfo{person}{Benedikt
  Maderbacher}, {and} \bibinfo{person}{Roderick Bloem}.}
  \bibinfo{year}{2018}\natexlab{}.
\newblock \showarticletitle{Bounded Synthesis of Register Transducers}. In
  \bibinfo{booktitle}{\emph{Automated Technology for Verification and
  Analysis}}, \bibfield{editor}{\bibinfo{person}{Shuvendu~K. Lahiri} {and}
  \bibinfo{person}{Chao Wang}} (Eds.). \bibinfo{publisher}{Springer
  International Publishing}, \bibinfo{address}{Cham},
  \bibinfo{pages}{494--510}.
\newblock
\showISBNx{978-3-030-01090-4}
\urldef\tempurl%
\url{https://doi.org/10.1007/978-3-030-01090-4_29}
\showURL{%
\tempurl}


\bibitem[\protect\citeauthoryear{Koenig, Padon, Immerman, and Aiken}{Koenig
  et~al\mbox{.}}{2020}]%
        {aiken-fo-sep}
\bibfield{author}{\bibinfo{person}{Jason~R. Koenig}, \bibinfo{person}{Oded
  Padon}, \bibinfo{person}{Neil Immerman}, {and} \bibinfo{person}{Alex Aiken}.}
  \bibinfo{year}{2020}\natexlab{}.
\newblock \showarticletitle{First-Order Quantified Separators}. In
  \bibinfo{booktitle}{\emph{Proceedings of the 41st ACM SIGPLAN Conference on
  Programming Language Design and Implementation}} (London, UK)
  \emph{(\bibinfo{series}{PLDI 2020})}. \bibinfo{publisher}{Association for
  Computing Machinery}, \bibinfo{address}{New York, NY, USA},
  \bibinfo{pages}{703–717}.
\newblock
\showISBNx{9781450376136}
\urldef\tempurl%
\url{https://doi.org/10.1145/3385412.3386018}
\showDOI{\tempurl}


\bibitem[\protect\citeauthoryear{Krogmeier and Madhusudan}{Krogmeier and
  Madhusudan}{2021}]%
        {fullversion}
\bibfield{author}{\bibinfo{person}{Paul Krogmeier} {and} \bibinfo{person}{P.
  Madhusudan}.} \bibinfo{year}{2021}\natexlab{}.
\newblock \bibinfo{title}{Learning Formulas in Finite Variable Logics}.
\newblock
\newblock
\showeprint[arxiv]{2111.03534}~[cs.LO]


\bibitem[\protect\citeauthoryear{Krogmeier, Mathur, Murali, Madhusudan, and
  Viswanathan}{Krogmeier et~al\mbox{.}}{2020}]%
        {uninterpretedsynth}
\bibfield{author}{\bibinfo{person}{Paul Krogmeier}, \bibinfo{person}{Umang
  Mathur}, \bibinfo{person}{Adithya Murali}, \bibinfo{person}{P. Madhusudan},
  {and} \bibinfo{person}{Mahesh Viswanathan}.} \bibinfo{year}{2020}\natexlab{}.
\newblock \showarticletitle{Decidable Synthesis of Programs with Uninterpreted
  Functions}. In \bibinfo{booktitle}{\emph{Computer Aided Verification}},
  \bibfield{editor}{\bibinfo{person}{Shuvendu~K. Lahiri} {and}
  \bibinfo{person}{Chao Wang}} (Eds.). \bibinfo{publisher}{Springer
  International Publishing}, \bibinfo{address}{Cham},
  \bibinfo{pages}{634--657}.
\newblock
\showISBNx{978-3-030-53291-8}
\urldef\tempurl%
\url{https://doi.org/10.1007/978-3-030-53291-8_32}
\showURL{%
\tempurl}


\bibitem[\protect\citeauthoryear{Kupferman, Madhusudan, Thiagarajan, and
  Vardi}{Kupferman et~al\mbox{.}}{2000}]%
        {KMTV00}
\bibfield{author}{\bibinfo{person}{Orna Kupferman}, \bibinfo{person}{P.
  Madhusudan}, \bibinfo{person}{P.~S. Thiagarajan}, {and}
  \bibinfo{person}{Moshe~Y. Vardi}.} \bibinfo{year}{2000}\natexlab{}.
\newblock \showarticletitle{Open Systems in Reactive Environments: Control and
  Synthesis}. In \bibinfo{booktitle}{\emph{{CONCUR}}}
  \emph{(\bibinfo{series}{Lecture Notes in Computer Science},
  Vol.~\bibinfo{volume}{1877})}. \bibinfo{publisher}{Springer},
  \bibinfo{pages}{92--107}.
\newblock
\urldef\tempurl%
\url{https://doi.org/10.1007/3-540-44618-4_9}
\showURL{%
\tempurl}


\bibitem[\protect\citeauthoryear{Kupferman, Piterman, and Vardi}{Kupferman
  et~al\mbox{.}}{2010}]%
        {kpvPneuli}
\bibfield{author}{\bibinfo{person}{Orna Kupferman}, \bibinfo{person}{Nir
  Piterman}, {and} \bibinfo{person}{Moshe~Y. Vardi}.}
  \bibinfo{year}{2010}\natexlab{}.
\newblock \bibinfo{booktitle}{\emph{An Automata-Theoretic Approach to
  Infinite-State Systems}}.
\newblock \bibinfo{publisher}{Springer Berlin Heidelberg},
  \bibinfo{address}{Berlin, Heidelberg}, \bibinfo{pages}{202--259}.
\newblock
\showISBNx{978-3-642-13754-9}
\urldef\tempurl%
\url{https://doi.org/10.1007/978-3-642-13754-9\_11}
\showDOI{\tempurl}


\bibitem[\protect\citeauthoryear{Law, Russo, and Broda}{Law
  et~al\mbox{.}}{2014}]%
        {ilasp}
\bibfield{author}{\bibinfo{person}{Mark Law}, \bibinfo{person}{Alessandra
  Russo}, {and} \bibinfo{person}{Krysia Broda}.}
  \bibinfo{year}{2014}\natexlab{}.
\newblock \showarticletitle{Inductive Learning of Answer Set Programs}. In
  \bibinfo{booktitle}{\emph{Logics in Artificial Intelligence}},
  \bibfield{editor}{\bibinfo{person}{Eduardo Ferm{\'e}} {and}
  \bibinfo{person}{Jo{\~a}o Leite}} (Eds.). \bibinfo{publisher}{Springer
  International Publishing}, \bibinfo{address}{Cham},
  \bibinfo{pages}{311--325}.
\newblock
\showISBNx{978-3-319-11558-0}
\urldef\tempurl%
\url{https://doi.org/10.1007/978-3-319-11558-0_22}
\showURL{%
\tempurl}


\bibitem[\protect\citeauthoryear{Libkin}{Libkin}{2004}]%
        {libkinmodeltheory}
\bibfield{author}{\bibinfo{person}{Leonid Libkin}.}
  \bibinfo{year}{2004}\natexlab{}.
\newblock \bibinfo{booktitle}{\emph{Elements Of Finite Model Theory (Texts in
  Theoretical Computer Science. An Eatcs Series)}}.
\newblock \bibinfo{publisher}{SpringerVerlag}.
\newblock
\showISBNx{3540212027}
\urldef\tempurl%
\url{https://doi.org/10.1007/978-3-662-07003-1}
\showURL{%
\tempurl}


\bibitem[\protect\citeauthoryear{Madhusudan}{Madhusudan}{2011}]%
        {madhuCSL11}
\bibfield{author}{\bibinfo{person}{P. Madhusudan}.}
  \bibinfo{year}{2011}\natexlab{}.
\newblock \showarticletitle{{Synthesizing Reactive Programs}}. In
  \bibinfo{booktitle}{\emph{Computer Science Logic (CSL'11) - 25th
  International Workshop/20th Annual Conference of the EACSL}}
  \emph{(\bibinfo{series}{Leibniz International Proceedings in Informatics
  (LIPIcs)}, Vol.~\bibinfo{volume}{12})},
  \bibfield{editor}{\bibinfo{person}{Marc Bezem}} (Ed.).
  \bibinfo{publisher}{Schloss Dagstuhl--Leibniz-Zentrum fuer Informatik},
  \bibinfo{address}{Dagstuhl, Germany}, \bibinfo{pages}{428--442}.
\newblock
\showISBNx{978-3-939897-32-3}
\showISSN{1868-8969}
\urldef\tempurl%
\url{https://doi.org/10.4230/LIPIcs.CSL.2011.428}
\showDOI{\tempurl}


\bibitem[\protect\citeauthoryear{Madhusudan and Thiagarajan}{Madhusudan and
  Thiagarajan}{2001}]%
        {madhudistsynth}
\bibfield{author}{\bibinfo{person}{P. Madhusudan} {and} \bibinfo{person}{P.~S.
  Thiagarajan}.} \bibinfo{year}{2001}\natexlab{}.
\newblock \showarticletitle{Distributed Controller Synthesis for Local
  Specifications}. In \bibinfo{booktitle}{\emph{{ICALP}}}
  \emph{(\bibinfo{series}{Lecture Notes in Computer Science},
  Vol.~\bibinfo{volume}{2076})}. \bibinfo{publisher}{Springer},
  \bibinfo{pages}{396--407}.
\newblock
\urldef\tempurl%
\url{https://doi.org/10.1007/3-540-48224-5_33}
\showURL{%
\tempurl}


\bibitem[\protect\citeauthoryear{Manna and Waldinger}{Manna and
  Waldinger}{1980}]%
        {manna-waldinger}
\bibfield{author}{\bibinfo{person}{Zohar Manna} {and} \bibinfo{person}{Richard
  Waldinger}.} \bibinfo{year}{1980}\natexlab{}.
\newblock \showarticletitle{A Deductive Approach to Program Synthesis}.
\newblock \bibinfo{journal}{\emph{ACM Trans. Program. Lang. Syst.}}
  \bibinfo{volume}{2}, \bibinfo{number}{1} (\bibinfo{date}{Jan.}
  \bibinfo{year}{1980}), \bibinfo{pages}{90–121}.
\newblock
\showISSN{0164-0925}
\urldef\tempurl%
\url{https://doi.org/10.1145/357084.357090}
\showDOI{\tempurl}


\bibitem[\protect\citeauthoryear{Mathur, Madhusudan, and Viswanathan}{Mathur
  et~al\mbox{.}}{2019}]%
        {coherence2019}
\bibfield{author}{\bibinfo{person}{Umang Mathur}, \bibinfo{person}{P.
  Madhusudan}, {and} \bibinfo{person}{Mahesh Viswanathan}.}
  \bibinfo{year}{2019}\natexlab{}.
\newblock \showarticletitle{Decidable Verification of Uninterpreted Programs}.
\newblock \bibinfo{journal}{\emph{Proc. ACM Program. Lang.}}
  \bibinfo{volume}{3}, \bibinfo{number}{POPL}, Article \bibinfo{articleno}{46}
  (\bibinfo{date}{Jan.} \bibinfo{year}{2019}), \bibinfo{numpages}{29}~pages.
\newblock
\showISSN{2475-1421}
\urldef\tempurl%
\url{https://doi.org/10.1145/3290359}
\showDOI{\tempurl}


\bibitem[\protect\citeauthoryear{Mitchell}{Mitchell}{1982}]%
        {tom-mitchell-vsa}
\bibfield{author}{\bibinfo{person}{Tom~M. Mitchell}.}
  \bibinfo{year}{1982}\natexlab{}.
\newblock \showarticletitle{Generalization as search}.
\newblock \bibinfo{journal}{\emph{Artificial Intelligence}}
  \bibinfo{volume}{18}, \bibinfo{number}{2} (\bibinfo{year}{1982}),
  \bibinfo{pages}{203--226}.
\newblock
\showISSN{0004-3702}
\urldef\tempurl%
\url{https://doi.org/10.1016/0004-3702(82)90040-6}
\showDOI{\tempurl}


\bibitem[\protect\citeauthoryear{Mitchell}{Mitchell}{1997}]%
        {mitchell97}
\bibfield{author}{\bibinfo{person}{Thomas~M. Mitchell}.}
  \bibinfo{year}{1997}\natexlab{}.
\newblock \bibinfo{booktitle}{\emph{Machine Learning} (\bibinfo{edition}{1}
  ed.)}.
\newblock \bibinfo{publisher}{McGraw-Hill, Inc.}, \bibinfo{address}{USA}.
\newblock
\showISBNx{0070428077}


\bibitem[\protect\citeauthoryear{Muggleton and {de Raedt}}{Muggleton and {de
  Raedt}}{1994}]%
        {ilp}
\bibfield{author}{\bibinfo{person}{Stephen Muggleton} {and}
  \bibinfo{person}{Luc {de Raedt}}.} \bibinfo{year}{1994}\natexlab{}.
\newblock \showarticletitle{Inductive Logic Programming: Theory and methods}.
\newblock \bibinfo{journal}{\emph{The Journal of Logic Programming}}
  \bibinfo{volume}{19-20} (\bibinfo{year}{1994}), \bibinfo{pages}{629--679}.
\newblock
\showISSN{0743-1066}
\urldef\tempurl%
\url{https://doi.org/10.1016/0743-1066(94)90035-3}
\showDOI{\tempurl}


\bibitem[\protect\citeauthoryear{Muggleton, Lin, Pahlavi, and
  Tamaddoni-Nezhad}{Muggleton et~al\mbox{.}}{2014}]%
        {mil-muggleton}
\bibfield{author}{\bibinfo{person}{Stephen~H. Muggleton},
  \bibinfo{person}{Dianhuan Lin}, \bibinfo{person}{Niels Pahlavi}, {and}
  \bibinfo{person}{Alireza Tamaddoni-Nezhad}.} \bibinfo{year}{2014}\natexlab{}.
\newblock \showarticletitle{Meta-interpretive learning: application to
  grammatical inference}.
\newblock \bibinfo{journal}{\emph{Machine Learning}} \bibinfo{volume}{94},
  \bibinfo{number}{1} (\bibinfo{date}{01 Jan} \bibinfo{year}{2014}),
  \bibinfo{pages}{25--49}.
\newblock
\showISSN{1573-0565}
\urldef\tempurl%
\url{https://doi.org/10.1007/s10994-013-5358-3}
\showDOI{\tempurl}


\bibitem[\protect\citeauthoryear{Neider and Gavran}{Neider and Gavran}{2018}]%
        {neider-ltl-learning}
\bibfield{author}{\bibinfo{person}{Daniel Neider} {and} \bibinfo{person}{Ivan
  Gavran}.} \bibinfo{year}{2018}\natexlab{}.
\newblock \showarticletitle{Learning Linear Temporal Properties}. In
  \bibinfo{booktitle}{\emph{2018 Formal Methods in Computer Aided Design
  (FMCAD)}}. \bibinfo{pages}{1--10}.
\newblock
\urldef\tempurl%
\url{https://doi.org/10.23919/FMCAD.2018.8603016}
\showDOI{\tempurl}


\bibitem[\protect\citeauthoryear{Osera and Zdancewic}{Osera and
  Zdancewic}{2015}]%
        {type-directed-synth-osera}
\bibfield{author}{\bibinfo{person}{Peter-Michael Osera} {and}
  \bibinfo{person}{Steve Zdancewic}.} \bibinfo{year}{2015}\natexlab{}.
\newblock \showarticletitle{Type-and-Example-Directed Program Synthesis}. In
  \bibinfo{booktitle}{\emph{Proceedings of the 36th ACM SIGPLAN Conference on
  Programming Language Design and Implementation}} (Portland, OR, USA)
  \emph{(\bibinfo{series}{PLDI '15})}. \bibinfo{publisher}{Association for
  Computing Machinery}, \bibinfo{address}{New York, NY, USA},
  \bibinfo{pages}{619–630}.
\newblock
\showISBNx{9781450334686}
\urldef\tempurl%
\url{https://doi.org/10.1145/2737924.2738007}
\showDOI{\tempurl}


\bibitem[\protect\citeauthoryear{Pnueli and Rosner}{Pnueli and Rosner}{1989}]%
        {PR89}
\bibfield{author}{\bibinfo{person}{Amir Pnueli} {and} \bibinfo{person}{Roni
  Rosner}.} \bibinfo{year}{1989}\natexlab{}.
\newblock \showarticletitle{On the Synthesis of a Reactive Module}. In
  \bibinfo{booktitle}{\emph{{POPL}}}. \bibinfo{publisher}{{ACM} Press},
  \bibinfo{pages}{179--190}.
\newblock
\urldef\tempurl%
\url{https://doi.org/10.1145/75277.75293}
\showURL{%
\tempurl}


\bibitem[\protect\citeauthoryear{Pnueli and Rosner}{Pnueli and Rosner}{1990}]%
        {PR90}
\bibfield{author}{\bibinfo{person}{Amir Pnueli} {and} \bibinfo{person}{Roni
  Rosner}.} \bibinfo{year}{1990}\natexlab{}.
\newblock \showarticletitle{Distributed Reactive Systems Are Hard to
  Synthesize}. In \bibinfo{booktitle}{\emph{{FOCS}}}.
  \bibinfo{publisher}{{IEEE} Computer Society}, \bibinfo{pages}{746--757}.
\newblock
\urldef\tempurl%
\url{https://doi.org/10.1109/FSCS.1990.89597}
\showURL{%
\tempurl}


\bibitem[\protect\citeauthoryear{Polikarpova, Kuraj, and
  Solar-Lezama}{Polikarpova et~al\mbox{.}}{2016}]%
        {synth-refinement-types-nadia16}
\bibfield{author}{\bibinfo{person}{Nadia Polikarpova}, \bibinfo{person}{Ivan
  Kuraj}, {and} \bibinfo{person}{Armando Solar-Lezama}.}
  \bibinfo{year}{2016}\natexlab{}.
\newblock \showarticletitle{Program Synthesis from Polymorphic Refinement
  Types}. In \bibinfo{booktitle}{\emph{Proceedings of the 37th ACM SIGPLAN
  Conference on Programming Language Design and Implementation}} (Santa
  Barbara, CA, USA) \emph{(\bibinfo{series}{PLDI '16})}.
  \bibinfo{publisher}{Association for Computing Machinery},
  \bibinfo{address}{New York, NY, USA}, \bibinfo{pages}{522–538}.
\newblock
\showISBNx{9781450342612}
\urldef\tempurl%
\url{https://doi.org/10.1145/2908080.2908093}
\showDOI{\tempurl}


\bibitem[\protect\citeauthoryear{Polozov and Gulwani}{Polozov and
  Gulwani}{2015}]%
        {flashmeta}
\bibfield{author}{\bibinfo{person}{Oleksandr Polozov} {and}
  \bibinfo{person}{Sumit Gulwani}.} \bibinfo{year}{2015}\natexlab{}.
\newblock \showarticletitle{FlashMeta: A Framework for Inductive Program
  Synthesis}. In \bibinfo{booktitle}{\emph{Proceedings of the 2015 ACM SIGPLAN
  International Conference on Object-Oriented Programming, Systems, Languages,
  and Applications}} (Pittsburgh, PA, USA) \emph{(\bibinfo{series}{OOPSLA
  2015})}. \bibinfo{publisher}{Association for Computing Machinery},
  \bibinfo{address}{New York, NY, USA}, \bibinfo{pages}{107–126}.
\newblock
\showISBNx{9781450336895}
\urldef\tempurl%
\url{https://doi.org/10.1145/2814270.2814310}
\showDOI{\tempurl}


\bibitem[\protect\citeauthoryear{Rabin}{Rabin}{1972}]%
        {Rabin72}
\bibfield{author}{\bibinfo{person}{Michael~Oser Rabin}.}
  \bibinfo{year}{1972}\natexlab{}.
\newblock \bibinfo{booktitle}{\emph{Automata on Infinite Objects and Church's
  Problem}}.
\newblock \bibinfo{publisher}{American Mathematical Society},
  \bibinfo{address}{Boston, MA, USA}.
\newblock
\showISBNx{0821816632}
\urldef\tempurl%
\url{https://dl.acm.org/doi/10.5555/540412}
\showURL{%
\tempurl}


\bibitem[\protect\citeauthoryear{Shen, Chakrabarti, Chaudhuri, Ding, and
  Novik}{Shen et~al\mbox{.}}{2014}]%
        {discovering-queries-sigmod14}
\bibfield{author}{\bibinfo{person}{Yanyan Shen}, \bibinfo{person}{Kaushik
  Chakrabarti}, \bibinfo{person}{Surajit Chaudhuri}, \bibinfo{person}{Bolin
  Ding}, {and} \bibinfo{person}{Lev Novik}.} \bibinfo{year}{2014}\natexlab{}.
\newblock \showarticletitle{Discovering Queries Based on Example Tuples}. In
  \bibinfo{booktitle}{\emph{Proceedings of the 2014 ACM SIGMOD International
  Conference on Management of Data}} (Snowbird, Utah, USA)
  \emph{(\bibinfo{series}{SIGMOD '14})}. \bibinfo{publisher}{Association for
  Computing Machinery}, \bibinfo{address}{New York, NY, USA},
  \bibinfo{pages}{493–504}.
\newblock
\showISBNx{9781450323765}
\urldef\tempurl%
\url{https://doi.org/10.1145/2588555.2593664}
\showDOI{\tempurl}


\bibitem[\protect\citeauthoryear{Solar-Lezama, Tancau, Bodik, Seshia, and
  Saraswat}{Solar-Lezama et~al\mbox{.}}{2006}]%
        {sketch}
\bibfield{author}{\bibinfo{person}{Armando Solar-Lezama},
  \bibinfo{person}{Liviu Tancau}, \bibinfo{person}{Rastislav Bodik},
  \bibinfo{person}{Sanjit Seshia}, {and} \bibinfo{person}{Vijay Saraswat}.}
  \bibinfo{year}{2006}\natexlab{}.
\newblock \showarticletitle{Combinatorial Sketching for Finite Programs}. In
  \bibinfo{booktitle}{\emph{Proceedings of the 12th International Conference on
  Architectural Support for Programming Languages and Operating Systems}} (San
  Jose, California, USA) \emph{(\bibinfo{series}{ASPLOS XII})}.
  \bibinfo{publisher}{Association for Computing Machinery},
  \bibinfo{address}{New York, NY, USA}, \bibinfo{pages}{404–415}.
\newblock
\showISBNx{1595934510}
\urldef\tempurl%
\url{https://doi.org/10.1145/1168857.1168907}
\showDOI{\tempurl}


\bibitem[\protect\citeauthoryear{Thakkar, Naik, Sands, Alur, Naik, and
  Raghothaman}{Thakkar et~al\mbox{.}}{2021}]%
        {thakkar-example-guided-synth}
\bibfield{author}{\bibinfo{person}{Aalok Thakkar}, \bibinfo{person}{Aaditya
  Naik}, \bibinfo{person}{Nathaniel Sands}, \bibinfo{person}{Rajeev Alur},
  \bibinfo{person}{Mayur Naik}, {and} \bibinfo{person}{Mukund Raghothaman}.}
  \bibinfo{year}{2021}\natexlab{}.
\newblock \showarticletitle{Example-Guided Synthesis of Relational Queries}. In
  \bibinfo{booktitle}{\emph{Proceedings of the 42nd ACM SIGPLAN International
  Conference on Programming Language Design and Implementation}} (Virtual,
  Canada) \emph{(\bibinfo{series}{PLDI 2021})}. \bibinfo{publisher}{Association
  for Computing Machinery}, \bibinfo{address}{New York, NY, USA},
  \bibinfo{pages}{1110–1125}.
\newblock
\showISBNx{9781450383912}
\urldef\tempurl%
\url{https://doi.org/10.1145/3453483.3454098}
\showDOI{\tempurl}


\bibitem[\protect\citeauthoryear{van Bergerem}{van Bergerem}{2019}]%
        {learning-in-fo-with-counting}
\bibfield{author}{\bibinfo{person}{Steffen van Bergerem}.}
  \bibinfo{year}{2019}\natexlab{}.
\newblock \showarticletitle{Learning Concepts Definable in First-Order Logic
  with Counting}. In \bibinfo{booktitle}{\emph{2019 34th Annual ACM/IEEE
  Symposium on Logic in Computer Science (LICS)}}. \bibinfo{pages}{1--13}.
\newblock
\urldef\tempurl%
\url{https://doi.org/10.1109/LICS.2019.8785811}
\showDOI{\tempurl}


\bibitem[\protect\citeauthoryear{van Bergerem, Grohe, and Ritzert}{van Bergerem
  et~al\mbox{.}}{2021}]%
        {van-bergerem-2021}
\bibfield{author}{\bibinfo{person}{Steffen van Bergerem},
  \bibinfo{person}{Martin Grohe}, {and} \bibinfo{person}{Martin Ritzert}.}
  \bibinfo{year}{2021}\natexlab{}.
\newblock \showarticletitle{On the Parameterized Complexity of Learning Logic}.
\newblock \bibinfo{journal}{\emph{CoRR}}  \bibinfo{volume}{abs/2102.12201}
  (\bibinfo{year}{2021}).
\newblock
\showeprint[arxiv]{2102.12201}


\bibitem[\protect\citeauthoryear{Vardi}{Vardi}{1982}]%
        {vardi82}
\bibfield{author}{\bibinfo{person}{Moshe~Y. Vardi}.}
  \bibinfo{year}{1982}\natexlab{}.
\newblock \showarticletitle{The Complexity of Relational Query Languages
  (Extended Abstract)}. In \bibinfo{booktitle}{\emph{Proceedings of the
  Fourteenth Annual ACM Symposium on Theory of Computing}} (San Francisco,
  California, USA) \emph{(\bibinfo{series}{STOC '82})}.
  \bibinfo{publisher}{Association for Computing Machinery},
  \bibinfo{address}{New York, NY, USA}, \bibinfo{pages}{137–146}.
\newblock
\showISBNx{0897910702}
\urldef\tempurl%
\url{https://doi.org/10.1145/800070.802186}
\showDOI{\tempurl}


\bibitem[\protect\citeauthoryear{Vardi}{Vardi}{1998}]%
        {two-way-vardi}
\bibfield{author}{\bibinfo{person}{Moshe~Y. Vardi}.}
  \bibinfo{year}{1998}\natexlab{}.
\newblock \showarticletitle{Reasoning about the past with two-way automata}. In
  \bibinfo{booktitle}{\emph{Automata, Languages and Programming}},
  \bibfield{editor}{\bibinfo{person}{Kim~G. Larsen}, \bibinfo{person}{Sven
  Skyum}, {and} \bibinfo{person}{Glynn Winskel}} (Eds.).
  \bibinfo{publisher}{Springer Berlin Heidelberg}, \bibinfo{address}{Berlin,
  Heidelberg}, \bibinfo{pages}{628--641}.
\newblock
\showISBNx{978-3-540-68681-1}
\urldef\tempurl%
\url{https://doi.org/10.1007/BFb0055090}
\showURL{%
\tempurl}


\bibitem[\protect\citeauthoryear{Vianu}{Vianu}{2020}]%
        {vianupersonalcommunication}
\bibfield{author}{\bibinfo{person}{Victor Vianu}.}
  \bibinfo{year}{2020}\natexlab{}.
\newblock \bibinfo{howpublished}{Personal communication}.
\newblock


\bibitem[\protect\citeauthoryear{Wang, Cheung, and Bodik}{Wang
  et~al\mbox{.}}{2017a}]%
        {synthesizing-sql-cheung17}
\bibfield{author}{\bibinfo{person}{Chenglong Wang}, \bibinfo{person}{Alvin
  Cheung}, {and} \bibinfo{person}{Rastislav Bodik}.}
  \bibinfo{year}{2017}\natexlab{a}.
\newblock \showarticletitle{Synthesizing Highly Expressive SQL Queries from
  Input-Output Examples}. In \bibinfo{booktitle}{\emph{Proceedings of the 38th
  ACM SIGPLAN Conference on Programming Language Design and Implementation}}
  (Barcelona, Spain) \emph{(\bibinfo{series}{PLDI 2017})}.
  \bibinfo{publisher}{Association for Computing Machinery},
  \bibinfo{address}{New York, NY, USA}, \bibinfo{pages}{452–466}.
\newblock
\showISBNx{9781450349888}
\urldef\tempurl%
\url{https://doi.org/10.1145/3062341.3062365}
\showDOI{\tempurl}


\bibitem[\protect\citeauthoryear{Wang, Dillig, and Singh}{Wang
  et~al\mbox{.}}{2017b}]%
        {Wang2017}
\bibfield{author}{\bibinfo{person}{Xinyu Wang}, \bibinfo{person}{Isil Dillig},
  {and} \bibinfo{person}{Rishabh Singh}.} \bibinfo{year}{2017}\natexlab{b}.
\newblock \showarticletitle{Program Synthesis Using Abstraction Refinement}.
\newblock \bibinfo{journal}{\emph{Proc. ACM Program. Lang.}}
  \bibinfo{volume}{2}, \bibinfo{number}{POPL}, Article \bibinfo{articleno}{63}
  (\bibinfo{date}{Dec.} \bibinfo{year}{2017}), \bibinfo{numpages}{30}~pages.
\newblock
\showISSN{2475-1421}
\urldef\tempurl%
\url{https://doi.org/10.1145/3158151}
\showDOI{\tempurl}


\bibitem[\protect\citeauthoryear{Wang, Dillig, and Singh}{Wang
  et~al\mbox{.}}{2017c}]%
        {fta-data-completion-scripts}
\bibfield{author}{\bibinfo{person}{Xinyu Wang}, \bibinfo{person}{Isil Dillig},
  {and} \bibinfo{person}{Rishabh Singh}.} \bibinfo{year}{2017}\natexlab{c}.
\newblock \showarticletitle{Synthesis of Data Completion Scripts Using Finite
  Tree Automata}.
\newblock \bibinfo{journal}{\emph{Proc. ACM Program. Lang.}}
  \bibinfo{volume}{1}, \bibinfo{number}{OOPSLA}, Article
  \bibinfo{articleno}{62} (\bibinfo{date}{Oct.} \bibinfo{year}{2017}),
  \bibinfo{numpages}{26}~pages.
\newblock
\urldef\tempurl%
\url{https://doi.org/10.1145/3133886}
\showDOI{\tempurl}


\bibitem[\protect\citeauthoryear{Wang, Wang, and Dillig}{Wang
  et~al\mbox{.}}{2018}]%
        {WangWangDilligOOPSLA18}
\bibfield{author}{\bibinfo{person}{Yuepeng Wang}, \bibinfo{person}{Xinyu Wang},
  {and} \bibinfo{person}{Isil Dillig}.} \bibinfo{year}{2018}\natexlab{}.
\newblock \showarticletitle{Relational Program Synthesis}.
\newblock \bibinfo{journal}{\emph{Proc. ACM Program. Lang.}}
  \bibinfo{volume}{2}, \bibinfo{number}{OOPSLA}, Article
  \bibinfo{articleno}{155} (\bibinfo{date}{Oct.} \bibinfo{year}{2018}),
  \bibinfo{numpages}{27}~pages.
\newblock
\showISSN{2475-1421}
\urldef\tempurl%
\url{https://doi.org/10.1145/3276525}
\showDOI{\tempurl}


\end{thebibliography}

\ifappendix
\newpage
\appendix
\section{Two-way tree automata}
\applabel{two-way-details}

\subsubsection{Two-way tree automata}
\label{sec:two-way-tree}
Two-way (alternating) tree automata over $\kappa$-ary trees generalize
ATAs by enabling them to transition not only down into children but
also \emph{up} to the parent on the input tree. This extra capability
is formally captured by generalizing the transition function from the
form
$\delta : Q\times\Sigma\rightarrow \Bb^+(Q\times\{1,\ldots,\kappa\})$
to the form
$\delta : Q\times\Sigma \rightarrow
\Bb^+(Q\times\{-1,\ldots,\kappa\})$, where $-1$ means \emph{ascending}
to a parent node and $0$ means staying at the current node. In the
context of two-way automata only, we use $[\kappa]$ to denote the set
$\{-1,\ldots,\kappa\}$ for $\kappa\in\Nat$. For each
$(q,a)\in Q\times\alphabet$ we require
$\delta(q,a)\in \Bb^+(Q\times[\arity{a}])$. For
$x\in\Nat^*, j\in\Nat,$ let $(x\cdot j)\cdot -1 = x$ and
$(x\cdot j)\cdot 0 = x\cdot j$ and let $\epsilon\cdot -1$ be
undefined.

A two-way tree automaton is a tuple $\aut=\la Q,\Sigma,I,\delta,F\ra$,
where the only difference with ATAs is the acceptance condition
$F\subseteq Q$ and the definition of $\delta$ as described above. The
notion of a run in the two-way case is identical to the definition for
one-way alternating automata from \csecref{ata}, modulo the difference
in $\delta$. Nevertheless, we repeat the definition here and encourage
the reader to observe that the nodes of a run may be labeled by
positions in the input tree that go both up and down. A \emph{run} for
a two-way tree automaton $\aut=\la Q,\Sigma,I,\delta,F\ra$ on a tree
$t\in T_\Sigma$ is an ordered tree $\rho$ over $Q\times \treepos(t)$
satisfying the following two conditions:
\begin{itemize}
\item $\rho(\epsilon) = (q_i,\epsilon)$ for some state $q_i\in I$
\item Let $p\in \treepos(\rho)$. If $\rho(p) = (q,x)$ with
  $t(x) = a$, then there is a subset
  $S = \{(q_{1},i_1),\ldots,(q_{l},i_l)\}\subseteq Q\times
  [\arity{a}]$ such that $S\models \delta(q,a)$ and
  $\rho(p\cdot j) = (q_j,x\cdot i_j)$ for $1\le j\le l$.
\end{itemize}
Observe that a run for a two-way automaton may be infinite on a finite
input tree. We want only finite runs to be accepting, rather than
simply requiring the existence of a run, as we did for ATAs. This
corresponds to a \emph{reachability} acceptance condition, wherein a
run is \emph{accepting} if every branch reaches some state $q_f\in F$,
and a two-way automaton \emph{accepts} a tree if it has an accepting
run on it.

Note that the two-way tree automata we use here are no more expressive
than alternating tree automata, and there are algorithms to convert a
two-way automaton to a one-way automaton~\cite{two-way-vardi}, and
thus membership and emptiness are decidable. We next sketch this
conversion.

\section{Two-way tree automata to one-way ($\csecref{tree-automata-foklfp}$)}
\applabel{two-way-conversion-gist}

We can convert the two-way automaton into a language-equivalent
nondeterministic automaton by adapting the technique
of~\cite{two-way-vardi}. The key idea is to view the membership
problem of the two-way automaton as a finite reachability game, where
a Protagonist is trying to show the automaton has an accepting run and
the Antagonist is trying to refute this. Game positions are pairs of
automaton states and positions on the input tree. From the position
$(q,x)\in Q\times\treepos(t)$, with $t(x) = a\in\alphabet$, the
Protagonist picks a move $S\subseteq Q\times [\arity{a}]$ and the
Antagonist responds by picking $s\in S$, with the game continuing from
the position indicated by $s$. Play begins from the state
$(q_i,\epsilon)$. The Protagonist wins if she has a winning strategy
to reach game positions of the form $(q_f,x)$ for $x\in\treepos(t)$,
otherwise the Antagonist wins. These notions are standard for finite
reachability games, and we refer the reader to ~\cite{Fritz2002} for
details.

We can define a top-down nondeterministic automaton (without
alternation) that reads trees annotated with winning strategy
information. Since the game is a finite reachability game, it is
determined with memoryless strategies
(see~\cite{automata-logics-games}). That is, each position in the game
is won by a single player using a strategy that does not depend on the
preceding history of plays. It follows that the strategy annotations
can be represented with a finite alphabet. There are now three related
challenges: (1) the strategy must be verified to comply with the
transition function, (2) all plays that could result from the strategy
must be winning, and (3) these must both be accomplished in a single
downward pass over the annotated input tree. The final
nondeterministic automaton over (unannotated) logical formulas is
obtained by projecting out the annotation, appealing to closure of
tree regular languages under homomorphisms.

The strategy annotation for a given node $x\in\treepos(t)$ in an input
tree $t$ must decide which states to go into and in what directions on
$t$ to go from each possible game position at $x$. The strategy
annotation associates to each node $x$ of $t$ a member of
$2^{Q\times[\arity{t(x)}]\times Q}$. At a game position $(q,x)$, the
strategy for the Protagonist corresponds to a subset
$X = \{ (d,q') \,\mid\, (q,d,q') \} \subseteq
2^{Q\times[\arity{t(x)}]\times Q}$, elements of which indicate a
direction on the input tree and a next state. Thus the final
nondeterministic automaton has number of states exponential in the
number of states for the two-way automaton, and this conversion can be
done in time exponential in the size of the two-way automaton.

\section{$\FOLFP$ automaton transitions ($\csecref{tree-automata-foklfp}$)}
\applabel{lfp-transitions}

Recall we have fixed a structure $A$ and we are defining the
transitions for a two-way automaton $\aut_A$ that accepts the set of
$\FOLFPk$ formulas that are true in $A$. Recall the state space for
this automaton is defined as:
\begin{align*}
  Q &\coloneq
      \mathit{Dual}(\mathit{Assign})\times\mathit{Count}\times\mathit{Defn}
      \,\,\cup\,\,
      \mathit{Dual}(\mathit{Val})\times\mathit{Count}\times\mathit{Defn}
      \,\,\cup\,\, \{q_f\},
\end{align*}

In the rules below, $\gamma$ ranges over
$\mathit{Assign}\coloneq V\rightharpoonup \dom(A)$, $d$ ranges over
$\mathit{Dual}(\mathit{Val})$, $a$ ranges over $\mathit{Val}$, the
counter value $j$ ranges over
$\mathit{Count}\coloneq \{0,\ldots,n^r\}$, and $Y$ ranges over
$\mathit{Defn}\coloneq \{\bot, P_1,\ldots, P_{k'}\}$. All transitions
not covered below are $\fals$.

\begin{itemize}
\item[--] $\delta((\gamma, j, Y),\andsymb) = ((\gamma,j,Y),1)\wedge((\gamma,j,Y),2)$
\item[--] $\delta((\gamma, j, Y),\orsymb) = ((\gamma, j,Y),1)\vee((\gamma, j,Y),2)$
\item[--] $\delta((\gamma, j, Y),\allsymb{x}) =
  \bigwedge_{a\in\dom(A)}((\update{\gamma}{x}{a}, j, Y),1)$
\item[--] $\delta((\gamma, j, Y),\existsymb{x}) =
  \bigvee_{a\in\dom(A)}((\update{\gamma}{x}{a}, j, Y),1)$
\item[--]
  $\delta((\gamma, j, Y),R(\many{x})) = \begin{cases} \tru &
    \gamma(\many{x})\downarrow \,\,\text{and}\,\,A,\gamma\models
    R(\many{x})\\
    \fals & \text{otherwise}\\
  \end{cases}$
\item[--] $\delta((\gamma, j, Y),\negsymb) = ((\tilde{\gamma}, j, Y),1)$
\item[--] $\delta((\tilde{\gamma}, j, Y),\andsymb) = ((\tilde{\gamma},
  j, Y),1)\vee((\tilde{\gamma}, j, Y),2)$
\item[--] $\delta((\tilde{\gamma}, j, Y),\orsymb) = ((\tilde{\gamma},
  j, Y),1)\wedge((\tilde{\gamma}, j, Y),2)$
\item[--] $\delta((\tilde{\gamma}, j, Y),\allsymb{x}) =
  \bigvee_{a\in\dom(A)}((\tilde{\gamma'}, j, Y),1)$ \quad where $\gamma'=\update{\gamma}{x}{a}$
\item[--] $\delta((\tilde{\gamma}, j, Y),\existsymb{x}) =
  \bigwedge_{a\in\dom(A)}((\tilde{\gamma'}, j, Y),1)$ \quad where $\gamma'=\update{\gamma}{x}{a}$
\item[--] $\delta((\tilde{\gamma}, j, Y),R(\many{x})) = \begin{cases} \tru &
    \gamma(\many{x})\downarrow \,\,\text{and}\,\,A,\gamma\not\models
    R(\many{x})\\
    \fals & \text{otherwise}\\
  \end{cases}$
\item[--] $\delta((\tilde{\gamma}, j, Y),\negsymb) = ((\gamma, j, Y),1)$
\item[--] $\delta((d, j, Y), \andsymb) = ((d,j,Y),-1)$
\item[--] $\delta((d, j, Y), \orsymb) = ((d,j,Y),-1)$
\item[--] $\delta((d, j, Y), \allsymb{x}) = ((d,j,Y),-1)$
\item[--] $\delta((d, j, Y), \existsymb{x}) = ((d,j,Y),-1)$
\item[--] $\delta((d, j, Y), \negsymb) = ((d,j,Y),-1)$
\item[--] $\delta((\gamma, j, Y),\Let\,\,P_i(\many{x})) =
  ((\gamma, j, Y), 2)$
\item[--] $\delta((\gamma, j, Y), P_i(\many{x})) =
  ((a, n^r, P_i), -1)$ for $Y\neq P_i$,
  $\gamma(\many{x})\downarrow$, and $a=\gamma(\many{x})$
\item[--]
  $\delta((\tilde{\gamma}, j, Y), P_i(\many{x})) = ((\tilde{a}, n^r,
  P_i), -1)$ for $Y\neq P_i$, $\gamma(\many{x})\downarrow$, and
  $a=\gamma(\many{x})$
\item[--] $\delta((d, j, Y),\Let\,\,P_i(\many{x})) = ((d, j, Y), -1)$
  for $Y\neq P_i$
\item[--] $\delta((a, j, P_i),\Let\,\,P_i(\many{x})) =
  ((\gamma, j-1, P_i), 1)$ for $j>0$ and $\gamma=\cup_i\{x_i\mapsto a_i\}$
\item[--] $\delta((\tilde{\gamma}, 0, P_i),P_i(\many{x})) = \tru$
\item[--] $\delta((\tilde{a}, j, P_i),\Let\,\,P_i(\many{x})) =
  ((\tilde{\gamma}, j-1, P_i), 1)$ for $j>0$ and $\gamma=\cup_i\{x_i\mapsto a_i\}$
\item[--] $\delta((\gamma, 0, P_i),P_i(\many{x})) = \fals$
\item[--] $\delta((\gamma, j, P_i), P_i(\many{x})) =
  ((a, j, P_i), -1)$ for $j>0$ and $\gamma(\many{x})\downarrow$ and $a=\gamma(\many{x})$
\item[--] $\delta((\tilde{\gamma}, j, P_i), P_i(\many{x})) =
  ((\tilde{a}, j, P_i), -1)$ for $j>0$ and $\gamma(\many{x})\downarrow$ and $a=\gamma(\many{x})$
\item[--] $\delta(q_f, s) = \tru\,\,$ for any $s\in \alphabet'_{\FOLFPk}$
\end{itemize}

\paragraph{Correctness.} With regard to verifying true membership in a
defined relation, we can show by induction that the automaton always
has a run that reaches the final state $q_f$ along each branch. For
non-membership, on the other hand, the witness is that there is
\emph{no run witnessing the opposite}, i.e., that for every run there
is some branch that \emph{never reaches the final state}. It may not
be immediately obvious that such a witness can be finite
itself. However, any tuple must enter the computation within $n^r$
iterations. Thus it is sufficient to force the automaton to count down
from this bound while it checks that (for every run) some path does
not reach the final state. If the count reaches zero then there can be
no run witnessing membership.

It is more tedious to describe the language of the automaton at every
state (as was simple to do for the automaton without definitions and
least fixed points), given that it moves up and down on input
trees. The key for correctness is that, when the automaton is entering
a definition body from a down state
$\la\mathit{assign},j,\mathit{defn}\ra$, it checks that the membership
of the tuple of domain elements corresponding to $\mathit{assign}$ is
witnessed within the first $j$ stages of the least fixed point
computation for the definition. The dual case for checking
non-membership in the least fixed point is similar: the automaton
checks that membership is not witnessed in the first $j$ stages of the
computation. The finite height of $n^r$ of the powerset lattice for
definitions of arity $r$ and structure size $n$ gives the license to
stop when $j$ is $0$ and either allow the automaton to accept (if
checking non-membership) or reject (if checking membership).

\section{Term Synthesis Details}
\applabel{term-synthesis-appendix}

Here we give more details from ~\csecref{term-synthesis}: a ranked
alphabet for the language $\FOTERM$, the semantics for $\FOTERM$, and
the automaton transitions for reading terms in $\FOTERM$.

\subsection{Ranked Alphabet for $\FOTERMk$}
\applabel{foterm-alphabet}

A ranked alphabet for $\FOTERMk$ with definable functions from
$F=\{g_1,\ldots,g_{k_2}\}$:
\begin{align*}
  \alphabet_{\FOTERMk} &= \left\{
                         \Let\, g(\many{x})^2,\, \bigm\vert\,
                         g\in F,\,
                         \many{x}\in V^{\arity{g}}\right\} \cup \left\{ g^{\arity{g}}\right\} \cup \alphabet_{\FOLFP(k,k'-k_2)}
\end{align*}

\subsection{Semantics for $\FOTERM$}
\applabel{semantics-foterm}

The semantics for recursively-defined functions is as follows. Each
defined function $g$ of arity $d$ is interpreted as a partial function
$g^A : \dom(A)^d\rightharpoonup \dom(A)$, which is a member of the
bottomed partial order
$\mathcal{O} =\la \dom(A)^d\rightharpoonup \dom(a), \sqsubseteq,
\bot\ra$, where $\bot$ is undefined everywhere and $f\sqsubseteq f'$
holds if for all $\many{a}\in\dom(A)^d$, whenever
$f(\many{a})\downarrow$, then $f'(\many{a})\downarrow$ and
$f(\many{a})=f'(\many{a})$. This partial order has finite height since
we work with finite structures.  Now, suppose a function $g$ of arity
$d$ is defined recursively using a term $t(x_1,\ldots,x_d,g)$. We
associate a monotone function
$F_t : \mathcal{O}\rightarrow\mathcal{O}$ to the defining term $t$,
and let $g^A$ be the least fixed point of $F_t$, which can easily be
shown to exist and to be equal to the stable point of the finite chain
$\bot\sqsubseteq F_t(\bot)\sqsubseteq\cdots\sqsubseteq F^i_t(\bot) =
F^{i+1}_t(\bot)$, with $i\le |\dom(A)^d|$. The syntax and semantics
for terms in $\FOTERM$ guarantees monotonicity of the function $F_t$
for any term $t$, though we do not prove it here. It follows that
least fixed points exist for each definition. However, in general,
care is needed to ensure that the least fixed point is total on
$\dom(A)$, and whether or not this is so depends on the definition and
the structure $A$. Terms and formulas for $\FOTERM$ are interpreted in
a $3$-valued logic with interpretation functions
$\llbracket \varphi \rrbracket_{A,\gamma,D}$ and
$\llbracket t \rrbracket_{A,\gamma,D}$, where $A$ is a finite
structure, $\gamma$ is a partial variable assignment, and $D$ maps
definable symbols to the terms or formulas that define them. The
semantics is given in~\figref{forec-semantics}.

Partial functions disrupt the usual semantics of first-order logic
because we want to define entailment even when there are undefined
terms. As mentioned, we handle this with a 3-valued logic over
$\{\tru, \fals, \bot\}$, where $\bot$ means
\emph{undefined}. Undefinedness propagates across the various
formation rules of the logic as one might expect. If the classical
truth value of a formula cannot be determined based on the values of
its subformulas, then it has an undefined value. For example, the
formula $\varphi_0\wedge\varphi_1$ is undefined in a structure if
$\varphi_i$ has value $1$ and $\varphi_{1-i}$ is undefined
($i\in\{0,1\}$). (See ~\figref{forec-semantics} for the semantics.)

\paragraph{\bf Aside:} We note that another possibility for the
semantics, which would allow us to avoid a 3-valued logic, is to
interpret recursive functions as total functions on a lattice, which
could work as follows. Given an input structure $A$, the automaton
works over a modified structure $\Lat(A)$, which equips $A$ with a
lattice structure by introducing new elements $\bot,\top$ and putting
$x\le_{\Lat(A)} \top$ and $\bot\le x$ for all
$x\in\dom(A)$. Recursively-defined functions are then interpreted over
a partial order of functions
$F = \la \Lat(A)\rightarrow\Lat(A), \sqsubseteq\ra$, with
$f_1\sqsubseteq f_2$ if $f_1(x)\le_{\Lat(A)} f_2(x)$ for all
$x\in\Lat(A)$ and $f_1,f_2\in F$. If a definable function symbol $g$
is defined using term $t$, we can associate a function
$\mathcal{F}_t : F \rightarrow F$ and interpret $g$ as the least fixed
point of $\mathcal{F}_t$. The semantics of formulas is modified so
that quantification is defined only over elements in
$\dom(A)$. Furthermore, partial functions (e.g., $\mathit{car}$ from
the $\mathit{merge}$ example) can be modeled in $\Lat(A)$ as total
functions that use $\bot$ in the obvious way to model undefinedness,
and each function $f$ can be extended to have $f(\top) = \top$.

\begin{figure}
  \centering
  \scalebox{0.85}{
  \begin{minipage}{0.5\textwidth}
    \begin{align*}
      \llbracket R(\many{t}) \rrbracket_{A,\gamma,D} =
      \begin{cases}
        1 & \llbracket \many{t} \rrbracket_{A,\gamma,D}\downarrow,\,\,
        \llbracket \many{t} \rrbracket_{A,\gamma,D}\in R^A \\
        0 & \llbracket \many{t} \rrbracket_{A,\gamma,D}\downarrow,\,\,
        \llbracket \many{t} \rrbracket_{A,\gamma,D}\notin R^A \\
        \bot & \text{otherwise}
      \end{cases}
    \end{align*}
  \end{minipage}%
  \begin{minipage}{0.5\textwidth}
    \begin{align*}
      \llbracket \neg \varphi \rrbracket_{A,\gamma,D} =
      \begin{cases}
        1 & \llbracket \varphi \rrbracket_{A,\gamma,D} = 0 \\
        0 & \llbracket \varphi \rrbracket_{A,\gamma,D} = 1 \\
        \bot & \text{otherwise}
      \end{cases}
    \end{align*}
  \end{minipage}}

\scalebox{0.85}{
  \begin{minipage}{0.5\textwidth}
    \begin{align*}
      \llbracket \varphi_1\wedge\varphi_2 \rrbracket_{A,\gamma,D} =
      \begin{cases}
        1 & \llbracket \varphi_1 \rrbracket_{A,\gamma,D} = 1 \text{ and }
        \llbracket \varphi_2 \rrbracket_{A,\gamma,D} = 1 \\
        0 & \llbracket \varphi_1 \rrbracket_{A,\gamma,D} = 0 \text{ or }
        \llbracket \varphi_2 \rrbracket_{A,\gamma,D} = 0 \\
        \bot & \text{otherwise}
      \end{cases}
    \end{align*}
  \end{minipage}\hspace{0.2in}
  \begin{minipage}{0.5\textwidth}
    \begin{align*}
      \llbracket \varphi_1\vee\varphi_2 \rrbracket_{A,\gamma,D} =
      \begin{cases}
        1 & \llbracket \varphi_1 \rrbracket_{A,\gamma,D} = 1 \text{ or }
        \llbracket \varphi_2 \rrbracket_{A,\gamma,D} = 1 \\
        0 & \llbracket \varphi_1 \rrbracket_{A,\gamma,D} = 0 \text{ and }
        \llbracket \varphi_2 \rrbracket_{A,\gamma,D} = 0 \\
        \bot & \text{otherwise}
      \end{cases}
    \end{align*}
  \end{minipage}}

  \scalebox{0.85}{
  \begin{minipage}{0.5\textwidth}
    \begin{align*}
      \llbracket \forall x.\varphi \rrbracket_{A,\gamma,D} =
      \begin{cases}
        1 & \llbracket \varphi \rrbracket_{A,\update{\gamma}{x}{a},D} =
        1 \text{ all } a\in A \\
        0 & \llbracket \varphi \rrbracket_{A,\update{\gamma}{x}{a},D} =
        0 \text{ some } a\in A\\
        \bot & \text{ otherwise}
      \end{cases}
    \end{align*}
  \end{minipage}\hspace{0.2in}
  \begin{minipage}{0.5\textwidth}
    \begin{align*}
      \llbracket \exists x.\varphi \rrbracket_{A,\gamma,D} =
      \begin{cases}
        1 & \llbracket \varphi \rrbracket_{A,\update{\gamma}{x}{a},D} =
        1 \text{ some } a\in A \\
        0 & \llbracket \varphi \rrbracket_{A,\update{\gamma}{x}{a},D} =
        0 \text{ all } a\in A \\
        \bot & \text{ otherwise }
      \end{cases}
    \end{align*}
  \end{minipage}}

\scalebox{0.85}{
  \begin{minipage}{0.5\textwidth}
    \begin{align*}
      \llbracket \Let\,\,P(\many{x})=_{\lfp} \psi\,\, \In\,\, \varphi \rrbracket_{A,\gamma,D} &=
                                                                                                \llbracket \varphi \rrbracket_{A,\gamma,\update{D}{P\,}{\la \psi,D\ra}}
    \end{align*}
  \end{minipage}\hspace{0.2in}
  \begin{minipage}{0.5\textwidth}
    \begin{align*}
      \llbracket P(\many{t}) \rrbracket_{A,\gamma,D} =
      \begin{cases}
        1 & \llbracket \many{t} \rrbracket_{A,\gamma,D} \downarrow
        \text{ and } \llbracket \many{t} \rrbracket_{A,\gamma,D}\in \lfp(D(P)) \\
        0 & \llbracket \many{t} \rrbracket_{A,\gamma,D} \downarrow
        \text{ and } \llbracket \many{t} \rrbracket_{A,\gamma,D}\notin \lfp(D(P)) \\
        \bot & \text{ otherwise }
      \end{cases}
    \end{align*}
  \end{minipage}}

\vspace{0.2in}
\textsf{Formulas}\hspace*{0.85\textwidth}
\vspace{-0.1in}

\rule{0.95\textwidth}{0.4pt}

\textsf{Terms}\hspace*{0.885\textwidth}

\scalebox{0.85}{
  \begin{minipage}{0.2\textwidth}
    \begin{align*}
      \llbracket x \rrbracket_{A,\gamma,D} = \gamma(x)
    \end{align*}
  \end{minipage}
    \begin{minipage}{0.2\textwidth}
    \begin{align*}
      \llbracket c \rrbracket_{A,\gamma,D} = c^A
    \end{align*}
  \end{minipage}
    \begin{minipage}{0.4\textwidth}
    \begin{align*}
      \llbracket f(\many{t}) \rrbracket_{A,\gamma,D} =
      \begin{cases}
        f^A(\llbracket \many{t} \rrbracket_{A,\gamma,D}) & \llbracket
        \many{t} \rrbracket_{A,\gamma,D}\downarrow \\
        \bot & \text{ otherwise }
      \end{cases}
    \end{align*}
  \end{minipage}}
\scalebox{0.85}{
  \begin{minipage}{0.9\textwidth}
    \begin{align*}
      \llbracket \IfThenElse{\varphi}{t_1}{t_2} \rrbracket_{A,\gamma,D} =
      \begin{cases}
        \llbracket t_1 \rrbracket_{A,\gamma,D} & \llbracket
        \varphi \rrbracket_{A,\gamma,D} = 1 \\
        \llbracket t_2 \rrbracket_{A,\gamma,D} & \llbracket
        \varphi \rrbracket_{A,\gamma,D} = 0 \\
        \bot & \text{ otherwise }
      \end{cases}
    \end{align*}
  \end{minipage}}
\scalebox{0.85}{
    \begin{minipage}{0.5\textwidth}
    \begin{align*}
      \llbracket \Let\,\,g(\many{x})=_{\lfp} t\,\, \In\,\, t' \rrbracket_{A,\gamma,D} &=
        \llbracket t' \rrbracket_{A,\gamma,\update{D}{g\,}{\la t,D\ra}}
    \end{align*}
  \end{minipage}\hspace{0.1in}
  \begin{minipage}{0.5\textwidth}
    \begin{align*}
      \llbracket g(\many{t}) \rrbracket_{A,\gamma,D} =
      \begin{cases}
        \lfp(D(g))(\llbracket \many{t} \rrbracket_{A,\gamma,D}) &
        \llbracket \many{t} \rrbracket_{A,\gamma,D} \downarrow \\
        \bot & \text{ otherwise }
      \end{cases}
    \end{align*}
  \end{minipage}
}
\caption{Semantics for $\FOTERM$. $A$ is a finite structure, $\gamma$
  is a variable assignment, $D$ is a map from definable symbols to
  their defining terms or formulas, as well as the environment of
  definitions at the point where they are defined (similar to a
  closure). We write $\llbracket \many{t} \rrbracket$ to denote the
  tuple of interpretations of the terms in $\many{t}$, and
  $\llbracket t \rrbracket\downarrow$ to denote that an interpretation
  of a term is defined. }
\figlabel{forec-semantics}
\end{figure}

\subsection{Automaton for Evaluating Terms with Recursive Functions}
\applabel{automata-for-recursive-terms}

Fix a structure $A$ and a domain element $a\in\dom(A)$. We want to
define a two-way automaton
$\aut_A = \la Q, \alphabet_{\FOTERMk},I,\delta,F\ra$, that accepts the
set of terms $t\in\FOTERMk$ such that
$\llbracket t \rrbracket_{A,\emptyset,\emptyset} = a$. For simplicity
we assume $r$-ary functions and relations only. Similar to the
construction for $\FOLFP$. The state space, denoted
$Q_{\mathsf{TERM}}$, includes the states for $\FOLFP$ except now it
also includes states that encode the element which the current term
should evaluate to, as follows:
\begin{align*}
  Q_{\mathsf{TERM}}&\coloneq
                     \mathsf{EvalForm}\cup\mathsf{EvalTerm}\cup\{q_f\} \\
  \mathsf{EvalForm}&\coloneq (\mathit{Dual}(\mathit{Assign})\cup \mathit{Dual}(\mathit{Val}))\times \mathit{Count}\times
                        \mathit{Defn} \\
  \mathsf{EvalTerm}&\coloneq (\mathit{Assign} \cup \mathit{Val})\times\mathit{Count}\times\mathit{Defn}\times\dom(A)
\end{align*}
where $\mathit{Assign} = V\rightharpoonup\dom(A)$ is the set of
partial variable assignments, $\mathit{Val} = \dom(A)^r$ is a set of
values that are passed up to definitions of functions and relations,
$\mathit{Dual}(X) \triangleq \{x,\tilde{x} \,\mid\, x\in X\}$ helps us
define sets of symbols equipped with ``dual'' marked copies of their
members, and $\mathit{Count} = \{0,\ldots,n^r\}$ and
$\mathit{Defn} = \{\bot,P_1,\ldots,g_{k_2}\}$ serve the same purposes
as before. Below, we give the main transitions that allow the
automaton to evaluate terms. We use $\star$ as a wildcard to range
over any element from a set, determined by context. Here we denote
$\{1,\ldots,r\}$ with $[r]$.

\paragraph{\textbf{Reading variables.}} The automaton simply checks
that the variable is mapped to $a$ in the current assignment:
\vspace{-0.2in}
\begin{align*}
  \delta(\la\gamma,\star_1,\star_2,a\ra, x)  \,\,&=\,\,
                                                          \begin{cases}
                                                            \tru &
                                                            \gamma(x)
                                                            = a \\
                                                            \fals &
                                                            \text{
                                                              otherwise
                                                            }
                                                            \end{cases}
\end{align*}

\paragraph{\textbf{Reading \ite~ terms.}} The automaton must either (i) verify
the condition formula in the first child \emph{and} verify that the
term in the ``then'' branch (second child) evaluates to $a$ or (ii)
falsify the condition formula \emph{and} verify that the term in the
``else'' branch (third child) evaluates to $a$:
\begin{align*}
  \delta(\la\gamma,\star_1,\star_2,a\ra, \iteterm)  \,\,&=\,\, (\la\gamma,\star_1,\star_2\ra,1)\wedge
                                          (\la\gamma,\star_1,\star_2,a\ra,2) \vee
                                          (\la\tilde{\gamma},\star_1,\star_2\ra,1)\wedge
                                          (\la\gamma,\star_1,\star_2,a\ra,3)\end{align*}

\paragraph{\textbf{Reading function symbols.}} The automaton must verify that
the $r$ argument terms evaluate to some tuple
$\la a_1,\ldots,a_r\ra \in \dom(A)^r$ such that
$f^A(a_1,\ldots,a_r) = a$:
\begin{align*}
    \delta(\la\gamma,\star_1,\star_2,a\ra, f) \quad &=
    \bigvee_{\substack{\many{a}\,\in\, \dom(A)^r \\
    \text{\emph{s.t.}}\,\,f^A(\many{a}) = a}} \left(\bigwedge_{i\in
    [r]}(\la\gamma,\star_1,\star_2,a_i\ra,i)\right) \\
\end{align*}

\paragraph{\textbf{Reading relation symbols.}} The automaton must verify that the
$r$ argument terms evaluate to some tuple
$\la a_1,\ldots,a_r\ra \in R^A \subseteq \dom(A)^r$:
\begin{align*}
    \delta(\la\gamma,\star_1,\star_2,a\ra, R) \quad &=
    \bigvee_{\substack{\many{a}\,\in\, R^A}} \left(\bigwedge_{i\in
    [r]}(\la\gamma,\star_1,\star_2,a_i\ra,i)\right) \\
\end{align*}

\paragraph{\textbf{Reading a defined function $g$.}} The automaton
``guesses and verifies'' that the argument terms for $g$ evaluate to
$\many{a}$ and ascends to the definition of $g$ to verify that it
evaluates to $a$ when applied to $\many{a}$. Depending on the current
definition it either decrements or resets the counter.
\begin{align*}
  \delta(\la\gamma,j,g',a\ra, g) \quad &=
                                    \bigvee_{\substack{\many{a}\,\in\, \dom(A)^r}}
                                    \left(\bigwedge_{i\in
                                    [r]}(\la\gamma,j,g',a_i\ra,i)\right)
                                    \wedge (\la\many{a},n^r,g,a\ra,-1)
                                    \quad\text{ for }\, g'\neq g \\
  \delta(\la\gamma,j,g,a\ra, g) \quad &=
                                    \bigvee_{\substack{\many{a}\,\in\, \dom(A)^r}}
                                    \left(\bigwedge_{i\in
                                    [r]}(\la\gamma,j,g,a_i\ra,i)\right)
                                        \wedge (\la\many{a},j-1,g,a\ra,-1)
                                        \quad\text{ for }\, j > 0 \\
  \delta(\la\gamma,0,g,a\ra, g) \quad &=\quad \fals
\end{align*}

\paragraph{\textbf{Reading a defined relation $P$.}} The automaton
``guesses and verifies'' that the argument terms for $P$ evaluate to
$\many{a}$ and ascends to the definition of $P$ to verify that
$\many{a}$ is a member of the defined relation. Depending on the
current definition it either decrements or resets the counter.
\begin{align*}
    \delta(\la\gamma,j,P_l\ra, P_i) \quad &=
    \bigvee_{\substack{\many{a}\,\in\, \dom(A)^r}} \left(\bigwedge_{i\in
    [r]}(\la\gamma,j,P_l,a_i\ra,i)\right) \wedge
                                            (\la\many{a},n^r,P_i\ra,-1)
                                            \quad\text{ for }\, P_i\neq P_l \\
    \delta(\la\gamma,j,P_i\ra, P_i) \quad &=
    \bigvee_{\substack{\many{a}\,\in\, \dom(A)^r}} \left(\bigwedge_{i\in
    [r]}(\la\gamma,j,P_i,a_i\ra,i)\right) \wedge
                                            (\la\many{a},j-1,P_i\ra,-1)
                                            \quad\text{ for }\, j > 0
  \\
  \delta(\la\gamma,0,P_i\ra, P_i) \quad &=\quad \fals
\end{align*}

The rest of the transitions are similar in spirit to those for
$\FOLFP$. Initial states: $I=\{\la \emptyset,0,\bot,a\ra\}$. The
acceptance condition is again reachability, with $F=\{q_f\}$, and an
analysis of the automaton size is similar.

\section{Details from lower bounds ($\csecref{lower})$}
\applabel{reduction-details}

We describe the important parts of the grammar $G$ for the proof
of~\thmref{fokhardness}. A given sentence in the grammar can be viewed
as a tree of constraints. Branches of the tree assert various things
in order to reflect computation trees for the machine $M$. We have to
make extensive \emph{reuse} of variables in order to accumulate
polynomially-many constraints using a fixed number of variables. One
can picture a lopsided ``tree of constraints'' growing to one side
with nested conjunctions that requantify old variables as needed in
order to assert new constraints.

We use upper case words as names of nonterminals, sometimes abusing
notation by indicating the free variables common to any formula in the
language of the nonterminal, e.g., for a nonterminal $X$ the
expression $X(y)$ simply indicates that nonterminal $X$ is being
referred to, and any formula it generates has free variable $y$.

The grammar begins with start symbol $S$, which prepares some
variables needed later:
\begin{align*}
  \production{S} \produces \exists\,c\,c'\,\mathit{turn}\,.\,c=c'\wedge
  (\exists\, y. \mathit{Start}(y)\wedge \conj{\mathit{Init_0}(y,c,c',\mathit{turn})}{\mathit{Game}(y,c,c',\mathit{turn})})
\end{align*}

The variables $c,c'$ actually stand for triples of variables
$c_1,c_2,c_3$ and $c'_1,c'_2,c'_3$. These will be used to store
three-cell ``windows'' of the machine's tape. $\mathit{Start}(y)$
holds only for the unique special element, denoted $\star$ in
$\csecref{lower}$, which represents the beginning of the cycle in each
structure. Thus this causes $y$ to ``point'' to the beginning of the
cycle in each structure.

$\mathit{Init_0}$ is a nonterminal that effects the initialization of
the starting configuration by iteratively accumulating equalities
between variables and tape symbols corresponding to the starting
configuration. We skip this, since it is similar to later nonterminals
for generating configurations.

$\mathit{Game}(y,c,c',\mathit{turn})$ is the starting point for the game
semantics of alternating Turing machines. Two players, Adam and Eve,
take turns picking from one of two possible machine transitions. Eve
must be able to pick transitions in such a way that for any strategy
of Adam, the machine eventually halts in an accepting state. The
$\mathit{Game}$ nonterminal is as follows:
\begin{align*}
  \production{\mathit{Game}} \produces \mathit{Accepting}(c') \,\mid\,\mathit{MakeMove}(y,c,c',\mathit{turn})
\end{align*}
This is the first choice available to a synthesizer. It either picks a
formula from $\mathit{Accepting}$, in which case the machine should be
in an accepting state, or it picks a formula from $\mathit{MakeMove}$,
which accumulates constraints that simulate the game semantics.

The $\mathit{Accepting}$ nonterminal is simple:
\begin{align*}
    \production{\mathit{Accepting}} \produces \mathit{EncodesState}(c')\rightarrow \mathit{AcceptState}(c')
\end{align*}
It asserts that if the window of tape symbols represented in $c'$
contains a symbol that represents a machine state, then that state is
an accepting state (We don't spell this out, but it is simple to do
so.)

The other option is $\mathit{MakeMove}$:
\begin{align*}
    \production{\mathit{MakeMove}} \produces
  &\quad\iif{\mathit{turn}=\mathit{eve}} \\
  &\quad\then{\mathit{EveMove}(y,c,c')} \\
  &\quad\eelse{\mathit{AdamMove}(y,c,c')}
\end{align*}
The $\mathit{MakeMove}$ nonterminal branches on whether the current
move (which is kept in a variable) is for Adam or for Eve and
simulates the appropriate player's move. Note, here we use
$\dblqt{\iif \,\varphi\, \then{\varphi_1}\,\eelse{\varphi_2}}$ as a
shorthand for
$\varphi\rightarrow
\varphi_1\,\wedge\,\neg\varphi\rightarrow\varphi_2$ (and similarly
variants like
$\dblqt{\ifthen {\varphi}{\varphi'}\elifthen{\varphi_1}{\varphi_2}}$).

For an $\mathit{EveMove}$:
\begin{align*}
  \production{\mathit{EveMove}} \produces &\quad
                                         \,\exists\,\mathit{move}. \mathit{Move}(\mathit{move}) \\
  &\quad\wedge \exists\,\mathit{turn}. \mathit{turn}=\mathit{adam} \\
    &\quad\wedge \mathit{GenerateConfig}(y,c,c',\mathit{move},\mathit{turn})
\end{align*}
where $\mathit{Move}$ is another chance for the synthesizer to make a
choice, namely, which of two transitions $0$ or $1$ Eve should take:
\begin{align*}
    \production{\mathit{Move}} \produces \mathit{move} = 0\,\mid\,\mathit{move} = 1
\end{align*}
The next part of $\mathit{EveMove}$ requantifies $\mathit{turn}$ to
give Adam the next turn and then enters a polynomial-size gadget
called $\mathit{GenerateConfig}$ whose purpose is to let the
synthesizer choose tape symbols for the next machine configuration and
to accumulate constraints that force each structure to equate their
``window variables'' with the symbols for their window.
\begin{align*}
    \production{\mathit{GenerateConfig}} \produces \exists\,c'. \mathit{Cell_0}
\end{align*}

We omitted $\mathit{Init_0}$ because it is similar to
$\mathit{Cell_0}$, which begins the process of producing the tape
symbols for the next machine configuration. Rather than bogging down
in corner cases, we instead give the nonterminal for
$\mathit{Cell_i}$, with $0 < i < s$:
\begin{align*}
    \production{\mathit{Cell_{i}}} \produces &\quad \exists \,x.\, \mathit{Choose}(x)\,
           \wedge\\ &\quad\ifthen{\mathit{pred}(\star,y)}{c'_3=x}\\
            &\quad\elifthen{y=\star}{c'_2=x}\\
            &\quad\elifthen{\mathit{pred}(y,\star)}{c'_1=x}\\
            &\quad\wedge\exists\,y'.\mathit{next}(y,y')\wedge\,\mathit{Cell_{i+1}}(y',c,c',\mathit{move},\mathit{turn})
\end{align*}
where $\mathit{pred}$ is an immediate predecessor relation on the
cycle of a structure and $\mathit{next}$ is the immediate successor
relation. We use $\star$ here as a constant for the special element on
the cycle (dark element in ~\figref{reduction-figure}) that marks the
center of the window that a structure should track. The nonterminal
$\mathit{Choose}$ is another choice for the synthesizer:
\begin{align*}
  \production{\mathit{Choose}} \produces (x=a_1)\,\mid\,\ldots\,\mid\,(x=a_t)
\end{align*}
wherein it picks which of the $a_1,\ldots,a_t$ tape symbols comes next
at position $i$ of the next configuration. This gadget is replicated
again for the next cell, i.e. $\mathit{Cell_{i+1}}$, and so on. The
nonterminal for the final cell, i.e., $\mathit{Cell_{s}}$, finishes
with a conjunct for a formula from a nonterminal
$\mathit{VerifyWindows}$:
\begin{align*}
    \production{\mathit{VerifyWindow}} \produces &\quad
                                                   \ifthen{\mathit{move}=0}{\delta(0,c_1,c_2,c_3,c'_2)} \\
                   &\quad\elifthen{\mathit{move}=1}{\delta(1,c_1,c_2,c_3,c'_2)} \\
                   &\quad\wedge\,\exists\,c. c=c'\,\wedge\,\mathit{Game}(y,c,c',\mathit{turn})
\end{align*}
which asserts that the (move dependent) transition relation holds on
the variables which hold the tape symbols for a given window. Finally,
the previous configuration window variables are requantified and made
equal to the current configuration window variables, and the game
continues.

Finally, we return to the nonterminal for $\mathit{AdamMove}$:
\begin{align*}
  \production{\mathit{AdamMove}} \produces
  \,\forall\,&\mathit{move}. (\mathit{move} = 0 \vee \mathit{move} =
               1) \\
  \rightarrow & (\exists\,\mathit{turn}. \mathit{turn}=\mathit{eve}\,
                \wedge \\
             &\iif{\mathit{move}=0} \\
             &\then
               \mathit{GenerateConfig}(y,c,c',\mathit{move},\mathit{turn}) \\
             &\eelse \mathit{GenerateConfig}(y,c,c',\mathit{move},\mathit{turn}))
\end{align*}
The fact that the grammar contains two synthesis obligations, one for
when Adam picks transition $0$ and another for transition $1$, is
crucial. It allows the synthesizer to build a sentence which witnesses
a computation tree that depends on the history of Adam transitions.

Note that, for clarity, the description of the grammar above used more
variables than strictly necessary. For instance, by encoding triples
of tape symbols as single domain elements, and by tracking the current
turn in the grammar rather than a variable, we can reduce the number
of variables needed to $5$.

\emph{\bf \emph{Correctness.}}  If $M$ has an accepting computation
tree on $w$ the synthesizer can follow the strategy described by the
computation tree to produce a sentence that is true in every
structure. This involves synthesizing each tape symbol of each
successive $M$ configuration along each branch of the computation tree
and picking the correct Eve transition for each Adam transition. If
there is a sentence $\varphi\in L(G)$ that is true in each structure,
then an accepting computation tree can be reconstructed by traversing
the sentence. Each window centered at cell $i$ of each configuration
constructed in this way must evolve in accordance with the transition
relation since the sentence is true structure $A_i$.

\section{Mutual Recursion in $\FOLFP$}
\applabel{folfpm}

In~\figref{folfpm} we give syntax for an extension to $\FOLFP$ that
allows for blocks of mutually recursive relation definitions. As
discussed in~\csecref{datalog}, an automaton for evaluating such
blocks of definitions needs state that increases with the number of
distinct relations in a block. This number is bounded by the number of
available definable symbols, given that reusing the same symbol for
two distinct definitions in a single block is ambiguous (and thus we
rule it out). The total amount of state an automaton needs to evaluate
blocks of definitions thereby becomes independent of the size of
formulas.

\begin{figure}[H]
  \centering
  \begin{align*}
    \Def &\Coloneqq \Let\,\, \Block \,\,\In\,\,
           \Def \,\,\mid\,\, \varphi \\
    \Block &\Coloneqq  P(\many{x}) =_{\lfp} \varphi \,\,\mid\,\,  P(\many{x})
             =_{\lfp} \varphi \,\,\, \MyAnd \,\,\, \Block \\
    \varphi &\Coloneqq R(\many{t}) \mid \varphi \vee \varphi \mid
              \varphi \wedge \varphi \mid \neg \varphi \mid \exists
              x.\varphi \mid \forall x.\varphi \mid P(\many{t}) \\
    t &\Coloneqq x \mid c \mid f(\many{t}) \mid
        \IfThenElse{\varphi}{t}{t'}
  \end{align*}
  \caption{Grammar for $\FOLFP$ extended with blocks of (mutually)
    recursive definitions, where $\Def$ is the starting nonterminal,
    $P$ ranges over a set of definable relation symbols, and $R$ and
    $f$ range over relations and functions from a signature $\tau$. }
  \figlabel{folfpm}
\end{figure}

\section{$\Datalog$ Synthesis}
\applabel{datalog-synthesis}

We consider a $\Datalog$ synthesis problem as follows. Fix a set of
variables $V=\{v_1,\ldots, v_{k}\}$, and let $\many{x},\many{y}$ range
over $V$. Let $\tau = \la R_1,\ldots,R_s \ra$ be a relational
signature. Each $R_i$ will be an extensional predicate, in $\Datalog$
parlance (encoding facts). Fix a set of \emph{intensional} predicate
symbols $P= \{P_1,\ldots,P_{k'}\}$ which the $\Datalog$ program should
define. The $\Datalog$ programs we consider have the form $(\Pi,Q)$,
where $\Pi$ is a set of rules and $Q\in P$ is a distinguished
intensional predicate. Logically, a \emph{rule} is a formula of the
form
$\rho \coloneq \exists
\many{x}. \bigwedge_i^l\psi_i(\many{y},\many{x})$, where each $\psi_i$
is either of the form $R_i(\cdot)$, $\neg R_i(\cdot)$, or
$P_j(\cdot)$. Rules $\rho$ of this kind are used to define the
intensional predicates $P_1,\ldots, P_{k'}$. The reader may recognize
a different syntax that makes this clear:
$P_j(\many{y}) \leftarrow \psi_1(\many{y},\many{x}),\ldots,
\psi_l(\many{y},\many{x})$. Note that rules may be recursive, that is,
$P_j$ may appear as one of the $\psi_i$'s.

The semantics of a $\Datalog$ program $(\Pi,Q)$ over a
$\tau$-structure $A$ is defined in terms of a \emph{simultaneous fixed
  point}. Suppose an intensional predicate $P_i$ has arity $m_i$ and
is defined by $t$ clauses
$P_i(\many{y})\leftarrow
\psi_1^j(\many{y},\many{x}),\ldots,\psi_{c_j}^j(\many{y},\many{x})$,
for $1\le j\le t$. Let
$X = 2^{A^{m_1}}\times \cdots\times 2^{A^{m_{k'}}}$. Then for a
particular interpretation of the intensional predicates
$U=(U_1,\ldots,U_{k'})\in X$ we can define an immediate consequence
operator $F_{P_i} : X\rightarrow 2^{A^{m_i}}$ for the predicate $P_i$:
\begin{align*}
  F_{P_i}(U) \,\,=\,\, \left\{\, \many{a}\in A^{m_i} \,\Bigm\vert\,
  (A,U) \models \bigvee_{j=1}^t\exists \many{x}_j. \psi_1^j(\many{a},\many{x}_j)\wedge\cdots\wedge\psi_{c_j}^j(\many{a},\many{x}_j) \,\right\}
\end{align*}
where $(A,U)\models\varphi$ denotes entailment when the intensional
predicates are interpreted as in $U$. Each $F_{P_i}$ is clearly
monotonic since no $P_i$ appears negatively in the rules of
$\Pi$. Thus we have a system of monotonic functions $\{F_{P_i}\}_i$
and the intensional predicates are interpreted according to the
simultaneous least fixed point. For more on simultaneous fixed points we
refer the reader to~\cite{Fritz2002}.

Similar to our handling of $\FOLFPk$, we consider a variant of
$\Datalog$ with only $k$ first-order variables and $k'$ intensional
predicate symbols, denoted $\Datalogk$. It is easy to define a
suitable ranked alphabet $\alphabet_{\Datalogk}$ (which is similar to
earlier alphabets). Consider the problem in~\probref{datalog}:
\begin{algorithm}
  \BlankLine
  \KwIn{$\left\la (A, \Pos, \Neg), P_i, G \right\ra$ where}

  \myinput{$A$ is a $\tau$-structure}

  \myinput{$P_i$ is a \emph{target} intensional predicate symbol}
  \myinput{$\Pos,\Neg$ are sets of tuples over
    $\dom(A)^{\arity{P_i}}$}

  \myinput{$G$ is an RTG over $\alphabet_{\Datalogk}$}

  \KwOut{$\Pi\in L(G) \,\, s.t. \,\, \Pos\subseteq \lfp(F_{P_i}) \text{
      and } \Neg\cap \lfp(F_{P_i}) = \emptyset$}

  \myoutput{Or ``No'' if no such program exists}
  \NoCaptionOfAlgo
  \caption{\textbf{Problem 4:} \sepsynth{$\Datalogk$}}
  \problabel{datalog}
\end{algorithm}

\begin{theorem}
  \sepsynth{$\Datalogk$}~is decidable in $\EXP$ for a fixed signature
  and fixed $k,k'\in\Nat$.
\end{theorem}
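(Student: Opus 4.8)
The plan is to instantiate the two-way tree automaton technique of \csecref{recursion} for the Datalog setting, reusing three ingredients already developed: the $\FOLFP$-evaluation automaton of \csecref{tree-automata-foklfp}, the product-of-counters device for mutual recursion sketched in \csecref{datalog}, and the membership/non-membership initialization from the query construction of \csecref{fokquerysepsynth}. Since $\Datalog$ is the existential fragment $\exists\LFP$ (negation only on extensional symbols, no universal quantification), monotonicity of each immediate-consequence operator $F_{P_i}$ from \cappref{datalog-synthesis} is automatic, so least fixed points always exist; the only genuinely new feature relative to $\FOLFPk$ is the \emph{simultaneous} recursion within a block of mutually-defined predicates.

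First I would fix the structure $A$ with $|\dom(A)|=n$ and build a two-way tree automaton $\aut_A$ over $\alphabet_{\Datalogk}$ that reads the parse tree of a program and evaluates the simultaneous-fixed-point semantics on $A$. The states extend those of \csecref{tree-automata-foklfp}: a (dualized) assignment or value, the definition currently being unfolded, and---crucially---a \emph{product} of counters $\mathit{Count}^{k'}$, one per definable symbol, since with nested blocks several fixed-point computations may be active at once, each at its own stage. The transitions for the first-order connectives and for locating and reading a definition are exactly as in \cappref{lfp-transitions}. The one change is for reading a use of a symbol $P_l$ that is mutually recursive with the definition currently being unfolded: rather than resetting its counter to $n^r$ (case (4) of \csecref{tree-automata-foklfp}, appropriate only for a \emph{fresh} definition), the automaton decrements the counter shared by the whole block, so that all predicates of a block advance through their stages together, faithfully reproducing the simultaneous fixed point. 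Each coordinate lattice has height $n^r$, so every counter is bounded by $n^r$ and $|Q| = 2^{\mathit{poly}(n^r k')}$.

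Second, to encode the separability condition I would adapt the initialization of \csecref{fokquerysepsynth}. Here there is a single structure, a target predicate $P_i$, and sets $\Pos,\Neg$ of tuples, so the initial transition conjoins one obligation per tuple: for each $\many{a}\in\Pos$ the automaton launches a \emph{verifying} check that $\many{a}\in\lfp(F_{P_i})$ (entering the \emph{up} state $\la\many{a},n^r,P_i\ra$, which ascends to the definition of $P_i$), and for each $\many{a}\in\Neg$ it launches a \emph{falsifying} check (the marked state $\la\tilde{\many{a}},n^r,P_i\ra$) that $\many{a}\notin\lfp(F_{P_i})$. Correctness then follows the argument of \cappref{lfp-transitions}: verifying membership admits a finite run reaching $q_f$ along every branch, whereas non-membership is witnessed by the \emph{absence} of any run that reaches $q_f$ within $n^r$ stages, which is sound precisely because each coordinate of the product lattice has finite height $n^r$.

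Finally I would take the product of $\aut_A$ with the grammar automaton $\aut_\grammar$ of \csecref{grammar-automaton}, convert the two-way automaton to a one-way nondeterministic automaton with an exponential increase in states~\cite{two-way-vardi}, and check emptiness in time linear in its size; a witness tree, when one exists, yields the desired program $\Pi\in L(G)$. For fixed signature and fixed $k,k'$ the final automaton has size $\Oo(2^{\mathit{poly}(n^{k}k')}\,|\grammar|)$ (the general bound with $m=1$), giving the $\EXP$ upper bound. I expect the main obstacle to be the correctness of the product-of-counters simulation under nested and mutually recursive blocks: one must verify that decrementing a shared block counter exactly reproduces the stages of the simultaneous fixed point, that shadowed and reused definable symbols are disambiguated by the ``ascend to the nearest enclosing definition'' navigation, and---most delicately---that the falsifying branches genuinely \emph{certify} non-membership rather than merely failing to certify membership.
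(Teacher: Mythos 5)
Your proposal follows essentially the same route as the paper: the paper's own argument is a short sketch that says to reuse the $\FOLFPk$ two-way-automaton construction of \csecref{recursion}, with the single novelty of handling mutual recursion via a product of counters, one per intensional predicate; your addition of the query-style initialization (one verifying obligation per tuple in $\Pos$, one falsifying obligation per tuple in $\Neg$) is exactly the adaptation the paper leaves implicit for Problem~4.

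One point in your write-up needs care, and it is precisely the point the paper's sketch is at pains to make. You describe the block-recursive transition as decrementing ``the counter shared by the whole block'' and justify soundness of the falsifying branches by saying ``each coordinate of the product lattice has finite height $n^r$.'' The relevant bound for the simultaneous fixed point is the height of the \emph{product} lattice, i.e.\ $\sum_i n^{m_i} \le k'\,n^r$, not $n^r$: a tuple can enter the simultaneous iteration at a stage strictly between $n^r$ and $k' n^r$, so a single shared counter initialized to $n^r$ and decremented on every unfolding of any block predicate would hit zero too early and wrongly certify non-membership. This is exactly why the paper insists on keeping a product of counters $\mathit{Count}^{k'}$ (equivalently, a single counter with range $\{0,\ldots,k' n^r\}$): the total decrement budget must cover the convergence stage of the simultaneous iteration. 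Since you do declare the state space to be $\mathit{Count}^{k'}$, the construction you set up is the right one; only the informal ``shared counter bounded by $n^r$'' gloss, and the soundness sentence that leans on it, should be corrected to refer to the height of the product lattice.
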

\begin{proof}[Sketch.]
  Follows very closely the construction from \secref{recursion}. The
  main novelty is to evaluate mutually recursive rules with a two-way
  automaton. Similar to the case of $\FOLFP$ we can use a counter in
  the automaton state to ensure that we correctly check non-membership
  in least fixed point relations. The difference is that we need to
  keep multiple counters at once (one per intensional predicate),
  because evaluating any single definition may involve evaluating
  other definitions that can refer back to the original. Note that our
  solution for $\FOLFP$ could get away with a single counter because
  each successive definition could only refer to previously defined
  relations. In contrast, there is no notion of scope within a single
  block, and all recursive definitions in the same block see each
  other.
\end{proof}

\else
\fi

\end{document}